\documentclass[11pt, letterpaper]{article}
\usepackage{graphicx} % Required for inserting images
\usepackage[
letterpaper,
top=1in,
bottom=1in,
left=1in,
right=1in]{geometry}

\usepackage[utf8]{inputenc}

\usepackage{amsmath, amsthm, amssymb}
\usepackage{mathtools}

\usepackage[dvipsnames]{xcolor}
\usepackage[T1]{fontenc}
\usepackage[full]{textcomp}
\usepackage[american]{babel}

\usepackage{caption}
\captionsetup{justification=justified}

\usepackage{booktabs} % for professional looking tables
\usepackage{siunitx}  % for alignment of numbers (optional)

\usepackage{newpxtext} % T1, lining figures in math, osf in text
\usepackage{textcomp} % required for special glyphs
\usepackage[varg,bigdelims]{newpxmath}
\usepackage[scr=rsfso]{mathalfa}% \mathscr is fancier than \mathcal
\usepackage{bm}

\usepackage[pagebackref,colorlinks=true,urlcolor=blue,linkcolor=blue,citecolor=OliveGreen]{hyperref}
\usepackage[nameinlink]{cleveref}
\crefname{equation}{}{}

\usepackage{dsfont}

\usepackage{enumitem}

\usepackage{multirow}

\usepackage{algorithm}
\usepackage{algpseudocode}
\algnewcommand{\Input}[1]{\State \textbf{Input:} #1}
\algnewcommand{\Output}[1]{\State \textbf{Output:} #1}

\usepackage{graphicx}

\usepackage{thmtools}
\usepackage{thm-restate}

\newtheorem{theorem}{Theorem}[section]
\newtheorem{corollary}[theorem]{Corollary}
\newtheorem{lemma}[theorem]{Lemma}
\newtheorem{claim}[theorem]{Claim}

\newtheorem{proposition}[theorem]{Proposition}

\newtheorem{fact}[theorem]{Fact}

\theoremstyle{definition}
\newtheorem{definition}[theorem]{Definition}

\newtheorem{example}[theorem]{Example}

\newtheorem{remark}[theorem]{Remark}

\newcommand{\poly}{\mathrm{poly}}

\renewcommand{\epsilon}{\varepsilon}

\newcommand{\R}{\mathbb{R}}
\newcommand{\N}{\mathbb{N}}
\newcommand{\Q}{\mathbb{Q}}
\newcommand{\Z}{\mathbb{Z}}

\newcommand{\mU}{\mathcal{U}}
\newcommand{\mV}{\mathcal{V}}
\newcommand{\mW}{\mathcal{W}}
\newcommand{\mL}{\mathcal{L}}
\newcommand{\mZ}{\mathcal{Z}}

\newcommand{\x}{x}
\newcommand{\y}{y}

\newcommand{\bc}{\mathrm{bl}}
\newcommand{\rank}{\mathrm{rank}}
\newcommand{\enc}[1]{\mathbf{enc}({#1})}

\newcommand{\mat}[1]{[#1]}

\newcommand{\mynabla}[1]{\nabla_{\! {#1}}\,}
\newcommand{\spann}[1]{\mathrm{span}\{#1\}}
\newcommand{\supp}[1]{\mathrm{supp}(#1)}

\newcommand{\recc}{\mathrm{recc}}
\newcommand{\vol}{\mathrm{vol}}
\newcommand{\aff}{\mathrm{aff}}
\newcommand{\conv}{\mathrm{conv}}

\hypersetup{
    pdfauthor = {Lucas Slot, David Steurer, Manuel Wiedmer},
}
\title{Hesse's Redemption: \\ Efficient Convex Polynomial Programming}

\author{
  \begin{minipage}[t]{0.3\textwidth}
    \centering
    Lucas Slot \\ \footnotesize \href{mailto:lucas.slot@inf.ethz.ch}{lucas.slot@inf.ethz.ch} \\ \small ETH Zurich
  \end{minipage}%
  \hfill
  \begin{minipage}[t]{0.3\textwidth}
    \centering
    David Steurer \\ \footnotesize \href{mailto:david.steurer@inf.ethz.ch}{david.steurer@inf.ethz.ch} \\ \small ETH Zurich
  \end{minipage}%
  \hfill
  \begin{minipage}[t]{0.3\textwidth}
    \centering
    Manuel Wiedmer \\ \footnotesize \href{mailto:manuel.wiedmer@inf.ethz.ch}{manuel.wiedmer@inf.ethz.ch} \\ \small ETH Zurich
  \end{minipage}
}

\date{November 5, 2025}

\begin{document}

\maketitle
\thispagestyle{empty}

\begin{abstract}
Efficient algorithms for convex optimization, such as the ellipsoid method, require an \emph{a priori} bound on the radius of a ball around the origin guaranteed to contain an optimal solution if one exists.
For linear and convex quadratic programming, such solution bounds follow from classical characterizations of optimal solutions by systems of \emph{linear} equations.
For other programs, e.g., semidefinite ones, examples due to Khachiyan show that optimal solutions may require huge coefficients with an exponential number of bits, even if we allow approximations. Correspondingly, semidefinite programming is not even known to be in \textbf{NP}.

The unconstrained minimization of convex polynomials of degree four and higher has remained a fundamental open problem between these two extremes: its optimal solutions do not admit a linear characterization and, at the same time, Khachiyan-type examples do not apply.
We resolve this problem by developing new techniques to prove solution bounds when no linear characterizations are available.
Even for programs minimizing a convex polynomial (of arbitrary degree) over a polyhedron, we prove that the existence of an optimal solution implies that an approximately optimal one with polynomial bit length also exists.
These solution bounds, combined with the ellipsoid method, yield the first polynomial-time algorithm for convex polynomial programming, settling a question posed by Nesterov \mbox{(Math. Program., 2019)}. Before, no polynomial-time algorithm was known even for unconstrained minimization of a convex polynomial of degree four.

Our results rely on a structural decomposition of any convex polynomial into a sum of a linear function and a polynomial on a linear subspace that admits a strongly convex lower bound,
where the logarithm of the strong convexity parameter is polynomially bounded in the input size.
A key component of our proof is a strong local-to-global property for convex polynomials:
if at every point \emph{some} directional second derivative vanishes, then a \emph{single} directional second derivative must vanish everywhere.
While Hesse erroneously claimed that this property holds for general polynomials (J. Reine Angew. Math., 1851), we show that it holds for convex ones.

\end{abstract}

\setcounter{tocdepth}{2}
\newpage
\thispagestyle{empty}
\tableofcontents
\newpage

\setcounter{page}{1}

\section{Introduction}
A central result in convex programming is the algorithmic equivalence of optimization and separation,
emerging from a body of work around the ellipsoid method~\cite{YudinNemirovski1976, Khachiyan:LP, Schrijver:Ellipsoid, Karp_1980, padberg:inria-00076483, GrötschelLovaszSchrijver:ellipsoid}.
However, this equivalence crucially requires \emph{effective solution bounds}, even when we allow approximations:
if optimal solutions exist, then (approximately) optimal solutions must exist such that the logarithm of their norm is polynomially bounded in the input size.
The ellipsoid method requires such solution bounds when determining the radius of its initial ellipsoid and the running time depends polynomially on the logarithm of this radius.\footnote{This radius is sometimes called ``big $R$''. 
  The ellipsoid method has another parameter, often referred to as ``little~$r$'', relating to the existence of a small ball inside the feasible region.
  However, the need for ``little $r$'' can usually be avoided by accepting approximate solutions that may violate optimality and the feasibility constraints by a tiny amount.
}
Effective solution bounds are necessary in order to be able to output (approximately) optimal solutions in (worst-case) polynomial time.
Without such bounds, the corresponding decision problem  is typically not known to be in \textbf{NP}, as its natural witnesses may have size superpolynomial in the input size.

Linear programs are prominent examples of optimization problems that enjoy effective solution bounds.
Their optimal solutions arise as solutions to systems of (consistent and independent) linear equations~\cite{Dantzig1963}.
By Cramer's rule, such solutions have polynomial bit length~\cite{Edmonds1967}, and thus the logarithm of their norms is polynomially bounded.
This effective solution bound plays a crucial role for the seminal result that linear programming has polynomial time algorithms \cite{Khachiyan:LP}.
For quadratic objective functions, optimal solutions continue to have a linear characterization (because the gradient of a quadratic is an affine linear map).
Correspondingly, convex quadratic programs, where one minimizes a convex polynomial of degree two over a polyhedron, have effective solution bounds and polynomial-time algorithms like linear programs~\cite{Kozlov1980}.
On the other hand, for other natural generalizations of linear programs, like semidefinite programs, the situation is drastically different and no effective solution bound is possible:
There are feasible semidefinite programs such that all feasible and even approximately feasible solutions have huge coefficients and the logarithm of their norms is exponential in the input size.\footnote{
  These examples are commonly attributed to Khachiyan;
  see~\cite{Ramana1997, ODonnell2017, Pataki2024}.
}
Therefore, the ellipsoid method does not yield a polynomial-time algorithm for semidefinite programming (even approximately), even though the corresponding separation problem has polynomial-time algorithms.
The lack of effective solution bounds is also the reason why semidefinite programming is not known to be in~\textbf{NP}~\cite{Porkolab1997, Ramana1997}.

The unconstrained minimization of convex polynomials of degree four and higher is a fundamental optimization problem whose optimal solutions have no linear characterization
(because critical points are defined by equations of degree three and higher).
At the same time, previous ``bad'' examples (like for semidefinite programs) do not apply.
In this paper, we develop new techniques to show effective solution bounds for this problem.
These techniques extend to \emph{convex polynomial programming},
where one minimizes a convex polynomial (of arbitrary degree) over a polyhedron. 
Our solution bounds, combined with the ellipsoid method,
yield the first efficient algorithm for solving convex polynomial programs at degree four and above (to an arbitrary degree of accuracy).
Previously, no efficient algorithm for computing the approximate minimum of a convex polynomial of degree four was known (even without constraints).

\subsection{Minimizers of (convex) polynomials} 

If $p$ is a polynomial of degree $2$, its critical points are solutions to the \emph{linear} system $\nabla p (x) = 0$. 
This permits us to control the norm of its minimizers, similarly to the case of linear programs. For polynomials of larger degree, this approach does not work. 
In fact, already for degree $4$, \emph{nonconvex}~polynomials may not admit an effectively bounded minimizer at all. 
Consider, e.g., 
\begin{equation} \label{EQ:exmp_doublyexp}
    p(\x) \coloneqq (x_{1}^2 - x_{2})^2 +  \ldots + (x_{n-1}^2 - x_n)^2 + (x_n - 2)^2.
\end{equation}
The minimum of $p$ over $\R^n$ is $0$, and it is attained if and only if $x_n = 2$, and $x_{i-1} = x_{i}^2$ for all $1 \leq i < n$. Thus, the unique minimizer $\x^*$ of $p$ satisfies $x^*_1 = 2^{2^{n-1}}$, meaning its norm is \emph{doubly}-exponential in~$n$.

The upshot is that, in order to establish effective solution bounds for \emph{convex} polynomials, it is imperative to exploit their convexity. 
On the other hand, it is \textbf{NP}-hard to decide whether a given polynomial is convex~\cite{Ahmadi2011}. This means we cannot rely on an algorithmic characterization of convexity for polynomials.  
Furthermore, convex polynomials do not (obviously) satisfy the type of quantitative properties typically used in the analysis of optimization algorithms. For example, even strictly convex polynomials are not \emph{strongly} convex in general (consider ${f(x) = x^4}$).

\paragraph{The Hessian determinant.}
Instead, we will work with a purely \emph{qualitative} characterization: $f$~is convex if and only if its matrix $\nabla^2 f(x)$ of second derivatives is positive semidefinite for all $x \in \R^n$. 
To translate from this qualitative description to the quantitative bounds we are after, the key is to consider the \emph{Hessian determinant} $h_f(x) \coloneqq \det 
\left(\nabla^2 f(x)\right)$ of $f$, which describes its (local) curvature.
Note that, by convexity, we have $h_f(x) \geq 0$ for all $x \in \R^n$. The Hessian determinant captures the local behavior of $f$ around a point $a \in \R^n$ via the following dichotomy:
\begin{itemize}[noitemsep]
    \item If $h_f(a) > 0$, then $f$ is \emph{locally $\mu$-strongly convex} around $a$, meaning there are $\mu, r > 0$ such that
    \[f(x) \geq f(a) + \langle \nabla f(a), x-a \rangle + \mu \cdot \|x-a\|^2 \quad \forall x \in B(a, r).
    \]
    \item If $h_f(a) = 0$, then $f$ has a \emph{local direction of linearity} at $a$, meaning there is a $v \in \R^n$ such that
    \[
        \frac{\partial^2}{\partial t^2} f(a + tv) = 0. 
    \]
\end{itemize}
\paragraph{Local to global.}
The core of our analysis is to show that both sides of the dichotomy above exhibit \emph{local-to-global} behavior.
First, we show that if the Hessian determinant is positive at even a single point $a \in \R^n$, this immediately implies a \emph{global} $\mu$-strongly convex \emph{lower bound} on~$f$. Moreover, the parameter $\mu$ can be controlled in terms of the binary encoding length of $f$ independently of $a$.
Second, we show that if~$f$ has a \emph{local} direction of linearity \emph{everywhere}, then it has a \emph{global} direction of linearity. That is, if $h_f(x) = 0$ for all $x \in \R^n$, there is $v \in \R^n$ (independent of $x$) such that
\[
    \frac{\partial^2}{\partial t^2} f(x + tv) = 0 \quad (\forall x \in \R^n),
\]
which implies that $f$ is linear along $v$.
Projecting onto the complement of this direction allows us to reduce to a problem in fewer variables. This fact has an interesting connection to work of O. Hesse himself~\cite{OHesse}, who (mistakenly) claimed it holds for general (nonconvex) polynomials.
Together, these observations will allow us to write any convex polynomial as the sum of a linear function and a polynomial (of fewer variables) that admits a strongly convex lower bound~(\Cref{THM:quantstructure}). 

\subsection{Main contributions}
As our main contribution, we show a singly-exponential upper bound on the norm of a global minimizer of a convex polynomial $f$ on a convex polyhedron $P = \{Ax \leq b\}$.
To state our precise results, we must first specify our input model. We write $\bc(q)$ for the bit length of $q \in \Q$. The bit length of a vector or matrix is the sum of the bit lengths of its entries.
We write $\enc{P} \coloneqq \bc{(A)} + \bc{(b)}$ for the (total) encoding length of $P$.
We encode a polynomial $f(x) = \sum_{\alpha} f_\alpha x^\alpha$ using a \emph{binary} representation for its (nonzero) coefficients $f_\alpha \in \Q$, and a \emph{unary} representation for the multi-indices~${\alpha \in \N^n}$. Thus, its encoding length is $\enc{f} = \Theta(n + \mathrm{deg}(f) + \sum_{\alpha: f_\alpha \neq 0} \bc(f_\alpha))$.\footnote{This encoding is more efficient than the ``naive'' encoding of $f$ as a (dense) vector of coefficients in the monomial basis (whose size scales exponentially in $\deg(f)$). Our main results \Cref{THM:mainpolyhedron} and \Cref{COR:mainpolyhedron} are novel even in this dense model; in fact already for \emph{fixed} degree $\deg(f) = 4$.}

\begin{restatable}{theorem}{THMmainpolyhedron}
    \label{THM:mainpolyhedron}
    Let $f \in \Q[x]$ be a convex polynomial.
    Let $P$ be a (nonempty) convex polyhedron.
    Then, either $f$ is unbounded from below on $P$, or it attains its minimum on $P$ at a point $x^*$ of norm 
    \[
    \log \|x^*\| \leq \poly(\enc{f}, \,\enc{P}).
    \]
\end{restatable}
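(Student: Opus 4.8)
The plan is to reduce the general statement to the case of \emph{unconstrained} minimization of a convex polynomial that admits a strongly convex lower bound, where the strong convexity parameter has polynomially bounded bit length, and then exploit strong convexity to confine the minimizer to a small ball. We proceed in the following steps.

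First, I would handle the polyhedral constraints. Given $P = \{Ax \le b\}$, I would argue that one may restrict attention to the affine hull of a minimal face of $P$ containing a minimizer, or alternatively incorporate the active constraints via the Karush--Kuhn--Tucker conditions. Concretely: if $f$ attains its minimum on $P$ at some $x^*$, let $A'x \le b'$ be the subsystem of active constraints at $x^*$ and let $L = \{x : A'x = b'\}$ be the corresponding affine subspace; then $x^*$ is an unconstrained minimizer of $f$ restricted to $L$. Parameterizing $L$ by an affine map $x = x_0 + Bt$ with $x_0, B$ of polynomial bit length (obtained from the linear system $A'x = b'$ via Cramer's rule, as in the linear programming case), the restriction $g(t) := f(x_0 + Bt)$ is again a convex polynomial of encoding length polynomial in $\enc{f}$ and $\enc{P}$, and $g$ is bounded below (since $f$ is bounded below on $P$) and attains its minimum at the preimage $t^*$ of $x^*$. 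It then suffices to bound $\log\|t^*\|$, since $\log\|x^*\| \le \log\|x_0\| + \log\|B\| + \log\|t^*\| + O(1)$. A subtlety here is that there may be many candidate faces; one should fix $x^*$ to be a minimizer lying on a face of smallest dimension, and one must verify that the constructed $g$ genuinely attains its minimum at an interior point of the parameterizing domain (which holds because the remaining constraints are inactive at $x^*$, hence one can drop them and minimize over all of the parameter space without changing the optimum — here using convexity).

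Second, I would apply the structural decomposition promised in the excerpt (\Cref{THM:quantstructure}): any convex polynomial $g$ can be written as $g(t) = \langle c, t\rangle + q(\pi(t))$ where $\pi$ is the orthogonal projection onto a linear subspace $V$, the linear part lives on the complement of $V$, and $q$ admits a $\mu$-strongly convex lower bound on $V$ with $\log(1/\mu) \le \poly(\enc{g})$. Crucially, if $g$ attains its minimum, then the linear part $\langle c, t\rangle$ must vanish on $V^\perp$ restricted to the feasible directions — i.e., $c$ must be orthogonal to $V^\perp$, so in fact $c \in V$ — since otherwise $g$ decreases without bound along the linearity directions. (In the unconstrained reduction above this is automatic; more care is needed if constraints remain, which is why the first step should eliminate them entirely.) Hence $g(t) = \langle c, t\rangle + q(\pi(t))$ with $c \in V$, so $g$ effectively depends only on $\pi(t) \in V$ and is a convex function on $V$ admitting a strongly convex lower bound, plus a linear term on $V$. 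Minimizing, we reduce to: bound the norm of the minimizer of $h(u) := \langle c, u\rangle + q(u)$ over $u \in V$, where $q \ge q(a) + \langle \nabla q(a), u - a\rangle + \mu\|u-a\|^2$ for some reference point $a$ (which we may take of polynomial bit length, e.g. the origin or a point extracted from the structure theorem).

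Third, I would close the argument by a standard strong-convexity ball estimate. Since $h(u) \ge \text{(affine in }u) + \mu\|u - a\|^2$ and $h$ attains its minimum at $u^*$, comparing $h(u^*)$ with $h(a)$ gives $\mu\|u^* - a\|^2 \le h(a) - h(u^*) + |\langle \nabla q(a) + c, u^* - a\rangle| \le (h(a) - h(u^*)) + \tfrac{\mu}{2}\|u^*-a\|^2 + \tfrac{1}{2\mu}\|\nabla q(a)+c\|^2$ by Young's inequality, hence $\|u^* - a\|^2 \le \tfrac{2}{\mu}(h(a) - h(u^*)) + \tfrac{1}{\mu^2}\|\nabla q(a)+c\|^2$. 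Now $h(a) - h(u^*) \le h(a) - (\text{global lower bound})$; one needs a \emph{lower} bound on the minimum value of a convex polynomial with bounded coefficients, but such a bound follows because the minimum value is itself an algebraic number of controlled degree and height (or, more directly, because $h \ge \langle c, u\rangle + q(a) + \langle \nabla q(a), u-a\rangle$, a quadratic whose minimum over $V$ is explicitly computable and has polynomial bit length). Since $\log(1/\mu)$, $\log\|a\|$, $\log\|\nabla q(a)\|$, $\log\|c\|$ and $\log|h(a)|$ are all polynomially bounded in the encoding lengths, we conclude $\log\|u^*\| \le \poly(\enc{f}, \enc{P})$, and unwinding the reductions yields the claimed bound on $\log\|x^*\|$.

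The main obstacle I anticipate is the interface between the polyhedral reduction and the structure theorem: one must ensure that after passing to the active-constraint affine subspace and parameterizing it, the resulting polynomial still \emph{attains} its minimum (so that the structure theorem's linear part is forced to vanish along the linearity subspace), and that all the bit-length bookkeeping — Cramer's rule for the subspace parameterization, the polynomial blowup in applying \Cref{THM:quantstructure}, and the lower bound on the optimal value — composes to a single polynomial in $\enc{f}$ and $\enc{P}$. The genuinely novel content is entirely contained in \Cref{THM:quantstructure} and its underlying local-to-global lemmas; the present theorem should follow from it by the above largely routine (but bit-complexity-sensitive) reductions.
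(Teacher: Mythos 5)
Your reduction to the active-constraint affine subspace is a legitimate alternative to the paper's Farkas-based argument for bounding the projection of a minimizer, but as written the proof has a genuine gap at the very last step: the point you recover at the end need not lie in $P$. Concretely, after restricting to $L=\{A'x=b'\}$ and parameterizing, the structure theorem (with the linear part forced to vanish by boundedness) only controls the component of a minimizer in the subspace on which $g$ admits a strongly convex lower bound; the component in the orthogonal (linearity) directions is completely unconstrained by your estimates. So you can produce a small-norm global minimizer of $f$ over the affine subspace $L$, but $L$ contains points violating the \emph{inactive} constraints, and "dropping" those constraints preserves the optimal value, not feasibility of the new optimizer. A minimal example: $P=\{x\in\R^2: x_2\ge 1\}$ and $f(x)=x_1^2$; at the minimizer $x^*=(0,5)$ no constraint is active, so $L=\R^2$ and your procedure outputs the origin, which is not in $P$. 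For the same reason, your claim that it suffices to bound $\log\|t^*\|$ for the preimage of a minimizer "on a face of smallest dimension" cannot work: within that face there may be only huge-norm minimizers in the linearity directions, and one must \emph{construct} a feasible small-norm one. This is exactly the content of the paper's lifting step (\Cref{LEM:fullproofboundinVdirection}): given that $x^*_{\mU\oplus\spann{w}}$ is small, one must find $x'\in P$ with the same projection and small full norm, which is nontrivial because the prescribed projection is in general irrational, so one cannot just invoke small-bit-length solvability of the linear system; the paper argues via vertices of an augmented polyhedron whose matrix has small bit length while only the norm of the right-hand side is controlled. Your proposal omits this step entirely, and it is where a substantial part of the proof's work lies.

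A secondary issue: the theorem asserts a dichotomy, so when $f$ is bounded from below on $P$ you must also \emph{prove} that the minimum is attained; your argument conditions on attainment from the outset. In the paper this comes for free from the norm bound (one shows $\inf_P f=\inf_{P\cap B_R(0)}f$ for an explicit $R=2^{\poly(\enc{f},\,\enc{P})}$ and invokes compactness), but in your scheme it would have to be supplied separately or obtained by the same kind of argument. Apart from these points, your steps 2--3 (forcing the linear part to vanish, then the strong-convexity ball estimate; note the nonlinear part is $f$ restricted to $\mU$, with $q$ only a \emph{lower bound}, which your Young-inequality computation handles correctly) do recover the analogue of \Cref{LEM:fullproofboundinUandWdirection} by a different route, replacing the paper's use of Farkas' lemma on the system of \Cref{LEM:fullproofconditionforunbounded} with the active-set reduction; that part of the plan is sound modulo the bit-length bookkeeping you flag.
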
 \noindent
As a consequence, we can minimize $f$ over $P$ in polynomial time using the ellipsoid method.
\begin{restatable}{corollary}{CORmainpolyhedron}
    \label{COR:mainpolyhedron}
    Let $f \in \Q[x]$ be a convex polynomial. Let $P$ be a (nonempty) convex polyhedron. Let~${\epsilon >0}$. We can decide in time $\poly(\enc{f},\, \enc{P},\, \log(1/\epsilon))$ whether $f$ is unbounded from below on $P$, and, if not, output a point~$\tilde x \in P$ with $f(\tilde x) \leq \min_{x \in P} f(x) + \epsilon$.
\end{restatable}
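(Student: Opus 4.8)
The plan is to derive the statement from \Cref{THM:mainpolyhedron} in two stages: first decide, in polynomial time, whether $f$ is bounded from below on $P$, via a reduction to linear programming; and second, if it is bounded from below, compute an $\epsilon$-optimal point by running the ellipsoid method inside the ball $B(0,R)$ of radius $R = 2^{\poly(\enc f,\,\enc P)}$ supplied by \Cref{THM:mainpolyhedron}.

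For the second stage: $P \cap B(0,R)$ has an obvious polynomial-time separation oracle (check the rows of $Ax \le b$, then check $\|x\| \le R$); it is nonempty whenever $P$ is (a nonempty polyhedron contains a point of encoding length $\poly(\enc P) \le \log R$); and $f$ is a convex function whose value and gradient are computable in time $\poly(\enc f, \bc(x))$ — here we use the sparse encoding, so that $f$ has only $\poly(\enc f)$ monomials — and which is Lipschitz on $B(0,R)$ with a Lipschitz constant whose logarithm is $\poly(\enc f, \enc P)$. Hence the standard ellipsoid-based algorithm for minimizing a convex function over a convex set that is given by a separation oracle and contained in a ball of radius $2^{\poly}$ outputs, in time $\poly(\enc f, \enc P, \log(1/\epsilon))$, a point $\tilde x \in P \cap B(0,R)$ with $f(\tilde x) \le \min_{x \in P \cap B(0,R)} f(x) + \epsilon$. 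If $f$ is bounded from below on $P$, then by \Cref{THM:mainpolyhedron} its minimum over $P$ is attained at a point of norm at most $R$, so $\min_{x \in P \cap B(0,R)} f(x) = \min_{x \in P} f(x)$ and $\tilde x$ is as required. Some standard care is needed when $P \cap B(0,R)$ is not full-dimensional — extract the implicit equalities of $P$ by linear programming and run the method inside $\aff(P)$ — and to round the (weakly feasible) output back to an exactly feasible point; running the inner minimization to accuracy $\epsilon/2$ renders the resulting exponentially small perturbations harmless. This is the ``little $r$'' issue mentioned in the footnote.

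The first stage is where convexity enters. The directions $d$ along which $f$ is \emph{globally affine} — i.e.\ $t \mapsto f(x + td)$ is affine for every $x$ — form a linear subspace: since $\nabla^2 f(x) \succeq 0$, one has $\langle \nabla^2 f(x) d, d\rangle = 0 \iff \nabla^2 f(x) d = 0$, so this set equals $N \coloneqq \bigcap_{x \in \R^n} \ker \nabla^2 f(x)$. Because $\nabla^2 f$ has only $\poly(\enc f)$ distinct monomials among its entries, $N$ is the solution set of an explicit linear system in $d$ and is computable in polynomial time; moreover $f(x + d) = f(x) + \ell(d)$ for all $x \in \R^n$ and all $d \in N$, where $\ell(d) \coloneqq \langle \nabla f(0), d\rangle$. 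I claim that $f$ is unbounded from below on $P$ if and only if there is $d \in N \cap \recc(P)$, with $\recc(P) = \{d : Ad \le 0\}$, such that $\ell(d) < 0$. The ``if'' direction is immediate. For ``only if'', write $P = Q + \recc(P)$ with $Q$ a polytope (Minkowski--Weyl): if $f(c_k) \to -\infty$ for $c_k = q_k + \sum_i \lambda_i^{(k)} r_i$ with $q_k \in Q$, $\lambda_i^{(k)} \ge 0$, and $r_i$ ranging over a finite generating set of $\recc(P)$, then convexity gives $f(c_k) \le f(q_k) + \sum_i \lambda_i^{(k)} f'_\infty(r_i)$, where $f'_\infty$ is the recession function of $f$ (so $f'_\infty(d) = \lim_{t \to \infty}(f(x_0 + td) - f(x_0))/t$, independent of $x_0$, and $f'_\infty$ is sublinear); since $f(q_k) \le \max_Q f < \infty$, some $f'_\infty(r_{i_0})$ must be negative, and a negative — hence finite — recession value along $r_{i_0}$ forces $r_{i_0} \in N$ and $\ell(r_{i_0}) = f'_\infty(r_{i_0}) < 0$. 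Granting the claim, the first stage reduces to a feasibility test for the rational linear program $\{d : d \in N,\ Ad \le 0,\ \ell(d) = -1\}$, of size $\poly(\enc f, \enc P)$: if it is feasible, report that $f$ is unbounded from below; otherwise proceed to the second stage.

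The part requiring the most care is this unboundedness characterization, in particular the ``only if'' direction — establishing that a convex polynomial tending to $-\infty$ over a polyhedron must do so along a straight recession ray (rather than merely along a curve), with that ray lying in $N$. This is the one place that invokes a non-generic fact about convex polynomials (the analogous statement is false for general polynomials); the rest — the separation oracle, the Lipschitz and rounding bookkeeping, the reduction of boundedness to a linear program, and the lower-dimensional case — is a routine instantiation of the classical equivalence of optimization and separation, applied on top of \Cref{THM:mainpolyhedron}.
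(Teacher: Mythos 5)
Your overall architecture matches the paper's: stage two (ellipsoid method on $P \cap B(0,R)$ with $R$ from \Cref{THM:mainpolyhedron}, including the affine-hull reduction) is exactly \Cref{PROP:applicationellipsoid}, and your unboundedness criterion is, in content, the paper's \Cref{LEM:fullproofconditionforunbounded}: your space $N=\bigcap_x \ker\nabla^2 f(x)$ coincides with the space $\mL$ of directions of linearity (so with $\mU^\perp$), your slope $\ell(d)$ equals $-\langle w,d\rangle$, and your feasibility test $\{d: d\in N,\ Ad\le 0,\ \ell(d)=-1\}$ is the same linear program as in line~\ref{ALGLINE:checkforunbounded} of \Cref{ALG:mainalgorithm}. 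Computing $N$ directly from the polynomial identity $\nabla^2 f(x)\,d\equiv 0$ rather than via the matrix $\mat{\nabla f}$ is a fine variant, and your observation that a finite recession value forces $d\in N$ is correct.

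The gap is in the ``only if'' direction of your unboundedness characterization, which is precisely the step the paper spends its machinery on. From $f(c_k)\le f(q_k)+\sum_i\lambda_i^{(k)} f'_\infty(r_i)$ and $f(c_k)\to-\infty$ you conclude that some $f'_\infty(r_{i_0})<0$; this is a non sequitur, since the inequality only bounds $f(c_k)$ from \emph{above}, so $f(c_k)\to-\infty$ is fully consistent with every $f'_\infty(r_i)\ge 0$ (the right-hand side then stays bounded below by $\min_Q f$, and the inequality is simply slack). Moreover, no soft convex-analysis repair can close this: the statement ``unbounded below on a polyhedron implies a recession direction with strictly negative recession function'' is \emph{false} for general convex functions --- e.g.\ $f(x,y)=h(y)$ with $h(y)=-\log(1+y)$ for $y\ge 0$ and $h(y)=-y$ for $y<0$ is convex, unbounded from below on $P=\R^2$, yet $f'_\infty(d)\ge 0$ for every direction $d$. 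So the claim genuinely needs the polynomial structure in a quantitative way; the compactness argument you could salvage only yields a direction $d^*\in\recc(P)$ with $f'_\infty(d^*)\le 0$, and ruling out the degenerate case $f'_\infty(d^*)=0$ (sublinear decrease along curved paths) is exactly what the paper does in \Cref{SEC:fullproofconditionforunbounded} via \Cref{CL:limitinUplusW}, using the $\mu$-strongly convex lower bound \eqref{EQ:fullprooflowerboundonf} from \Cref{THM:quantstructure} together with the recession-cone projection facts (\Cref{PROP:projectionreccesioncone,PROP:projectionofpolyhedronispolyhedron}). As written, your proof does not establish the hard direction of the criterion, and hence does not justify reporting ``bounded'' when the linear program is infeasible.
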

The main technical tool in our proof of~\Cref{THM:mainpolyhedron} is a structure theorem for convex polynomials, which we believe to be of independent interest. It shows that any convex polynomial can be written as the sum of a linear function and a polynomial (of fewer variables) that has a $\mu$-strongly convex \emph{lower bound}, where $\mu$ is at least inversely exponential in $\enc{f}$. 

\begin{restatable}{theorem}{quantstructurethm}
\label{THM:quantstructure}
Let $f \in \Q[x]$ be a convex polynomial. Then, there is a subspace $\mU \subseteq \R^n$, and a rational vector $w \in \mU^\perp$ (possibly $w = 0$), such that, writing $x_\mU$ for the projection of $x$ onto $\mU$,
\[
    f(x) = f(x_\mU) - \langle w, x \rangle,
\]
where $f(x_\mU)$ has a $\mu$-strongly convex lower bound on $\mU$. That is, there is a matrix $U \in \Q^{k \times n}$ whose rows are orthogonal and span $\mU$, and a $k$-variate, $\mu$-strongly convex, quadratic polynomial $q \in \Q[y_1, \ldots, y_k]$ such that $f(x_\mU) \geq q(Ux)$ for all $x \in \R^n$.
Moreover, the matrix $U$ and the vector $w$ can be computed in polynomial time in~$\enc{f}$, the lower bound~$q$ satisfies~$\enc{q} \leq \poly(\enc{f})$, and $\mu \geq 2^{-\poly(\enc{f})}$.
\end{restatable}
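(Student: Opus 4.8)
The plan is to induct on the number of variables $n$, using the Hessian-determinant dichotomy described above as the engine. Let $f \in \Q[x]$ be convex. We consider the Hessian determinant $h_f(x) = \det(\nabla^2 f(x))$, which is a polynomial in $x$ with $h_f \geq 0$ everywhere by convexity. There are two cases. If $h_f(a) > 0$ for some $a \in \R^n$, then $f$ is locally strongly convex at $a$, and the first ``local-to-global'' claim (which I would isolate as a separate lemma) upgrades this to a global strongly convex \emph{lower bound}: there is a $\mu$-strongly convex quadratic $q$ with $f(x) \geq q(x)$ for all $x$, and $\mu \geq 2^{-\poly(\enc{f})}$. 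In this case we are done with $\mU = \R^n$, $U = I$, and $w = 0$. The other case is $h_f \equiv 0$ on all of $\R^n$; here I invoke the second ``local-to-global'' claim — the convex analogue of Hesse's (false-in-general) theorem — to extract a \emph{single} direction $v$ with $\tfrac{\partial^2}{\partial t^2} f(x+tv) = 0$ for all $x$, so that $f$ is affine along $v$. This means $f(x) = g(x) + \langle c, x\rangle + d$ where $g$ depends only on the projection of $x$ onto $v^\perp$; restricting $g$ to $v^\perp$ gives a convex polynomial in $n-1$ variables to which the induction hypothesis applies, and one reassembles the resulting subspace, matrix, and linear part together with $v$ and $c$.

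The steps, in order, are: (1) prove the first local-to-global lemma quantitatively — from $h_f(a) > 0$ derive an explicit $\mu$-strongly convex quadratic lower bound with $\log(1/\mu) \leq \poly(\enc{f})$; the natural route is to expand $f$ around $a$, show the minimum eigenvalue of $\nabla^2 f$ is bounded below on a ball by a root-separation/resultant argument on $h_f$, then use convexity to propagate the quadratic lower bound globally (a strongly convex quadratic lower bound at one point extends to a global one by convexity, possibly after adjusting constants). (2) Prove the second local-to-global lemma: if $h_f \equiv 0$ then $\nabla^2 f(x)$ is everywhere singular, and one must show its kernel contains a fixed nonzero vector $v$ — this is where convexity is essential (it fails for general polynomials, per Hesse's error) and is the heart of the argument. (3) Set up the induction and track the bit-complexity bookkeeping: each reduction step must be polynomial-time computable, must only polynomially blow up $\enc{f}$, and the subspace $\mU$ and vector $w$ accumulated over at most $n$ steps must still have polynomial encoding length; rationality and orthogonality of the rows of $U$ need to be maintained, e.g. by choosing each $v$ to be a rational kernel vector of bounded height and orthogonalizing carefully.

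The main obstacle I expect is step (2): showing that a convex polynomial whose Hessian is everywhere singular has a \emph{global} (constant) direction of linearity, rather than a direction that varies with $x$. Pointwise, the kernel of $\nabla^2 f(x)$ is nonempty, but a priori $\ker \nabla^2 f(x)$ could rotate as $x$ moves; the claim is that convexity rigidly forbids this. I would try to argue via the structure of the variety $\{h_f = 0\} = \R^n$ (so $h_f$ is the zero polynomial, not just vanishing on a hypersurface) combined with a connectedness/continuity argument on the kernel distribution, or alternatively via a clean algebraic identity: if $\nabla^2 f \succeq 0$ and $\det \nabla^2 f \equiv 0$, factor out the ``flat part.'' One promising concrete approach: consider a point $a$ where $\rank \nabla^2 f(a)$ is maximal; near $a$ the kernel is locally a smooth distribution, and one shows using third derivatives and the PSD constraint (differentiating $v(x)^\top \nabla^2 f(x) v(x) = 0$ and using that $\nabla^2 f \succeq 0$ forces cross terms to vanish) that this distribution is actually constant, then argues the constant kernel direction persists globally by analyticity/polynomiality. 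A secondary obstacle is the quantitative control in step (1): turning ``$h_f(a) > 0$'' into an explicit lower bound on $\lambda_{\min}(\nabla^2 f)$ near $a$ with the right dependence on $\enc{f}$ will require care with resultants or with bounds on roots of the characteristic polynomial of $\nabla^2 f$.
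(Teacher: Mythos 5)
Your high-level plan coincides with the paper's architecture: dichotomize on the Hessian determinant $h_f$, upgrade $h_f(a)>0$ to a global $\mu$-strongly convex quadratic lower bound with $\log(1/\mu)\le\poly(\enc{f})$, and, when $h_f\equiv 0$, extract a global direction of linearity and drop to fewer variables. The genuine gap is that both key lemmas are left unproven and the concrete routes you sketch would fail. For your step (2) --- which you rightly call the heart --- the proposed local argument (work near a point of maximal rank, differentiate $v(x)^\top\nabla^2 f(x)\,v(x)=0$, use positive semidefiniteness to kill cross terms, conclude the kernel distribution is constant, then globalize by analyticity) uses only convexity, smoothness, and $\det\nabla^2 f\equiv 0$ on a neighborhood, and these hypotheses do not imply even \emph{local} constancy of the kernel: take $f(x)=\|x-p\|$ on a ball avoiding $p$ (with $n\ge 2$). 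This function is real-analytic and convex there, its Hessian $\frac{1}{\|x-p\|}\bigl(I-\hat u\hat u^\top\bigr)$ with $\hat u=(x-p)/\|x-p\|$ is everywhere PSD of constant rank $n-1$, yet its kernel $\spann{\hat u}$ rotates with $x$ and there is no local direction of linearity. (Indeed, the PSD constraint only forces $(\mynabla{u}\nabla^2 f)\,v$ to be orthogonal to $\ker\nabla^2 f(a)$, which a rotating kernel satisfies; it does not force it to vanish.) So the statement is irreducibly global, and your sketch contains no global mechanism. The paper supplies one by studying the range $\nabla f(\R^n)$ over \emph{all} of $\R^n$: Sard's theorem shows this range has measure zero when $h_f\equiv 0$, a Minty-type fact (via the convex conjugate) shows it is almost convex, and together these force it into an affine hyperplane, i.e.\ $\mynabla{v}f\equiv c$ for a single $v$.

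Two further problems. In step (1), the claim that a pointwise quadratic lower bound ``extends to a global one by convexity, possibly after adjusting constants'' is false for general convex functions ($\sqrt{1+x^2}$ has positive second derivative at $0$ but no global quadratic lower bound with positive leading coefficient); the polynomial degree must enter, and the paper uses the nontrivial quadrature-based inequality of~\cite{Ahmadi2024}, $f(x)\ge f(a)+\langle\nabla f(a),x-a\rangle+\langle x-a,\nabla^2 f(a)(x-a)\rangle/(4\deg(f)^2)$, which is not a consequence of convexity alone. To make $\mu$ effective you must also exhibit a point $a$ of polynomially bounded bit length with $h_f(a)>0$; the paper does this via the deterministic Schwartz--Zippel lemma together with an eigenvalue lower bound for rational positive definite matrices, whereas your resultant/root-separation route leaves the choice of $a$ unaddressed. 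Finally, your one-variable-at-a-time induction risks a superpolynomial blow-up: each elimination can increase the encoding length polynomially, and composing up to $n$ such steps need not remain polynomial. The paper sidesteps this by eliminating all directions of linearity in one shot, computing $\mL=\{v:\mynabla{v}f\text{ is constant}\}$ as the preimage of $\spann{\mathds{1}}$ under the matrix $\mat{\nabla f}$ and orthogonalizing once.
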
 \noindent
\Cref{THM:quantstructure} allows us to give a quantitative description of the minimizers of $f$ on $P$. For example, in the unconstrained case (where $P = \R^n)$, we have $f_{\min} = -\infty$ if and only if $w \neq 0$. If $w = 0$, the $\mu$-strongly convex lower bound on $f$ straightforwardly shows that its global minimizer is attained at a point of norm at most exponential in $\poly(\enc{f})$. For general $P$, the situation is more complicated. There, we show that $f$ is unbounded from below if and only if a certain linear system of inequalities (involving $U$ and $w$) has a solution. If not, Farkas' lemma gives us a witness for this fact, which we use to show that any minimizer of $f$ on $P$ has small norm when projected onto the subspace~${\mU \oplus \spann{w}}$.
Finally, by a lifting argument, we show that there exists a minimizer of small norm in the full space.

\subsection{Complexity landscape of polynomial programming}
A \emph{polynomial program} asks to minimize a rational polynomial $f$ over a convex polyhedron $P$ described by rational linear inequalities $Ax \leq b$; that is, to output a binary representation of
\begin{equation} \tag{POP} \label{EQ:PolynomialProgram}
    x^* = \mathrm{argmin}_{x \in \R^n} \left\{ f(x): x \in P \right\}.
\end{equation}
In this section, we discuss the complexity of this problem (in the Turing model) in the convex and general setting, and for different values of $d = \deg(f)$. Our discussion is summarized in~\Cref{TABLE:complexity}. For ease of exposition, we assume that $x^*$ is well-defined, i.e., that the minimum above is attained and unique.
We are interested in algorithms whose runtime is bounded polynomially in $n$ and the number of bits required to specify $f$ and $P$. 
Consider the associated \emph{decision problem}:
\begin{equation} \label{EQ:decision}\tag{D}
    \text{Given } (f, P), \text{ output } 
    \begin{cases}
        \textsc{yes} \quad & \text{ if } \quad \exists x \in P : f(x) \leq 0, \\
        \textsc{no} & \text{ if } \quad \forall x \in P : f(x) > 0.
    \end{cases}
\end{equation}
\paragraph{Compact witnesses.} A \emph{witness} for~\eqref{EQ:decision} is a point $w \in P$ satisfying $f(w) \leq 0$. Clearly, the answer to~\eqref{EQ:decision} is \textsc{yes} if and only if there exists a witness. Furthermore, if there exists a witness $w \in \Q^n$ of polynomial bit length (in the size of the input), then this \textsc{yes}-answer can be verified efficiently by evaluating $f(w)$ and the inequalities $Aw \leq b$. We call such a witness \emph{compact}.
Thus, the existence of a compact witness for every \textsc{yes}-instance of~\eqref{EQ:decision} implies that the problem lies in \textbf{NP}. We have already seen that this is the case for linear programs ($d=1$) and for quadratic programs~($d=2$)~\cite{Vavasis1990}.
In both cases, the ``canonical'' witness $w = x^*$ happens to be compact.\footnote{For linear programs, $x^*$ is a vertex of $P$, and thus the solution to a subsystem of $Ax = b$ \cite{Dantzig1963, Edmonds1967}. 
For unconstrained quadratic programs ($P = \R^n$), $x^*$ is a critical point of $f$, and thus a solution to the linear system $\nabla f(x) = 0$.
In the general setting, it turns out $x^*$ is still the solution to a linear system, involving KKT(-like) conditions~\cite{Kozlov1980, Vavasis1990}.} 
As a direct result, linear programming and \emph{convex} quadratic programming lie in \textbf{P}, as $x^*$ can be identified efficiently by the ellipsoid method~\cite{Khachiyan:LP, Kozlov1980}. On the other hand, (nonconvex) quadratic programming is~\textbf{NP}-complete (while $x^*$ is compact, it might be hard to find~\cite{Sahni1974}).

For $d \geq 4$, the minimizer $x^*$ is typically not compact. In fact, there might not exist any compact witness for~\eqref{EQ:decision}. For example, the global minimum of $f(x) = (x^2-2)^2$ over $P = \R$ is $0$, but the only witnesses of this fact are $\pm \sqrt{2} \not \in \Q$. 
Even if we assume $f$ is convex, there are examples of degree~$4$ where $x^* \not \in \Q^n$, and examples of degree $6$ where no rational witnesses exist  at all (see~\Cref{APP:Complexity}). 

\paragraph{Effective solution bounds.}
If the canonical witness $x^*$ is not rational (compact), this means that the polynomial program~\eqref{EQ:PolynomialProgram} \emph{cannot} be solved (efficiently), as it is not possibly to output a binary representation of~$x^*$ (in polynomial time). 
This appears a somewhat artificial limitation. 
Instead, one could ask to output an \emph{approximate} minimizer $\tilde x \in P$, satisfying $f(\tilde x) \leq f(x^*) + \epsilon$, where $\epsilon > 0 $ is part of the input. This corresponds to the following \emph{promise problem}:
\begin{equation} \label{EQ:promise}\tag{P}
    \text{Given } (f, P, \epsilon), \text{ output } 
    \begin{cases}
        \textsc{yes} \quad & \text{ if } \quad \exists x \in P : f(x) \leq 0, \\
        \textsc{no} & \text{ if } \quad \forall x \in P : f(x) > \epsilon.
    \end{cases}
\end{equation}
A witness for~\eqref{EQ:promise} is a point $\tilde w \in P$ with $f(\tilde w) \leq \epsilon$. Existence of such a $\tilde w$ is equivalent to \textsc{yes} being a valid answer to~\eqref{EQ:promise}. As before, the existence of a compact witness for each \textsc{yes}-instance implies that~\eqref{EQ:promise} is in \textbf{NP}.
The upshot is that any witness $w$ for~\eqref{EQ:decision} whose \emph{norm} is exponentially bounded in (a polynomial of) the size of the input can be rounded to a \emph{compact} witness $\tilde w$ for~\eqref{EQ:promise}. 
Thus, to show that \eqref{EQ:promise} is in \textbf{NP}, it suffices to show that $x^*$ has exponentially bounded norm. 

Notably, polynomial programs generally do not admit an exponentially bounded solution, already when $d=4$. Indeed, the quartic polynomial defined in~\eqref{EQ:exmp_doublyexp} has a unique global minimizer of norm $2^{2^{n-1}}$ on $P = \R^n$. Our main contribution (\Cref{THM:mainpolyhedron}) is to show that \emph{convex} polynomials (of any degree) \emph{do} admit an exponentially bounded minimizer. Contrary to the degree-$2$ case, the existence of compact (approximate) witnesses for $d\geq 4$ thus depends on convexity. Using the ellipsoid method, this shows that (approximate) convex polynomial programming is in \textbf{P}~(\Cref{COR:mainpolyhedron}). 

\begin{table}[ht]
\centering
\begin{tabular}{lll|ll}
\toprule
\makebox[3cm][l]{\textbf{objective}} & \makebox[3cm][l]{\textbf{compact wit.}~\eqref{EQ:decision}} & \makebox[3cm][l]{\textbf{compact wit.}~\eqref{EQ:promise}} & \makebox[2.5cm][l]{\textbf{complexity}~\eqref{EQ:decision}} & \textbf{complexity}~\eqref{EQ:promise}\\
\midrule
linear& yes$^{a}$~{\scriptsize\cite{Dantzig1963,Edmonds1967}}& --- & \textbf{P} {\scriptsize\cite{Khachiyan:LP}} & --- \\
convex quadratic & yes$^{a}$ {\scriptsize\cite{Kozlov1980}}& --- & \textbf{P}~{\scriptsize\cite{Kozlov1980}} & --- \\
quadratic & yes$^{a}$ {\scriptsize\cite{Vavasis1990}}& --- & --- & \textbf{NP}-hard {\scriptsize\cite{Sahni1974}} \\
convex quartic & unknown$^{b}$ & yes~ {\scriptsize (Thm. \ref{THM:mainpolyhedron})} & unknown & \textbf{P}~{\scriptsize (Cor. \ref{COR:mainpolyhedron})}  \\
 quartic & --- & \makebox[0.65cm][l]{no}~{\scriptsize (Eq. \eqref{EQ:exmp_doublyexp})} & --- & \textbf{NP}-hard$^{c}$ \\
convex $d \geq 6$ & no~{\scriptsize (App. \ref{APP:Complexity})} & yes~ {\scriptsize (Thm. \ref{THM:mainpolyhedron})} & unknown & \textbf{P}~{\scriptsize (Cor. \ref{COR:mainpolyhedron})} \\
\bottomrule
\end{tabular}
\caption{Complexity and availability of compact witnesses for problems~\eqref{EQ:decision} and~\eqref{EQ:promise}, corresponding to exact and approximate polynomial programming, respectively. Omitted entries (``---'') follow directly from adjacent ones (e.g., a compact witness for~\eqref{EQ:decision} is also a compact witness for~\eqref{EQ:promise}).
{ \footnotesize
\\ $^a$the canonical witness $w = x^*$ is compact. 
\\ $^b$the canonical witness $w = x^*$ is not compact in general, but a compact witness might exist. See~\Cref{APP:Complexity}.
\\ $^c$the problem is \textbf{NP}-hard even in the unconstrained setting ($P = \R^n$)~\cite{Laurent2008}.
}}
\label{TABLE:complexity}
\end{table}

\paragraph{Local polynomial optimization.}
Even \emph{local} optimization tasks involving polynomials are typically hard:
Determining whether a polynomial of degree $4$ has a local minimum or a critical point is hard; it is even hard to decide if a given (critical) point \emph{is} a local minimum~\cite{Ahmadi2022}. Determining an (approximate) local minimizer of a quadratic on a polytope is also hard~\cite{Ahmadi2022a}.

\paragraph{Real models of computation.}
In real models of computation (e.g., in the \textit{Blum-Schub-Smale} model~\cite{BSS1989} or in the \emph{real RAM} model), the decision problem~\eqref{EQ:decision} is always in \textbf{NP} (as evaluating $f(x)$ and $Ax \leq b$ at any point $x \in \R^n$ is easy there). However, in these models, it is not even known whether \emph{linear} programs can be solved in polynomial time~\cite{Smale91}
(as the runtime of an algorithm is no longer allowed to depend on the size of the entries of $A$ and $b$). 

\subsection{Related work}
\label{SEC:relatedwork}
\paragraph{Higher-order Newton methods.}
The global minimization of convex polynomials comes up naturally in Newton's method for optimization. This method attempts to find the minimizer of a smooth function $g : \R^n \to \R$ by successively approximating it by its second-order Taylor expansion around an iterate $x^k \in \R^n$, setting $x^{k+1}$ to be a global minimizer of this approximation. 
If the Hessian matrix $\nabla^2 g (x^k) \succeq 0$ is positive semidefinite at $x^k$, then the Taylor expansion around $x^k$ is a convex quadratic, and its critical points are global minimizers.
Newton's method converges quadratically within a small basin around local minimizers where the Hessian of $g$ is positive definite and locally Lipschitz.
It has been extensively studied whether faster convergence can be achieved by considering Taylor expansions of~$g$ of higher order, see e.g.,~\cite{Nesterov2019,Ahmadi2024}. A difficulty is that the degree-$d$ expansion of $g$ around~$x^k$ is typically nonconvex when $d > 2$ (even if $\nabla^2 g(x^k) \succeq 0$), and it is not clear how to compute its global minimizer (if it even exists). 
To avoid this problem, the authors of~\cite{Nesterov2019} and \cite{Ahmadi2024} add a regularizing term that ensures convexity. 
Still, efficient optimization of the resulting convex polynomial then relies on additional properties inherited from the specifics of the regularization scheme: in~\cite{Nesterov2019}, it satisfies a \emph{relative smoothness condition}; in~\cite{Ahmadi2024} it is \emph{sum-of-squares-convex} (see below). 
This raises the question whether such additional properties are necessary.\footnote{In fact, this question is raised already by the author in~\cite{Nesterov2019}, who states that they were unable to locate any method in the literature aimed at solving the general problem of minimizing a convex, multivariate polynomial.} 
Our work shows that convexity alone is enough.

\paragraph{Complexity of SDP.}
Semidefinite programming (SDP) is a natural extension of linear programming with important applications in (combinatorial) optimization. 
This includes the Goemans-Williamson approximation for max-cut~\cite{GW1995}, and an efficient algorithm to compute the independence number of a perfect graph~\cite{Schrijver:Ellipsoid}.
Under additional assumptions, SDPs can be solved in polynomial time using the ellipsoid method~\cite{Schrijver:Ellipsoid} or interior-point methods~\cite{Klerk2016}.
However, the computational complexity of general SDP is not known. The reason is that SDPs do not admit effective solution bounds~\cite{Porkolab1997, Pataki2024}; consider the  following example due to Khachiyan:
\begin{equation} \label{EQ:expSDP}
   \exists x \in \R^n :
\begin{bmatrix}
x_{1} & x_{2} \\
x_{2} & 1
\end{bmatrix} 
\succeq 0, \, \ldots \, , 
\begin{bmatrix}
x_{n-1} & x_{n} \\
x_{n} & 1
\end{bmatrix} \succeq 0,
\,
x_n \geq 2?
\end{equation}   
Any solution to this problem must satisfy~$x_1 \geq 2^{2^{n-1}}$, meaning it has doubly-exponential norm. 
The best available results are that, in the Turing model, SDP is either in $\mathbf{NP} \cap \mathbf{coNP}$ or not in $\mathbf{NP} \cup \mathbf{coNP}$~\cite{Ramana1997};  in the real (Blum-Shub-Smale) model, it is in $\mathbf{\textbf{NP}} \cap \mathbf{coNP}$~\cite{Ramana1997}.

\paragraph{Effective solution bounds for sum-of-squares relaxations.} 
There is a well-developed literature on convex relaxations that produce tractable bounds on the (global) minimum $p_{\min}$ of a polynomial~$p$. 
These are based on the observation that 
$
    p_{\min} = \max_{\lambda \in \R} \{ \lambda : p(x) - \lambda \geq 0 \, \forall x \in \R^n\},
$
which transforms a (nonconvex) polynomial optimization problem into a (convex) problem over the cone $\mathcal{P}$ of nonnegative polynomials. Testing membership in $\mathcal{P}$ is \textbf{NP}-hard, but one may obtain a tractable \emph{lower bound} on $p_{\min}$ by optimizing over a subcone of $\mathcal{P}$ that admits an efficient separation oracle. 
The most famous example is the cone $\Sigma \subseteq \mathcal{P}$ of \emph{sums of squares (sos)}, i.e., polynomials of the form $\sigma(x) =  s_1(x)^2 + \ldots + s_\ell(x)^2$, $s_i \in \R[x]$. It is possible to optimize over $\Sigma$ via a semidefinite program. 
This approach can be generalized to constrained problems, leading to a sequence of lower bounds called the \emph{sum-of-squares hierachy}~\cite{Parrilo2000, Lasserre2001}. Algorithms based on the sos hierarchy have had a significant impact in theoretical computer science~\cite{BarakSteurer2014} . 
However, its precise computational complexity is not known: Pathological examples similar to~\eqref{EQ:expSDP} can be realized as the sos relaxation of a (constrained) polynomial optimization problem~\cite{ODonnell2017}. 
Recent work shows that polynomial-time computability of the sos hierarchy can be guaranteed under additional assumptions~\cite{Raghavendra2017,Gribling2023, Bortolotti2025, Bortolotti2025a}. Essentially, these works show that certain special classes of semidefinite programs admit effectively bounded solutions.

\paragraph{Sum-of-squares-convexity.}
The interplay between convex polynomials and sums of squares has also been studied.
A polynomial $p \in \R[x]$ is called \emph{sum-of-squares-convex} if the $2n$-variate polynomial $\langle y, \nabla^2 p(x) \cdot y \rangle$
is a sum of squares of polynomials in $\R[x ,y]$. This implies that~${\nabla^2 p(x) \succeq 0}$ for all $x \in \R^n$. 
Thus, sos-convex polynomials are convex in the regular sense, and their convexity has an algebraic proof. (But, not all convex polynomials are sos-convex~\cite{Ahmadi2013}.) 
Contrary to regular convexity (which is \textbf{NP}-hard to detect~\cite{Ahmadi2011}), sos-convexity of a polynomial can be determined by testing feasibility of a semidefinite program~\cite{Ahmadi2024}. 
Moreover, the (global) minimum of an sos-convex polynomial can be determined by solving an SDP. Indeed, the first level of the sum-of-squares hierarchy is exact for sos-convex polynomials (even under sos-convex constraints)~\cite{LasserreConvex}. In light of the discussion on sum-of-squares relaxations above, we remark that (to the best of the authors' knowledge) it has never been examined whether these semidefinite programs can be solved in polynomial time.

\section{Technical overview}
\label{SEC:TO}
At a high level, we are concerned with understanding the quantitative behavior of the minimizers of a convex polynomial $f \in \Q[x]$ on a convex polyhedron $P = \{Ax \leq b\} \subseteq \R^n$.
In particular, we will show that $f$ is either unbounded from below on~$P$, in which case this can be detected efficiently, or~$f$ attains its minimum on $P$ at a point $x^* \in P$ whose norm is at most exponential in the encoding lengths of $f$ and $P$. 
In the latter case, the ellipsoid method can (approximately) identify this point in polynomial time. 
Together, this yields our main results~\Cref{THM:mainpolyhedron} and~\Cref{COR:mainpolyhedron}.

The main technical tool for establishing this dichotomy is a structure theorem  for convex polynomials (\Cref{THM:quantstructure}), which allows us to write $f$ as the sum of a linear function and a polynomial (of fewer variables) which is bounded from below by a $\mu$-strongly convex quadratic. 
Namely, we show that any convex polynomial $f$ decomposes as
\begin{equation} \label{EQ:TO:structurethm}
    f(x) = f(x_\mathcal{U}) - \langle w, x \rangle \geq q(Ux) - \langle w, x \rangle \quad (\forall x \in \R^n),
\end{equation}
where $U \in \Q^{k \times n}$ is a matrix with orthogonal rows spanning a subspace $\mathcal{U} \subseteq \R^n$, $w \in \mU^\perp$ is a rational vector (possibly $w = 0$), and $q \in \Q[y]$ is a $k$-variate, $\mu$-strongly convex quadratic.
Moreover, we show that $U, w$ can be computed in polynomial time in $\enc{f}$, and that $\mu \geq 2^{-\poly(\enc{f})}$.

In the unconstrained case (when $P = \R^n$), this immediately gives us the desired norm bound on the minimizer of $f$. Indeed, we then have $f_{\min} = - \infty$ if and only if $w \neq 0$, in which case $\lim_{\lambda \to \infty} f(\lambda w) = - \infty$. On the other hand, if $w = 0$, the strongly convex lower bound~\eqref{EQ:TO:structurethm} on~$f$ implies that it has a global minimizer of small norm (in terms of $\mu$). 
For general $P$, rather than checking whether $w = 0$, we will show that $f$ is unbounded from below on $P$ if and only if a certain linear system of inequalities (involving $U$ and $w$) is feasible. If not, Farkas' lemma provides an explicit witness of this fact, which we use to show that $f$ attains its minimum on $P$ at a point whose norm is at most singly-exponential in $\enc{f}$ and $\enc{P}$.
We explain this in more detail in~\Cref{SEC:applystruct} below. First, we outline our proof of~\Cref{THM:quantstructure}, i.e., of the decomposition~\eqref{EQ:TO:structurethm}.

\subsection{Structure theorem for convex polynomials}
The key to our proof of~\Cref{THM:quantstructure} is to consider the \emph{Hessian determinant} (or just \emph{hessian}) of $f$, which is given by $h_f(x) \coloneqq \det \left( \nabla^2 f(x) \right)$. By convexity, $h_f(x) \geq 0$ for all $x \in \R^n$. Intuitively, the hessian captures whether the graph of $f$ is locally `curved' around a point $x \in \R^n$ ($h_f(x) > 0$), or `straight' in at least one direction~(${h_f(x) = 0}$), corresponding to a zero eigenvector of $\nabla^2 f(x)$. For our proof, we exhibit two complimentary local-to-global phenomena involving the hessian of $f$:
\begin{enumerate}
    \item If $h_f \not \equiv 0$, i.e., if $\nabla^2 f(a) \succ 0$ at even a single point $a \in \R^n$, then $f$ admits a \emph{global}, \mbox{$\mu$-strongly} convex lower bound of degree two. Moreover, we have $\mu \geq 2^{-\poly(\enc{f})}$ independently of~$a$.
    \item If $h_f \equiv 0$, i.e., if $\nabla^2 f(x)$ is singular for all $x \in \R^n$, then $f$ admits a \emph{global} direction of linearity. That is, there is a nonzero $v \in \R^n$ such that the directional derivative $\mynabla{v} f$ is constant on~$\R^n$.
    In other words, we have $f(x) = f(x_{\mV^\perp}) + c \langle x, v \rangle$ for all $x \in \R^n$, where $c \in \R$ and $\mV = \spann{v}$.
\end{enumerate}
At a high level, these observations allow us to prove~\Cref{THM:quantstructure} as follows: if $h_f \not \equiv 0$ is not identically zero, we are done immediately (setting $\mU = \R^n$ and $w = 0$). If $h_f \equiv 0$, we may conclude that $f$ has a direction of linearity. After projecting onto the complement of that direction, we are left with a polynomial in one fewer variable which differs from $f$ only by a linear term.
Iterative application of this idea allows us to ``eliminate'' variables one at a time until the ``remaining'' polynomial has a nonzero hessian (and thus admits a strongly convex lower bound). For technical reasons, it turns out that it is preferable to eliminate all variables ``in one shot'', and we will show how to do so.
In what follows, we explain in more detail how to prove the two observations above separately, and how to combine them to obtain the decomposition~\eqref{EQ:TO:structurethm}.

\paragraph{Nonzero hessian implies a strongly convex lower bound.} We show that any convex polynomial $f \in \Q[x]$ with $h_f \not \equiv 0$ admits a $\mu$-strongly convex, quadratic lower bound, where $\mu$ is at least inverse exponential in the encoding size of $f$. Our starting point is the following lemma.
\begin{restatable}[{\cite[Lemma 5]{Ahmadi2024}}]{lemma}{AhmadiLB}
\label{LEM:stronglyconvex}
    Let $p \in \R[x]$ be a convex polynomial. For any $a \in \R^n$, we have
    \begin{equation} 
    \label{EQ:stronglyconvexAhmadi}
        p(\x) \geq p(a) + \langle \nabla p (a), 
        \, \x-a \rangle + \frac{\langle x-a, \, \nabla^2 p(a) \cdot (x-a) \rangle}{4 \deg(p)^2} \quad (\forall x \in \R^n).
    \end{equation}    
\end{restatable}
This lemma can be seen as an extension of the fact that a convex function is globally lower bounded by its tangents: The RHS of~\eqref{EQ:stronglyconvexAhmadi} is equal to the second-order Taylor expansion of $p$ around~$a$, up to a rescaling of the quadratic term by a factor $\approx 1/\deg(p)^2$. Its proof relies on a clever combination of basic facts on integration of low-degree (matrix) polynomials: Write the first-order error term of the Taylor series as an integral involving $\nabla^2 p$; apply a classical quadrature rule~\cite{ClenshawCurtis1960} to reduce the integral to a finite, weighted sum $\sum_{i}w_i \cdot \nabla^2 p(y^{(i)})$ with weights $\approx 1/\deg(p)^2$~\cite{Imhof63}; finally, use the fact that $\nabla p(x) \succeq 0$ for all $x$ to lower bound the sum by a single term $w_0 \cdot \nabla^2 p(a)$.

From~\eqref{EQ:stronglyconvexAhmadi}, we get a strongly convex lower bound on $f$ if $h_f \not \equiv 0$. Namely, if $\nabla^2 f(a) \succ 0$, then
    \begin{equation} \label{EQ:stronglyconvex}
        f(\x) \geq q(x) \coloneqq f(a) + \langle \nabla f (a), \x-a \rangle + \frac{\lambda_{\min}\left(\nabla^2 f (a)\right)}{4 \deg(f)^2} \cdot \|x-a\|^2  \quad (\forall x \in \R^n).
    \end{equation}
This lower bound was used in~\cite{Ahmadi2024} to establish that convex polynomials with nonzero hessian are \emph{coercive}, and therefore have a unique global minimizer. For us, this is not enough; we need an explicit bound on the norm of that minimizer. 
Note that~\eqref{EQ:stronglyconvex} actually gives us a $\mu$-strongly convex lower bound on $f$, where
\begin{equation}
    \label{EQ:mu}
    \mu = \mu(a) = \frac{\lambda_{\min}(\nabla^2 f (a))}{2 \deg(f)^2} >0.
\end{equation}
Unfortunately, this $\mu$ could be arbitrarily close to $0$, as there is no way to control $\lambda_{\min}(\nabla^2 f (a))$. 
We would need to find an $a \in \R^n$ such that $\mu(a)$ in~\eqref{EQ:mu} is at least inversely exponential in $\enc{f}$. Importantly, we need to do so without making any further assumptions on $f$. We show that this is possible by establishing the following two facts:
\begin{itemize}
    \item If $h_f \not \equiv 0$, then there exists an $a \in \Q^n$ such that $\nabla^2 f(a) \succ 0$ with $
    \bc(a) \leq \poly(n,\, \deg(f))$. That is, if $h_f(a) \neq 0$ for \emph{any} $a \in \R^n$, then in fact $h_f(a) \neq 0$ for an $a \in \Q^n$ of small bit length. 
    \item For any $a \in \Q^n$ with $\nabla^2 f(a) \succ 0$, we have 
    $
    \lambda_{\min}\big( \nabla^2 f(a) \big) \geq 2^{-\poly(\enc{f}, \, \bc(a))}.$ That is, for any $a$ of small bit length, the lower bound~\eqref{EQ:stronglyconvex} is $\mu$-strongly convex with $\mu > 0$ not too small.
\end{itemize}
Combined, these facts show that
\begin{equation} \label{EQ:TO:stronglyconvexlb}
    h_f \not \equiv 0 \implies f \text{ admits a $\mu$-strongly convex, quadratic lower bound, where } \mu \geq 2^{-\poly(\enc{f})}.
\end{equation}

To prove the first fact, note that the set of (real) zeroes of a nonzero polynomial has measure zero in~$\R^n$. Thus, if $h_f \not\equiv 0$, we intuitively expect virtually any point $a \in \Q^n$ to satisfy $\nabla^2 f(a) \succ 0$. This intuition can be made precise using the Schwartz-Zippel lemma (\Cref{LEM:ZS}), which gives an explicit set of points $B \subseteq \Z^n$, each of small bit length, such that $h_f(a) \neq 0$ for at least one $a \in B$.

For the the second fact, note that $\nabla^2 f(a)$ is a positive definite, rational matrix whose bit length is bounded by $\poly(\enc{f},\, \bc(a))$. The eigenvalues of such matrices can be bounded away from $0$ using Cramer's rule (see~\Cref{LEM:integermatrix} and \Cref{COR:rationalmatrix}).

\paragraph{Identically zero hessian implies direction of linearity.}
In light of the above, we investigate when the hessian~$h_f$ of a convex polynomial $f$ is identically zero, i.e., when $\nabla^2 f$ is \emph{nowhere} definite. We show that this happens if, \emph{and only if} $f$ has a (global) direction of linearity. That is to say,
\begin{equation}
    \label{EQ:Hessianvanishes}
    h_f \equiv 0 \iff \exists v \in \R^n \setminus \{ 0\}, \, c \in \R : \mynabla{v} f \equiv c.
\end{equation}
Note that the backward implication of~\eqref{EQ:Hessianvanishes} is immediate: if there exist (fixed) $v \in \R^n$, $c \in \R$ so that $\mynabla{v} f(x) = c$ for all $x \in \R^n$, then~$v$ is a kernel vector of $\nabla^2 f(x)$ for all $x \in \R^n$. The forward implication tells us that we may reverse the quantifiers: If $f$ has a \emph{local} direction of linearity for every~${x \in \R^n}$ (i.e, $\nabla^2 f(x)$ has a kernel vector $v = v(x)$ for all $x \in \R^n$), then in fact it has a \emph{global} direction of linearity.
Its validity is more subtle; in particular it \emph{does not} hold for nonconvex polynomials. Interestingly, O. Hesse himself worked on this question, and (mistakenly) claimed the equivalence holds in the general case. We discuss this further below.

To prove the forward implication of~\eqref{EQ:Hessianvanishes}, assume that $h_f \equiv 0$. We view the gradient~$\nabla f$ of $f$ as the smooth function~${\R^n \to \R^n}$ given by $x \mapsto \nabla f(x)$, and consider its range 
\[
\nabla f(\R^n) \coloneqq \{ \nabla f(x) : x \in \R^n \} \subseteq \R^n.
\]
We can deduce the following two properties of this set:
\begin{itemize}
    \item The Hessian $\nabla^2 f$ may be seen as the Jacobian of $\nabla f$. The fact that~${\det \big(\nabla^2 f\big) \equiv 0}$ means that every $x \in \R^n$ is a critical point of $\nabla f$. Thus,
    Sard's theorem (\Cref{THM:Sard}) tells us that~$\nabla f(\R^n)$ has Lebesgue measure zero in~$\R^n$.
\item The range of the gradient of a (sufficiently smooth) convex function is known to be an \emph{almost convex} set, meaning $\nabla f(\R^n)$ contains the interior of its convex hull.\footnote{This fact appears to be rather classical, going back at least to~\cite[Corollary 2]{Minty1964} (which proves a more general statement). For completeness, we include a short proof using convex conjugates (\Cref{LEM:convexgradient}). 
} 
\end{itemize}
It follows that $\nabla f(\R^n)$ is contained in an affine hyperplane: If not, its convex hull would contain a full-dimensional simplex, and its interior would have positive measure. But that means that there is a $v \in \R^n$ and a $c \in \R$ such that~${\mynabla{v} f(x) \coloneqq \langle v, \nabla f(x) \rangle = c}$ for all $x \in \R^n$, meaning $f$ has a (global) direction of linearity.

\paragraph{One-shot variable elimination: proof of the structure theorem.} \label{SEC:proofdecomp}
Next, we explain how to combine the above
to prove our structure theorem. Recall that we wish to write a convex polynomial $f \in \Q[x]$ as the sum of a linear function and a polynomial (in fewer variables) that is lower bounded by a strongly convex quadratic. If the hessian $h_f \not \equiv 0$ is not identically zero, we are done immediately by~\eqref{EQ:TO:stronglyconvexlb}. The idea is that this assumption is essentially without loss of generality in light of~\eqref{EQ:Hessianvanishes}, which tells us that if $h_f \equiv 0$, we can identify a direction in which $f$ is linear. Projecting onto the complement of this direction allows us to reduce to a problem in one fewer variable. 
Successive application of this procedure yields a decomposition of $f$ into a linear function and a polynomial~$\hat f$ whose hessian is not identically zero (to which we may then apply~\eqref{EQ:TO:stronglyconvexlb}). 
In what follows, we formalize this idea. In fact, we will show how to eliminate all variables ``in one shot''.

First, we identify all directions of linearity of $f$, which make up a subspace $\mL \subseteq \R^n$. For this, note that the partial derivatives $\partial f / \partial x_i$ of $f$ are polynomials (in $x$). Thus, we can adopt a new way of viewing the gradient $\nabla f$, namely as the linear operator~${\R^n \to \R[x]}$ given by
\[
    \nabla f : v \mapsto \mynabla{v} f = \sum_{i=1}^n v_i \cdot \frac{\partial f}{\partial x_i} \in \R[x] \quad (v \in \R^n).
\]
Then, $\mL \subseteq \R^n$ is just the subspace of vectors mapped to a constant polynomial by $\nabla f$, i.e.,
\[
     \mL \coloneqq \{ v \in \R^n : \mynabla{v} f \equiv c \text{ for some } c \in \R \}.
\]
Clearly, $\mL \supseteq \ker( \nabla f)$.  
In fact, since $\mL$ is the inverse image of a subspace of dimension $1$, we find that $\mL = \ker(\nabla f) \oplus \spann{w}$, where $w \in \ker(\nabla f)^\perp$ is either zero or satisfies $\mynabla{w} f \equiv -1$.
Now, set $\mU = \mL^\perp$. Construct a matrix $U \in \R^{k \times n}$ whose rows $U_1, \ldots, U_k$ form an orthonormal basis of $\mU$. Let $\hat f \in \R[y_1, \ldots, y_k]$ be the polynomial defined by
$\hat f(y) \coloneqq f(\sum_{i=1}^k y_iU_i)$, so that~${\hat f (Ux) = f(x_\mU)}$.
By construction, 
\begin{equation}
\label{EQ:Hessiandecomp}
  f(x) = f(x_\mU + x_{\mL}) = f(x_\mU) - \langle w, x \rangle = \hat f (Ux) - \langle w, x \rangle \quad (\forall x \in \R^n).  
\end{equation}
We claim that the hessian of $
\hat f$ is not identically zero. By~\eqref{EQ:Hessianvanishes}, it suffices to show that $\hat f$ has no directions of linearity.
For this, note that the partial derivatives of $\hat f$ are linear combinations of the partial derivatives of $f$ along directions in $\mU$. Thus, if $\hat f$ had a direction of linearity, this would imply that~$f$ has a direction of linearity in $\mU$, which contradicts the fact that $\mU = \mL^\perp$.

\paragraph{Bit complexity.}
We have shown how to construct a matrix $U \in \Q^{k \times n}$, a polynomial~$\hat f$ with nonzero hessian, and a vector $w$ with $Uw = 0$ so that $f(x) = \hat f(Ux) - \langle w, x \rangle$. As we noted, we may then apply~\eqref{EQ:TO:stronglyconvexlb} to~$\hat f$ to obtain the desired decomposition of $f$ into a linear function and a polynomial with a $\mu$-strongly convex, quadratic lower bound.
A technical but important detail that we have ignored here is that in order to fully prove~\Cref{THM:quantstructure}, we need to make sure that this $\mu$ is sufficiently large, namely at least inverse exponential in $\enc{f}$. For that, it would suffice to show that $\enc{\hat{f}}$ is at most polynomial in $\enc{f}$.
Moreover, we need to compute the matrix~$U \in \Q^{k \times n}$ and the vector~$w\in\Q^n$ in polynomial time in $\enc{f}$. In short, we need to show that the linear algebra used to construct $U, w$ and $\hat f$ above can be performed in polynomial time (in the bit model).

The important observation is that, for $f \in \Q[x]$, the linear operator $\nabla f : v \mapsto \mynabla{v} f$ can be expressed as a rational matrix $\mat{\nabla f}$, whose columns represents the (monomial) expansion of the partial derivatives $\partial f / \partial x_i \in \Q[x]$. 
This matrix has bit length at most $\poly(\enc{f})$. 
Let $\mathds{1}$ denote the expansion of the constant polynomial $x \mapsto 1$. 
It suffices to perform the following operations:
\begin{enumerate}[noitemsep]
    \item Find a $w \in \ker\left(\mat{\nabla f}\right)^\perp$ with $\mat{\nabla f} \cdot w = -\mathds{1}$, or set $w=0$ if no solution exists;
    \item Compute an orthogonal basis for $\mU = \mL^{\perp}$, where $\mL \coloneqq \ker \left( \mat{\nabla f} \right) \oplus \spann{w}$.
\end{enumerate}
Both can be achieved in polynomial time via standard results on linear system solving and Gram-Schmidt orthogonalization in the bit model, see~\Cref{APP:LA}.
A detail here is that ortho\emph{normalization} of a set of vectors is typically not possible in polynomial time (the resulting vectors might not be rational). For this reason, the matrix $U$ appearing in~\Cref{THM:quantstructure} has merely orthogonal rows.

\paragraph{Some remarks on scope: Hesse's redemption.} \label{SEC:OHesse}
One might wonder if the decomposition~\eqref{EQ:Hessiandecomp} of a polynomial into a linear part and a part with nonzero hessian holds for \emph{nonconvex} polynomials as well. We only use convexity at one point in its proof, namely to show that $h_f \equiv 0$ if and only if $f$ has a direction of linearity~\eqref{EQ:Hessianvanishes}. Intuitively, it seems plausible that this equivalence holds in general.
It turns out that it does not, and the question has an interesting history. In particular, O. Hesse himself (mistakenly) made this claim~\cite{OHesse}, but was later corrected~\cite{Gordan1876}; 
see also~\cite{Garbagnati2009}. Without convexity, we are only able to conclude from $h_f \equiv 0$ that the partial derivatives of $f$ are \emph{algebraically dependent}, i.e., there exists a nonzero polynomial $\pi$ such that $\pi(\partial f / \partial x_1, \ldots, \partial f / \partial x_n) \equiv 0$. There are examples where this $\pi$ has to be of degree at least $2$~\cite{Gordan1876}. For completeness, we include such an example in~\Cref{APP:HessianCounterexample}. On the other hand, Hesse's claim finds some redemption in this work: it holds for convex polynomials, and this is a crucial ingredient of our proof.

A further subtlety here is that our proof of~\eqref{EQ:Hessianvanishes} does not actually make use of the fact that $f$ is a polynomial (although, we do use this fact to \emph{compute} the directions of linearity). Indeed, any (sufficiently smooth) convex function can be written as the sum of a linear function and a function whose Hessian matrix is not everywhere singular. 
On the other hand, our primary motivation for deriving this decomposition was to then apply~\eqref{EQ:TO:stronglyconvexlb} to obtain a strongly convex, quadratic lower bound on the nonlinear part. 
Such a bound cannot be obtained in general.
For example, $t \mapsto \exp(t)$ is convex, with \emph{everywhere} definite Hessian, but it is not even coercive.

\subsection{Applications in convex polynomial programming} \label{SEC:applystruct}
In this section, we show how to apply our structure theorem (\Cref{THM:quantstructure}) to convex polynomial programming.
Let $f \in \Q[x]$ be a convex polynomial, and let $P = \{Ax \leq b\}$ be a (nonempty) convex polyhedron. 
Our goal is to show that $f$ is either unbounded from below on~$P$ (in which case we show that this can be determined efficiently) or it has a minimizer $x^*$ on $P$ of norm at most $2^{\poly(\enc{f}, \, \enc{P})}$ (which can then be identified efficiently by the ellipsoid method). Together, this yields~\Cref{THM:mainpolyhedron} and~\Cref{COR:mainpolyhedron}.
Recall that the structure theorem gives us a decomposition
\begin{equation}\label{EQ:structuretheoremforapplication}
    f(x) = f(x_\mU) - \langle w,x \rangle \quad (\forall x \in \R^n),
\end{equation}
where $\mU \subseteq \R^n$ is a subspace spanned by the (orthogonal) rows of a matrix $U \in \Q^{k \times n}$, ${w \in \mU^\perp}$, and  $f(x_\mU) \geq q(Ux)$ for some $k$-variate, $\mu$-strongly convex quadratic, polynomial $q \in \Q[y]$.
Furthermore, the bit lengths of $U$, $w$ and $q$ are all polynomial in~$\enc{f}$.

\paragraph{Detecting unboundedness.}
We first prove that $f$ is unbounded from below if and only if there exists a solution to a certain linear system of inequalities, which will allow us to detect unboundedness efficiently.
Recall that in the unconstrained case (i.e., when $P = \R^n$), $f$ is unbounded from below if and only if $w \neq 0$.
In the constrained case, we prove that unboundedness is equivalent to the feasibility of the following linear system of inequalities:
\begin{equation}\label{EQ:conditionforunboundedness}
    A x^0 \leq 0, \: Ux^0 = 0, \: \langle w, x^0 \rangle = 1.
\end{equation}
If $P = \R^n$, \eqref{EQ:conditionforunboundedness} reduces to $\exists x^0 : \: Ux^0 = 0, \: \langle w, x^0 \rangle = 1$ or in other words $w \neq 0$.
Recall that, if $P = \R^n$ and $w \neq 0$, $f$ is unbounded from below since ${\lim_{\lambda \to \infty} f(\lambda w) = - \infty}$.
For general $P$, the points $\lambda w$ need not be in $P$. However, for any solution $x^0$ to \eqref{EQ:conditionforunboundedness} and any $x \in P$, by~\eqref{EQ:structuretheoremforapplication},
\[
    x + \lambda x^0 \in P \: (\lambda \geq 0) \quad \text{and} \quad f(x + \lambda x^0) = f(x_\mU) - \langle w, x + \lambda x^0 \rangle = f(x) - \lambda \to -\infty \quad (\lambda \to \infty).
\]
This shows condition \eqref{EQ:conditionforunboundedness} is sufficient.
To show it is also necessary, assume that $f$ is unbounded from below.
This means there is a sequence $(x^\ell)_{\ell \geq 1} \subseteq P$ of points with $\lim_{\ell \to \infty} f(x^\ell) = -\infty$.
Using a compactness argument, a subsequence of the normalized points $x^\ell/\|x^\ell\|$ converges to a unit vector~$x^0$.
If $\mU \oplus \spann{w} = \R^n$, using~\eqref{EQ:structuretheoremforapplication}, we are able to show that this $x^0$ satisfies \eqref{EQ:conditionforunboundedness}.
If not, we need to do a more careful limit argument to ensure that $x^0 \not\in (\mU \oplus \spann{w})^\perp$; see \Cref{SEC:fullproofconditionforunbounded}.

\paragraph{Bounding the norm in a subspace.}
From now on, we assume that $f$ is bounded from below on~$P$.
Recall that our goal is to show that $f$ attains its minimum at a point of small norm.
The structure theorem (cf. \eqref{EQ:structuretheoremforapplication}) shows that the value of $f(x)$ only depends on the component of $x$ in the subspace~${\mU \oplus \spann{w}}$.
Thus, we cannot bound the norm of all minimizers.
Instead, we first show that any minimizer has small norm in the subspace $\mU \oplus \spann{w}$, and then show we can lift to the full space without increasing the norm too much. 
Using the structure theorem, after writing~$q$ as its degree-2 Taylor expansion around $0$ and using $\mu$-strong convexity of $q$, we get
\begin{equation}\label{EQ:boundonf}
    f(x) \geq q(0) -  \|\nabla q(0)\| \cdot \|Ux\| + \frac{\mu}{2} \|Ux\|^2 - \langle w, x \rangle  \quad (\forall x \in \R^n).
\end{equation}
We want to show that if $x^*$ is a minimizer of $f$ on $P$, then both $\|Ux^*\|$ and $|\langle w, x^*\rangle|$ are small.
To do so, we will show that for any $x 
\in P$, we have that $|\langle w, x\rangle|$ is large only if $\|Ux\|$ is large as well. Then, since $\|Ux\|$ appears quadratically in~\eqref{EQ:boundonf}, we can conclude that both quantities must be small for minimizers.

Formally, since $f$ is bounded from below on $P$, there does not exist an $x^0$ as in~\eqref{EQ:conditionforunboundedness}.
Farkas' lemma gives us a witness for this fact; namely a vector $\lambda \geq 0$ and a vector $z \in \R^n$ such that $A^\top \lambda + U^\top z - w = 0$. We thus get $\langle w, x\rangle = \langle z, Ux \rangle + \langle \lambda, Ax \rangle$ for any $x \in \R^n$.
For $x \in P$, we have $Ax \leq b$.
Together with $\lambda \geq 0$, this implies that $\langle \lambda, A x \rangle \leq \lambda^\top b$ is uniformly bounded for all $x \in P$.
Thus, we have
\begin{equation}\label{EQ:Farkasidentity}
   |\langle w, x\rangle| \leq \|z\| \cdot \|Ux\| + \|\lambda\| \cdot \|b\| \quad (\forall x \in P).
\end{equation}
Using this in \eqref{EQ:boundonf} we get
\begin{equation}\label{EQ:boundonf2}
    f(x) \geq q(0) -  \|\nabla q(0)\| \cdot \|Ux\| - \|z\| \cdot \|Ux\| - \|\lambda\| \cdot \|b\| + \frac{\mu}{2} \|Ux\|^2 \quad (\forall x \in P).
\end{equation}
As the the right-hand side of \eqref{EQ:boundonf2} is a quadratic polynomial in $\|Ux\|$ with positive leading coefficient, $\|Ux^*\|$ needs to be small for any minimizer $x^*$ of $f$ on $P$.
Combining this with \eqref{EQ:Farkasidentity}, this implies that $|\langle w, x^* \rangle |$ is small.
To finish the argument, we need to relate these bounds to the bit length of the input.
Before we do so, we first show how to lift this bound to the full space.

\paragraph{Lifting to the full space.}\label{SEC:TO:norminVdirectionalsosmall}
So far, we have shown that for any minimizer $x^*$ of $f$ on~$P$, the norm in the subspace $\mU \oplus \spann{w}$ needs to be small.
The norm of $x^*$ in the full space might still be large and the projection to $\mU \oplus \spann{w}$ might not be feasible.
Thus, it remains to lift this projection to a feasible solution in the full space without increasing the norm too much.
This lifted point will automatically be a minimizer of $f$ on $P$ since adding any vector in $(\mU \oplus \spann{w})^\perp$ does not change the value of~$f$.
Formally, let $x^*$ be any minimizer of $f$ on $P$.
We want to find  
\begin{equation}\label{EQ:smallnorminVsystem}
    x' \in P: \:  x'_{\mU \oplus \spann{w}} = x^*_{\mU \oplus \spann{w}}
\end{equation}
with $x'$ having small norm.
The condition \eqref{EQ:smallnorminVsystem} defines a linear system.
It is feasible since $x^* \in P$.
If this system had small bit length, we could conclude immediately that it has a solution $x'$ of small bit length.
However, the right-hand side of \eqref{EQ:smallnorminVsystem} need not even be rational (we only know its norm is small).
We show that the matrix of this linear system has small bit length and the vector of the system has small norm, which we use to show there is a solution of small norm; see~\Cref{SEC:fullproofboundinVdirection}.
\paragraph{Bit complexity.}
It remains to relate the norm bounds on $x^*$ derived above to  the bit length of the input.
For this, we need to bound all the terms that occur in \eqref{EQ:Farkasidentity} and \eqref{EQ:boundonf2}.
By the structure theorem we have that $\bc(q) \leq \poly(\enc{f})$ and $\mu \geq 2^{-\poly(\enc{f})}$ and thus the terms involving $q$ and $\mu$ can be bounded appropriately.
Using a quantitative version of Farkas' lemma we also get that the terms involving $z$ and $\lambda$ can be bounded in terms of the bit length of the input.
This allows us to conclude that the norm of any minimizer $x^*$ in the subspace $\mU \oplus \spann{w}$ is at most $2^{\poly(\enc{f}, \, \enc{P})}$.
The lifting argument then also shows that there exists a minimizer $x^*$ that has norm at most $2^{\poly(\enc{f}, \, \enc{P})}$ in the full space, which completes the proof.

\section{Preliminaries}
\label{SEC:Preliminaries}

\subsection{Notations}

\paragraph{Linear algebra.} A symmetric matrix $M \in \R^{N \times N}$ is positive (semi)definite if all of its eigenvalues are greater than (or equal to) $0$.
For a positive semidefinite matrix, we denote its smallest eigenvalue by~$\lambda_{\min}(M)$, and its smallest \emph{nonzero} eigenvalue by~$\lambda_{\min}^+(M)$.
For a subspace $\mV \subseteq \R^N$, and a vector $x \in \R^N$, we write $x_\mV \in \mV$ for the projection of $x$ onto $\mV$.
For a subset $S \subseteq \R^N$, we define the affine hull $\aff(S)$ of $S$ as the set of all affine combinations of elements of $S$, i.e., of points $\sum_{i=1}^k \alpha_i x^i$ where $k \geq 1$, $x^i \in S$ for all $1 \leq i \leq k$ and $\sum_{i=1}^k \alpha_i = 1$. We call points $x^1, \ldots, x^k \in \R^N$ affinely independent if no point is an affine combination of the other points.

\paragraph{Smooth (convex) functions.}
For a sufficiently smooth $\varphi : \R^n \to \R$, 
we write $\nabla \varphi = (\partial \varphi / \partial x_i)_{1 \leq i \leq n}$ for its gradient, and $\nabla^2 \varphi = \big(\partial^2 \varphi / \partial x_i \partial x_j\big)_{1 \leq i, j \leq n}$ for its Hessian. For a vector $v \in \R^n$, we write ${\mynabla{v} \varphi(x) = \langle v, \nabla \varphi(x) \rangle}$ for the \emph{unnormalized} directional derivative. 
For $c \in \R$, we write $\varphi \equiv c$ if~$\varphi(x) = c$ for all $x \in \R^n$. 
We say $v \in \R^n$ is a \emph{direction of linearity} of $\varphi$ if $\mynabla{v} \varphi \equiv c$ for some $c \in \R$, i.e., the directional derivative is constant. 
For $\mu > 0$, we say that $\varphi$ is $\mu$-\emph{strongly convex} if $x \mapsto \varphi(x) - (\mu / 2) \cdot \|x\|^2$ is convex, or equivalently if $\lambda_{\min} \big(\nabla^2 \varphi(x)\big) \geq \mu$ for all $x \in \R^n$. If there exists any such $\mu > 0$, we say~$\varphi$ is strongly convex.

\paragraph{Polynomials.} We write $\R[x]$ (resp. $\Q[x]$) for the space of real (resp. rational) polynomials in~$n$ variables $x = (x_1, x_2, \ldots, x_n)$, with
(monomial) basis $\{ x^\alpha : \alpha \in \N^n\}$. For $p(x) = \sum_{\alpha} p_\alpha x^\alpha$, we write $\supp{p} \subseteq \N^n$ for the set of~$\alpha$ with $p_\alpha \neq 0$. 
We write $h_p(x) \coloneqq \det \big( \nabla^2 p(x) \big)$ for the Hessian determinant (or hessian) of $p$, which is itself a polynomial of degree at most $n \cdot \mathrm{deg}(p)$.

\paragraph{Bit length.} The bit length $\bc(k)$ of an integer $k$ is $\max \{1, \lceil \log_2 |k| \rceil \}$, which is the smallest natural number such that $|k| \leq 2^{\bc(k)}$. 
The bit length of a maximally simplified fraction $r = p/q \in \Q$ is $\bc(p) + \bc(q)$. The bit length of a rational vector or matrix is the sum of the bit lengths of its entries. 
The bit length of a rational polynomial $f \in \Q[x]$ is the sum of the bit lengths of its (nonzero) coefficients in the monomial basis. We write $\enc{f} = \Theta(n + 
\mathrm{deg}(f) + \bc(f))$ for the total encoding length of $f$. For a convex polyhedron $P = \{Ax \leq b\}$ given by a matrix $A \in \Q^{m \times n}$ and vector~${b \in \Q^m}$, we similarly write $\enc{P} = \bc(A) + \bc(b)$.
When $P = \R^n$, we assume that $\enc{P} \geq n$.

\subsection{The ellipsoid method}
We need the following two statements about the ellipsoid method.
\begin{proposition}[{\cite[Theorem 14.1]{Schrijver1994}}]\label{PROP:solvelinearprogram}
    Let $A \in \Q^{M \times N}$ and $b \in \Q^M$. Consider the linear program $P = \{Ax \leq b\}$.
    In polynomial time in $\enc{P}$ we can check whether the program is feasible and if so, compute a feasible solution.
    In particular, this feasible solution has bit size polynomial in $\enc{P}$.
\end{proposition}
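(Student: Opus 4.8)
The plan is to invoke the classical complexity analysis of the ellipsoid method for linear programming, essentially as packaged in Schrijver's book, and to spell out only the two points that matter here: (i) that feasibility of $P = \{Ax \le b\}$ can be decided in time polynomial in $\enc{P}$, and (ii) that when $P$ is feasible one can extract a feasible solution whose bit length is polynomial in $\enc{P}$. First I would recall the standard reduction: deciding feasibility of $\{Ax \le b\}$ with $A \in \Q^{M\times N}$, $b \in \Q^M$ is polynomially equivalent to deciding whether a rational polyhedron is nonempty, and the ellipsoid method solves this by running on a bounded perturbation. Concretely, one replaces $P$ by $P' = P \cap \{\|x\|_\infty \le 2^{L}\}$ where $L = \poly(\enc{P})$ is a bound (from Cramer's rule / Hadamard's inequality on the subdeterminants of $\begin{bmatrix} A & b\end{bmatrix}$) such that $P \neq \emptyset$ iff $P'$ contains a point, and in fact iff a slightly inflated version $P'_\delta$ of $P'$ has volume at least $2^{-\poly(\enc{P})}$; here $\delta = 2^{-\poly(\enc{P})}$. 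The ellipsoid method, started from the ball of radius $2^{L}\sqrt{N}$ and run for $\poly(\enc{P})$ iterations, either finds a point of $P'_\delta$ or certifies that its volume is below the threshold, and this dichotomy decides feasibility of $P$.

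The key steps, in order, are: (1) compute the bound $L = \poly(\enc{P})$ on the bit length of vertices of (faces of) $P$ via Cramer's rule applied to the subsystems of $Ax = b$; (2) set up the bounded, slightly relaxed polytope and the corresponding volume threshold and iteration count, all controlled by $\poly(\enc{P})$; (3) run the ellipsoid method with the trivial separation oracle for $\{Ax \le b\}$ (given a candidate $x$, check each inequality $\langle A_i, x\rangle \le b_i$ and return a violated row as a separating hyperplane), noting each oracle call and each ellipsoid update costs $\poly(\enc{P})$ arithmetic operations on numbers of bit length $\poly(\enc{P})$ (this last bit-length control is itself part of the standard analysis and uses rounding of the ellipsoid data at each step); (4) if the method returns a point $\tilde x$ of the relaxed polytope, round it to an exact feasible point of $P$ by solving the linear system defining the minimal face it lies near — again Cramer's rule guarantees this rounded point has bit length $\poly(\enc{P})$; (5) if the method certifies the volume is too small, conclude $P = \emptyset$. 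Correctness of the volume dichotomy follows from the volume-halving guarantee of the ellipsoid iteration together with the a priori lower bound on $\vol(P'_\delta)$ when $P \neq \emptyset$.

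I expect the main obstacle to be purely expository rather than mathematical: making precise the rounding step (step 4) and the bit-length bookkeeping (the claim in step 3 that all intermediate numbers stay of polynomial bit length despite the square roots implicit in ellipsoid updates). Both are entirely standard — the square roots are handled by working with rational approximations of the ellipsoid matrices and enlarging them slightly at each step, and the rounding is the textbook ``find a vertex from an interior point'' argument — so the cleanest route is simply to cite \cite[Theorem 14.1]{Schrijver1994} (or the surrounding development in that chapter) and remark that the bit-length bound on the returned solution is explicit in that analysis. Accordingly I would keep the proof to a single paragraph: state the reduction to bounded feasibility, quote the iteration/bit-length guarantees of the ellipsoid method with the trivial LP separation oracle, and invoke the standard rounding lemma to produce the polynomial-bit-length feasible solution.
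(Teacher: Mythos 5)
Your plan is correct and coincides with what the paper does: the paper gives no proof of this proposition at all, simply citing \cite[Theorem 14.1]{Schrijver1994}, and your sketch is exactly the standard ellipsoid-method argument (Cramer-rule vertex bounds, bounded perturbed polytope, trivial row-violation separation oracle, rounding to a vertex) that underlies that citation, ending with the same reference. Nothing further is needed.
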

\begin{restatable}{proposition}{PROPapplicationellipsoid}\label{PROP:applicationellipsoid}
    Let $R > 0$, $\varepsilon > 0$.
    Let $f \in \Q[x]$ be a polynomial.
    Let $P = \{Ax \leq b\}$ be a (nonempty) polyhedron.
    Then, there exists an algorithm that outputs $\Tilde{x} \in P$ such that $f(\Tilde{x}) \leq \min_{x \in P \cap B_R(0)} f(x) + \varepsilon$ in time $\poly(\enc{f},\, \enc{P},\, \log(R),\, \log(1/\varepsilon))$.
\end{restatable}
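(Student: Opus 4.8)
The plan is to reduce the minimization of $f$ over $P \cap B_R(0)$ to a sequence of calls to the ellipsoid method for the weak feasibility problem over a convex body, using binary search on the objective value. First I would recall the standard guarantees of the ellipsoid method (as in~\cite{GrötschelLovaszSchrijver:ellipsoid} or \cite{Schrijver1994}): given a separation oracle for a convex set $K \subseteq B_R(0)$, and a promise that either $K$ is empty or $K$ contains a small ball, one can in time polynomial in the oracle complexity, $\log R$, and $\log(1/\delta)$ either assert $K = \emptyset$ or return a point in the $\delta$-neighborhood of $K$. To fit the template I would work with the weak versions, which avoid the ``little $r$'' issue exactly as the footnote in the introduction indicates: we accept points that violate feasibility and optimality by a tiny amount, and then (optionally) round/project back into $P$.

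The key steps, in order: (i) Fix a target value $\gamma \in \Q$ and consider the convex set $K_\gamma \coloneqq \{x \in \R^n : Ax \leq b,\ \|x\| \leq R,\ f(x) \leq \gamma\}$. This is convex because $f$ is convex (this is the one place convexity of $f$ enters), and a separation oracle for $K_\gamma$ is easy: given a query point $\bar x$, check the linear inequalities $A\bar x \le b$ and $\|\bar x\| \le R$ first (returning a violated row or the ball-gradient as separating hyperplane), and otherwise evaluate $f(\bar x)$ and, if $f(\bar x) > \gamma$, return the gradient inequality $\langle \nabla f(\bar x), x - \bar x\rangle \le \gamma - f(\bar x) < 0$, valid for all $x \in K_\gamma$ by convexity. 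Evaluating $f$ and $\nabla f$ at a rational point of controlled bit length, and the resulting separating hyperplane, all have bit length polynomial in $\enc{f}$ and the bit length of $\bar x$; the ellipsoid method keeps iterates of controlled bit length, so the oracle runs in polynomial time. (ii) Run the ellipsoid method on $K_\gamma$ with accuracy parameter $\delta$ chosen small enough (inverse-polynomial in the target precision, so $\log(1/\delta) = \poly(\dots,\log(1/\varepsilon))$, using a Lipschitz bound for $f$ on $B_R(0)$ of the form $2^{\poly(\enc{f},\log R)}$ coming from bounding the coefficients and degree): it either certifies $K_\gamma = \emptyset$, i.e.\ $\min_{x \in P \cap B_R(0)} f(x) > \gamma$, or returns a point $\tilde x$ with $\mathrm{dist}(\tilde x, P \cap B_R(0))$ and $f$-suboptimality both small. (iii) Binary-search over $\gamma$ in the interval $[-M, M]$ where $M = 2^{\poly(\enc{f},\log R)}$ is an a priori bound on $|f|$ over $B_R(0)$; after $\poly(\enc{f},\log R,\log(1/\varepsilon))$ steps the search localizes $\gamma$ to within $\varepsilon/2$ of $\min_{x\in P\cap B_R(0)} f(x)$, and the last ``feasible'' call yields a point that is within $\varepsilon$ of optimal. (iv) Finally, post-process: the returned point may slightly violate $Ax \le b$; project it onto $P$ (a convex quadratic program solvable exactly, or just move it back by a controlled amount using that $P$ is nonempty and the violation is tiny), which changes $f$ by at most another $\varepsilon$-fraction by the Lipschitz bound; absorb the constant factors into $\varepsilon$.

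The main obstacle I expect is the bookkeeping in step (iii)–(iv): making precise how the ``weak'' outputs of the ellipsoid method (a point near $P \cap B_R(0)$ with near-optimal value, rather than an exact point in $P$) are massaged into an honest $\tilde x \in P$ with $f(\tilde x) \le \min_{x\in P\cap B_R(0)} f(x) + \varepsilon$, and carefully tracking that every auxiliary precision ($\delta$, the binary-search granularity, the projection tolerance) can be taken inverse-polynomial in the desired accuracy so that the total running time stays $\poly(\enc{f},\enc{P},\log R,\log(1/\varepsilon))$. This requires a Lipschitz estimate for $f$ on $B_R(0)$ with $\log(\text{Lipschitz constant}) = \poly(\enc{f},\log R)$, which follows from a crude bound on $\|\nabla f\|$ in terms of the coefficient sizes, the degree, and $R$. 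None of these steps is conceptually deep; they are the standard ``weak-to-strong'' conversions for the ellipsoid method, specialized to the fact that sublevel sets of a convex polynomial are convex with an efficiently computable separation oracle.
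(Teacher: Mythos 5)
Your high-level architecture (separation oracle for the sublevel set $K_\gamma$, binary search on the level $\gamma$, Lipschitz bound of size $2^{\poly(\enc{f},\log R)}$ to set the precision) matches the paper's proof, but there are two genuine gaps, both located in your step~(ii). First, the dichotomy you invoke as a ``promise'' --- either $K_\gamma=\emptyset$ or $K_\gamma$ contains a ball of noticeable radius --- is exactly what has to be proved, and your write-up never establishes it. The weak ellipsoid method does not ``certify $K_\gamma=\emptyset$''; its negative outcome only says that $K_\gamma$ has small volume (equivalently, contains no $\delta$-ball), and to convert that into ``$\min_{x\in P\cap B_R(0)}f(x)>\gamma-\varepsilon/2$'' one needs a quantitative lower bound on the volume of $K_\gamma$ whenever $\gamma$ exceeds the minimum by $\varepsilon/2$. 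The paper supplies this (for full-dimensional $P$) via \Cref{PROP:volumefulldimensionalcase}: a simplex of volume $2^{-\poly(\enc{P})}$ inside $P$ is rescaled toward the minimizer, and Lipschitzness of $f$ shows the rescaled simplex lies in the sublevel set, giving $\log(1/r)=\poly(n,\enc{P},\log R,\log(1/\varepsilon))$. Without some argument of this kind your binary search is not sound.

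Second, and more fundamentally, the promise is simply false when $P$ is not full-dimensional (e.g.\ when the inequalities $Ax\le b$ encode equations): then $K_\gamma$ has empty interior for \emph{every} $\gamma$, the ellipsoid subroutine can always answer ``no $\delta$-ball'' regardless of whether the level is above or below the minimum, and your binary search extracts no information; no post-processing or projection onto $P$ in step~(iv) can repair this, because the failure is that the subroutine returns nothing, not that it returns a slightly infeasible point. The paper handles this case separately: it computes affinely independent points spanning $\aff(P)$ with the ellipsoid method (\Cref{PROP:getaffinehullofpolyhedron}), orthogonalizes, and reparametrizes by an explicit rational affine map $L$ so that the problem becomes a full-dimensional instance $(f',P')$ in fewer variables with $\enc{f'},\enc{P'}\le\poly(\enc{f},\enc{P})$ and a radius $R'=(R+1)\cdot 2^{\poly(\enc{P})}$, to which the full-dimensional argument applies. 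Your proposal has no counterpart to this reduction. (A minor contrast: the paper uses \emph{strong} separation oracles, so the feasibility subroutine of \Cref{PROP:ellipsoidforconvexset} returns a point exactly in $P\cap B_R(0)\cap F_\tau$ and your weak-to-strong post-processing in step~(iv) is not needed; that part of your route could be made to work, but it is extra machinery rather than a fix for the two gaps above.)
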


All the ideas needed to prove \Cref{PROP:applicationellipsoid} are standard results about the ellipsoid method.
Similar statements can for example be found in \cite{Schrijver1994, GrötschelLovaszSchrijver:ellipsoid, Vishnoi:convexoptimization}.
However, as we were unable to find the exact statement needed for our application (i.e., \Cref{PROP:applicationellipsoid}) we include a proof of this statement in \Cref{APP:Ellipsoid} for completeness.
Our proof mainly relies on the following standard results about the ellipsoid method:
\begin{itemize}
    \item It is known how to get \Cref{PROP:applicationellipsoid} if $P$ is full-dimensional and in a real model of computation where we are allowed to take square roots (see e.g. \cite[Chapter 13]{Vishnoi:convexoptimization}).
    \item The ellipsoid method can be implemented in the bit model (see e.g. \cite[Chapter 3]{GrötschelLovaszSchrijver:ellipsoid}).
    \item One can, again using the ellipsoid method, find the affine hull of $P$, which allows us to reduce to the full-dimensional case (see e.g. \cite[Chapters 5 and 6]{GrötschelLovaszSchrijver:ellipsoid}).
\end{itemize}
In \Cref{APP:Ellipsoid} we combine these ideas and give a complete proof of \Cref{PROP:applicationellipsoid} (in the bit model).
The main idea is to first find the affine hull of $P$, 
then project to this affine hull (which does increase the bit complexity by at most a polynomial factor) and finally apply the ellipsoid method on the projected polyhedron, which is now full-dimensional.

\section{Structure theorem for convex polynomials}
\label{SEC:ProofStructureThm}
In this section, we give the full proof of our structure theorem for convex polynomials, which we restate for convenience.

\quantstructurethm*\noindent
Our proof relies on two key technical tools. The first provides a $\mu$-strongly convex lower bound on a convex polynomial $f \in \Q[x]$ whose hessian $h_f(x) \coloneqq \det \left(\nabla^2 f(x)\right)$ is not identically zero.
\begin{restatable}{proposition}{quantstronglyconvex}
    \label{PROP:quantstronglyconvex}
    Let $f \in \Q[x]$ be a convex polynomial with $h_f \not \equiv 0$.
    Then, there is a $\mu$-strongly convex, quadratic $q \in \Q[x]$ with $\bc(q) \leq \poly(\enc{f})$ such that $f(x) \geq q(x)$ for all $x \in \R^n$. Moreover,
    $
        \mu \geq 2^{-\poly(\enc{f})}.
    $
\end{restatable}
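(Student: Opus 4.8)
The plan is to prove \Cref{PROP:quantstronglyconvex} by combining \Cref{LEM:stronglyconvex} with two quantitative facts about rational points where $\nabla^2 f$ is positive definite, exactly as previewed in the technical overview around~\eqref{EQ:TO:stronglyconvexlb}. The starting point is that \Cref{LEM:stronglyconvex} (the Ahmadi lower bound) gives, for \emph{any} $a \in \R^n$, the inequality $f(x) \geq f(a) + \langle \nabla f(a), x - a\rangle + \tfrac{1}{4\deg(f)^2}\langle x-a, \nabla^2 f(a)(x-a)\rangle$, and whenever $\nabla^2 f(a) \succ 0$ the quadratic term is lower bounded by $\tfrac{\lambda_{\min}(\nabla^2 f(a))}{4\deg(f)^2}\|x-a\|^2$, yielding a $\mu$-strongly convex quadratic $q$ with $\mu = \lambda_{\min}(\nabla^2 f(a))/(2\deg(f)^2)$ as in~\eqref{EQ:mu}. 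So the whole content is to find a \emph{specific} $a \in \Q^n$ which (i) has small bit length and (ii) makes $\lambda_{\min}(\nabla^2 f(a))$ not too small, and then to bound $\bc(q)$ in terms of $\enc{f}$.

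For step (i), I would argue that since $h_f = \det(\nabla^2 f)$ is a nonzero polynomial of degree at most $n\deg(f)$, the Schwartz--Zippel lemma (\Cref{LEM:ZS}) applied over a finite set such as $\{0,1,\ldots,n\deg(f)\}^n \subseteq \Z^n$ produces a point $a$ in that set with $h_f(a) \neq 0$; every coordinate of $a$ is a nonnegative integer bounded by $n\deg(f)$, so $\bc(a) \leq \poly(n,\deg(f)) \leq \poly(\enc{f})$. Since $h_f(a)\neq 0$ and $\nabla^2 f(a) \succeq 0$ by convexity, in fact $\nabla^2 f(a) \succ 0$. For step (ii), the matrix $M \coloneqq \nabla^2 f(a)$ is a rational positive definite matrix whose entries are obtained by evaluating the (polynomial) second partials of $f$ at $a$; each such entry is a sum of at most $|\supp f|$ terms of the form $f_\alpha$ times a product of powers of the (integer) coordinates of $a$, so $\bc(M) \leq \poly(\enc{f}, \bc(a)) \leq \poly(\enc{f})$. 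Now I invoke the quantitative eigenvalue bound for rational positive definite matrices (\Cref{LEM:integermatrix}/\Cref{COR:rationalmatrix}): clearing denominators to an integer matrix $\tilde M = D^2 M$ for a suitable common denominator $D$, the smallest eigenvalue satisfies $\lambda_{\min}(\tilde M) \geq 1/(\text{something like } \|\tilde M\|^{N-1})$ via Cramer's rule applied to $\det(\tilde M) \geq 1$ and a bound on the other eigenvalues, hence $\lambda_{\min}(M) \geq 2^{-\poly(\bc(M))} \geq 2^{-\poly(\enc{f})}$. Combining with~\eqref{EQ:mu} and $\deg(f) \leq \enc{f}$ gives $\mu \geq 2^{-\poly(\enc{f})}$.

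Finally I need $\bc(q) \leq \poly(\enc{f})$. The quadratic $q(x) = f(a) + \langle \nabla f(a), x-a\rangle + \tfrac{\mu}{2}\|x-a\|^2$ has coefficients built from: the rational number $f(a)$, the rational vector $\nabla f(a)$, the integer vector $a$, and the rational number $\mu$ (which I will choose to be a rational lower bound of $\lambda_{\min}(M)/(2\deg(f)^2)$, e.g. a suitable negative power of two, so it has small bit length while still being $\geq 2^{-\poly(\enc{f})}$). Each of $f(a)$, $\nabla f(a)$ has bit length $\poly(\enc{f}, \bc(a)) = \poly(\enc{f})$ by the same evaluation estimate as above, and expanding $\langle\nabla f(a),x-a\rangle$ and $\tfrac{\mu}{2}\|x-a\|^2$ into the monomial basis only multiplies and adds a bounded number of such quantities, so all coefficients of $q$ still have bit length $\poly(\enc{f})$; since $q$ has at most $O(n^2)$ monomials, $\bc(q) \leq \poly(\enc{f})$.

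The main obstacle is making the quantitative eigenvalue bound in step (ii) genuinely clean: one has to be careful that clearing denominators does not blow up the bit length (the common denominator $D$ of the entries of $\nabla^2 f(a)$ divides the product of denominators of the $f_\alpha$, so $\bc(D) \leq \bc(f) \leq \enc{f}$), and one has to correctly combine the lower bound $\det(\tilde M) \geq 1$ with an upper bound on the product of the remaining $N-1$ eigenvalues (each at most the trace, hence at most $\poly$-bit) to extract the per-eigenvalue lower bound — this is exactly what \Cref{LEM:integermatrix} and \Cref{COR:rationalmatrix} are set up to deliver, so modulo citing them correctly the argument is routine. A secondary bookkeeping point is to keep $n$, $\deg(f)$, $|\supp f|$, and $\bc(f)$ all uniformly absorbed into $\poly(\enc{f})$ using $\enc{f} = \Theta(n + \deg(f) + \bc(f))$ and $|\supp f| \leq \bc(f)$.
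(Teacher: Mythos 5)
Your proposal is correct and follows essentially the same route as the paper's proof: apply the lower bound of \Cref{LEM:stronglyconvex}, use the deterministic Schwartz--Zippel lemma (\Cref{LEM:ZS}, as in \Cref{COR:ZSHessian}) to find an integer point $a$ of bit length $\poly(n,\deg(f))$ with $h_f(a)>0$, and invoke \Cref{LEM:integermatrix}/\Cref{COR:rationalmatrix} to bound $\lambda_{\min}(\nabla^2 f(a))$ from below, with the same bit-length bookkeeping for $q$ and $\mu$. Your explicit replacement of $\mu$ by a rational power-of-two lower bound to keep $q \in \Q[x]$ is a sensible touch that the paper leaves implicit.
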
 \noindent
The second shows that any convex polynomial $f \in \Q[x]$ can be written as the sum of a linear function and a polynomial $\hat f$ (of fewer variables) with nonzero hessian $h_{\hat f} \not \equiv 0$.
\begin{proposition} \label{PROP:quantdecomposition}
Let $f \in \Q[x]$ be a convex polynomial. Then, there is a subspace $\mU \subseteq \R^n$, a matrix $U \in \Q^{k \times n}$ whose orthogonal rows span $\mU$, and a rational vector $w \in \mU^\perp$ (possibly $w = 0$), such that
\begin{equation} \label{EQ:quantdecomp}
    f(x) = \hat f (Ux) - \langle w, x\rangle \quad (\forall \, x\in \R^n),
\end{equation}
where the polynomial $\hat f : \R^k \to \R$ defined by $\hat f(Ux) = f(x_\mU)$ satisfies $h_{\hat f} \not \equiv 0$. Moreover, we can compute $U, w$ and~$\hat f$ in polynomial time in $\enc{f}$. In particular, $U, w, \hat f$ have bit length $\poly(\enc{f})$.
\end{proposition}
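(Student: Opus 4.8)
The plan is to follow the "one-shot variable elimination" strategy outlined in the technical overview, formalizing the linear-algebraic construction and then verifying that it can be carried out over $\Q$ in polynomial time. The starting point is the key equivalence~\eqref{EQ:Hessianvanishes}: a convex polynomial has identically vanishing hessian if and only if it has a direction of linearity. I would first isolate and prove this equivalence as a standalone lemma, since it is the only place convexity enters. The backward direction is immediate (a direction of linearity $v$ is a kernel vector of $\nabla^2 f(x)$ everywhere). For the forward direction, I would argue exactly as sketched: view $\nabla f$ as a smooth map $\R^n \to \R^n$ whose Jacobian is $\nabla^2 f$; if $h_f \equiv 0$ then every point is a critical point, so by Sard's theorem the image $\nabla f(\R^n)$ has Lebesgue measure zero; but the range of the gradient of a smooth convex function contains the interior of its convex hull (Minty; proved via convex conjugates, cf. the cited \Cref{LEM:convexgradient}), so a measure-zero range forces $\nabla f(\R^n)$ to lie in an affine hyperplane $\{y : \langle v, y\rangle = c\}$, i.e. $\mynabla{v} f \equiv c$.

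Next I would set up the decomposition. Define the linear operator $\nabla f : \R^n \to \R[x]$, $v \mapsto \mynabla{v} f = \sum_i v_i\, \partial f/\partial x_i$, and let $\mL = \{v : \mynabla{v} f \equiv c \text{ for some } c\in\R\}$ be its preimage of the (at most one-dimensional) space of constants. Then $\mL = \ker(\nabla f) \oplus \spann{w}$ where $w \in \ker(\nabla f)^\perp$ is either $0$ or normalized so that $\mynabla{w} f \equiv -1$. Set $\mU = \mL^\perp$, pick an orthogonal basis $U_1,\dots,U_k$ of $\mU$ assembled into $U \in \Q^{k\times n}$, and define $\hat f(y) \coloneqq f(\sum_i y_i U_i)$, so $\hat f(Ux) = f(x_\mU)$. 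Since for any $x = x_\mU + x_\mL$ we have $f(x) = f(x_\mU) + \mynabla{x_\mL} f(0) = f(x_\mU) - \langle w, x\rangle$ (using that $\mynabla{v}f$ is constant for $v \in \mL$, so $f$ is affine-linear along $\mL$-directions with gradient contribution $-\langle w,\cdot\rangle$), this gives~\eqref{EQ:quantdecomp}. Finally, $h_{\hat f} \not\equiv 0$: if not, by the lemma $\hat f$ would have a direction of linearity $u \in \R^k$, but the partial derivatives of $\hat f$ are linear combinations of the $\mynabla{U_i} f$, so $\mynabla{Uu}f \equiv c$ for $Uu \in \mU \setminus\{0\}$, contradicting $\mU \cap \mL = \{0\}$.

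It remains to argue the bit-complexity claims, which I expect to be the main technical obstacle — not because any single step is hard, but because one must be careful about what linear algebra is polynomial-time over $\Q$. The operator $\nabla f$ has a rational matrix representation $\mat{\nabla f}$ whose columns are the monomial expansions of the $\partial f/\partial x_i$; its dimensions and bit length are $\poly(\enc{f})$ (the number of monomials appearing in the derivatives is at most the support size of $f$, which is linear in $\enc{f}$ under the given encoding). Step 1 is then: solve $\mat{\nabla f}\cdot w = -\mathds{1}$ for $w \in \ker(\mat{\nabla f})^\perp$, or declare $w = 0$ if infeasible; this is a rational linear system, solvable in polynomial time with a solution of $\poly(\enc{f})$ bit length (\Cref{APP:LA}). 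Step 2 is: compute an orthogonal basis of $\mU = (\ker(\mat{\nabla f}) \oplus \spann{w})^\perp$ via Gram–Schmidt without normalization, again polynomial-time in the bit model (\Cref{APP:LA}); this is the reason $U$ is required to have merely orthogonal, not orthonormal, rows. The coefficients of $\hat f$ are obtained from those of $f$ by substituting $x = \sum_i y_i U_i$; since $U$ and $f$ have $\poly(\enc{f})$ bit length and $\deg \hat f = \deg f$, expanding this substitution yields $\enc{\hat f} \leq \poly(\enc{f})$. Assembling these pieces establishes the proposition.
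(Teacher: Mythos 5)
Your proposal follows the paper's proof of Proposition~\ref{PROP:quantdecomposition} essentially step for step: the same key lemma (Proposition~\ref{PROP:Hessianvanishes}, proved via Sard's theorem plus almost convexity of the gradient range), the same subspace $\mL$ and decomposition $\mL = \ker(\nabla f)\oplus\spann{w}$, the same choice $\mU = \mL^\perp$, the same contradiction argument for $h_{\hat f}\not\equiv 0$, and the same reduction of the bit-complexity claims to rational linear system solving and unnormalized Gram--Schmidt.

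There is one concrete slip, and it sits exactly at the orthogonal-versus-orthonormal point you yourself flag. With merely orthogonal rows $U_1,\dots,U_k$, your definition $\hat f(y) \coloneqq f(\sum_i y_i U_i)$ does \emph{not} satisfy $\hat f(Ux) = f(x_\mU)$: one has $\hat f(Ux) = f\big(\sum_i \langle x, U_i\rangle\, U_i\big)$, whereas $x_\mU = \sum_i \big(\langle x, U_i\rangle/\|U_i\|^2\big) U_i$, and these differ (as do their $f$-values, since $f$ is genuinely nonlinear on $\mU$) unless the $U_i$ are unit vectors --- which you cannot assume over $\Q$. The paper's fix is to set $\overline U_i \coloneqq U_i/\|U_i\|^2$ (rational, bit length still $\poly(\enc{f})$) and define $\hat f(y) \coloneqq f(\sum_i y_i \overline U_i)$; then $\hat f(Ux) = f(x_\mU)$ holds and the rest of your argument, including the no-direction-of-linearity contradiction run with $u = \sum_i \hat v_i \overline U_i$, goes through verbatim. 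An analogous $1/\|\cdot\|^2$ rescaling is implicitly needed for $w$ in the identity $f(x) = f(x_\mU) - \langle w, x\rangle$ when $w$ is normalized by $\mynabla{w} f \equiv -1$ (replace $w$ by $w/\|w\|^2$, which stays rational and in $\mU^\perp$); the paper glosses over this in the same way you do, and it is harmless. Finally, a notational point: $Uu$ in your last step does not typecheck ($U$ is $k\times n$ and $u\in\R^k$); it should be $U^\top u$, i.e.\ $\sum_i u_i U_i$. With these scalings inserted, your proof coincides with the paper's.
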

Together, these statements immediately imply~\Cref{THM:quantstructure}. Indeed, one first decomposes $f$ according to~\eqref{EQ:quantdecomp}, and then applies~\Cref{PROP:quantstronglyconvex} to $\hat f$.  In the remainder of this section, we prove~\Cref{PROP:quantstronglyconvex} and~\Cref{PROP:quantdecomposition} separately.

\subsection{\texorpdfstring{A strongly convex lower bound: Proof of~\Cref{PROP:quantstronglyconvex}}{A strongly convex lower bound}}
Let $f \in \Q[x]$ be a convex polynomial of degree $d$ with $h_f \not \equiv 0$. For any $a \in \R^n$, we know from~\cite[Lemma 5]{Ahmadi2024} that
\[
    f(\x) \geq f(a) + \langle \nabla f (a), 
        \, \x-a \rangle + \frac{\langle x-a, \, \nabla^2 f(a) \cdot (x-a) \rangle}{4d^2} \quad (\forall x \in \R^n).
\]
If $\nabla^2 f(a) \succ 0$, this gives us a $\mu$-strongly convex, quadratic lower bound on $f$, namely,
\[
f(\x) \geq q(x) \coloneqq f(a) + \langle \nabla f (a), \x-a \rangle + \mu \cdot \|x-a\|^2  \quad (\forall 
x \in \R^n),
\]
where
\begin{equation*}
    \mu = \mu(a) = \frac{\lambda_{\min}(\nabla^2 f (a))}{2d^2} > 0.
\end{equation*}
Note that, if $a \in \Q^n$, then $\bc(q) \leq \poly(\enc{f},\, \bc(a))$. Thus, it suffices to find an $a \in \Q^n$ with $\bc(a) \leq \poly(\enc{f})$ such that $\mu(a) \geq 2^{-\poly(\enc{f})}$. To do so, we establish the following two facts:
\begin{fact} \label{Fact:nonzero}
    If $h_f \not \equiv 0$, then there
    there is an $a \in \Q^n$ with $\bc(a) \leq \poly(n, d)$ such that $h_f(a) > 0$. 
\end{fact}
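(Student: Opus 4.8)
The plan is to apply the Schwartz–Zippel lemma to the nonzero polynomial $h_f = \det(\nabla^2 f)$. First I would bound the degree of $h_f$: since $\nabla^2 f$ is an $n \times n$ matrix whose entries are polynomials of degree at most $d-2$, its determinant $h_f$ has degree at most $n(d-2) \leq nd$. Since $h_f \not\equiv 0$ by assumption, the Schwartz–Zippel lemma (\Cref{LEM:ZS}) guarantees that for any finite set $B_0 \subseteq \Z$ with $|B_0| > \deg(h_f)$, the product set $B_0^n \subseteq \Z^n$ contains at least one point $a$ with $h_f(a) \neq 0$. Taking, say, $B_0 = \{0, 1, \ldots, nd\}$ (which has $nd+1 > \deg(h_f)$ elements) yields such an $a \in \Z^n$ with every coordinate an integer in $[0, nd]$, hence of bit length $\bc(a_i) \leq \lceil \log_2(nd) \rceil + 1$, so $\bc(a) \leq n \cdot (\lceil \log_2(nd)\rceil + 1) \leq \poly(n, d)$.

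It remains to observe that $h_f(a) \neq 0$ already gives $h_f(a) > 0$: by convexity of $f$ we have $\nabla^2 f(a) \succeq 0$, so $h_f(a) = \det(\nabla^2 f(a))$ is a product of nonnegative eigenvalues and is therefore $\geq 0$; combined with $h_f(a) \neq 0$ this forces $h_f(a) > 0$, equivalently $\nabla^2 f(a) \succ 0$, as desired.

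There is essentially no serious obstacle here; the only mild subtlety is making sure the chosen sampling set yields integer points of genuinely small (i.e., logarithmic, hence polynomial in the bit-length sense) encoding size, which is immediate from the explicit construction above. The degree bound on $h_f$ and the Schwartz–Zippel application are both routine, and the passage from ``nonzero determinant'' to ``positive definite Hessian'' uses only convexity. One could alternatively quote that $\deg(h_f) \leq n \cdot \deg(f)$ directly from the Preliminaries, where it is already noted, and feed that bound into Schwartz–Zippel verbatim.
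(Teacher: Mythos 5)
Your proposal is correct and matches the paper's argument essentially verbatim: both bound $\deg(h_f)$ by $n\cdot\deg(f)$, apply the deterministic Schwartz--Zippel lemma (\Cref{LEM:ZS}) over the grid $\{0,1,\ldots,nd\}^n$ to obtain an integer point of bit length $O(n\log(nd))$, and use convexity ($h_f \geq 0$ everywhere) to upgrade $h_f(a) \neq 0$ to $h_f(a) > 0$. No issues.
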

\begin{fact} \label{Fact:bound}
    For any $a \in \Q^n$ with $\bc(a) \leq \poly(\enc{f})$ and $h_f(a) > 0$, we have $\mu(a) \geq 2^{-\poly(\enc{f})}$.
\end{fact}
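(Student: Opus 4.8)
The plan is to reduce Fact~\ref{Fact:bound} to a standard estimate on the smallest eigenvalue of a rational positive definite matrix of bounded bit length. Since $\mu(a) = \lambda_{\min}(\nabla^2 f(a)) / (2d^2)$ and $\log(2d^2) \le \poly(\enc{f})$, it suffices to show $\lambda_{\min}(M) \ge 2^{-\poly(\enc{f})}$ for $M \coloneqq \nabla^2 f(a)$. Note $M$ is positive definite: it is symmetric (being a Hessian), positive semidefinite (as $f$ is convex), and $\det M = h_f(a) > 0$ by hypothesis rules out a zero eigenvalue.

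First I would bound $\bc(M)$. Each entry $M_{ij} = (\partial^2 f / \partial x_i \partial x_j)(a)$ is obtained by differentiating $f$ twice — which multiplies each monomial coefficient by an integer of magnitude at most $d^2$, so by $O(\log d)$ extra bits — and then evaluating at $a$. In the sparse encoding $f$ has at most $\enc{f}$ nonzero monomials, each of degree at most $d \le \enc{f}$; since $\bc(a_i) \le \bc(a) \le \poly(\enc{f})$ for each $i$, a monomial $x^\alpha$ with $|\alpha| \le d$ satisfies $\bc(a^\alpha) \le d \cdot \max_i \bc(a_i) \le \poly(\enc{f})$. Summing at most $\poly(\enc{f})$ such terms over a common denominator keeps the bit length polynomial, so $\bc(M) \le \poly(\enc{f})$.

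Next I would clear denominators: let $D \in \N$ be the least common multiple of the denominators of the entries of $M$, so $\bc(D) \le \poly(\enc{f})$ and $N \coloneqq D \cdot M$ is a symmetric positive definite \emph{integer} matrix whose entries have absolute value at most $2^\beta$ with $\beta \le \poly(\enc{f})$. Since $N \succ 0$, $\det N$ is a positive integer, hence $\det N \ge 1$; and since all eigenvalues of $N$ are positive, listing them with multiplicity as $\lambda_1 \le \cdots \le \lambda_n$, we have $\lambda_n \le \mathrm{tr}(N) \le n 2^\beta$. Therefore
\[
    \lambda_{\min}(N) = \lambda_1 = \frac{\det N}{\lambda_2 \cdots \lambda_n} \;\ge\; \frac{1}{\big(n 2^\beta\big)^{n-1}} \;\ge\; 2^{-\poly(\enc{f})},
\]
so $\lambda_{\min}(M) = \lambda_{\min}(N)/D \ge 2^{-\poly(\enc{f})}$, and hence $\mu(a) \ge 2^{-\poly(\enc{f})}$. (This last step is exactly the Cramer's-rule / Hadamard-type bound on eigenvalues of integer and rational positive definite matrices recorded in \Cref{LEM:integermatrix} and \Cref{COR:rationalmatrix}.)

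The only delicate point — and the one I expect to be the main obstacle — is the bit-complexity bookkeeping of the first step: one must verify that evaluating $f$ and its second partial derivatives at $a$ does not blow up the bit length superpolynomially. This is precisely where the sparse encoding (at most $\enc{f}$ monomials) and the degree bound $d \le \enc{f}$ are essential, since a product $a^\alpha$ involves at most $d$ of the $a_i$'s and so has bit length only $\poly(\enc{f})$. Everything else is routine linear algebra over $\Z$.
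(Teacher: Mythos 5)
Your proposal is correct and follows essentially the same route as the paper: bound the bit length of the rational positive definite matrix $\nabla^2 f(a)$ by $\poly(\enc{f})$, clear denominators, and lower bound the smallest eigenvalue via $\det \geq 1$ together with a trace bound — which is precisely the content of \Cref{LEM:integermatrix} and \Cref{COR:rationalmatrix} that the paper invokes (you simply reprove that cited eigenvalue bound inline).
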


\subsubsection{Polynomial identity testing: proof of~\Cref{Fact:nonzero}}
\label{SEC:PROOF:smallnonzero}
The Schwartz-Zippel lemma is a standard tool in (probabilistic) polynomial identity testing.
It states that, for any polynomial $p \not\equiv 0$, and $S \subseteq \R$ finite,
\[
    \mathbb{P}_{x \, \sim \, \mathrm{Unif}(S^n)}\big[p(x) = 0\big] \leq \frac{\mathrm{deg}(p)}{|S|},
\]
where $\mathrm{Unif}(S^n)$ is the uniform distribution on $S^n \subseteq \R^n$. The following is a deterministic version:
\begin{lemma}[Schwartz-Zippel, see~{\cite[Corollary 1]{Schwartz1980}}] \label{LEM:ZS} Let $p \in \R[\x]$ be a nonzero polynomial of degree~$d$. Let $S = \{0, 1, \ldots, d\}$. Then, there is an $a \in S^n$ with $p(a) \neq 0$. 
\end{lemma}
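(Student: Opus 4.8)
The plan is to prove this by induction on the number of variables $n$, tracking the degree carefully so that the \emph{same} evaluation set $S = \{0, 1, \ldots, d\}$ can be reused at every stage of the recursion.

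For the base case $n = 1$, a nonzero univariate polynomial of degree at most $d$ has at most $d$ roots; since $|S| = d + 1 > d$, at least one element of $S$ is not a root, which gives the claim. For the inductive step, suppose the statement holds for polynomials in $n-1$ variables (of degree at most $d$). Given a nonzero $p \in \R[x_1, \ldots, x_n]$ of degree at most $d$, regard it as a univariate polynomial in $x_n$ with coefficients in $\R[x_1, \ldots, x_{n-1}]$, writing $p(x) = \sum_{i=0}^{k} c_i(x_1, \ldots, x_{n-1}) \, x_n^{\,i}$, where $k$ is the largest power of $x_n$ that occurs, so that $c_k \not\equiv 0$. Since the monomials of $c_k$ correspond to monomials of $p$ of total degree at least $k$, we have $\deg(c_k) \leq d - k \leq d$, so the induction hypothesis applies to $c_k$: there is a point $(a_1, \ldots, a_{n-1}) \in S^{n-1}$ with $c_k(a_1, \ldots, a_{n-1}) \neq 0$. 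Now the restriction $q(t) \coloneqq p(a_1, \ldots, a_{n-1}, t)$ is a univariate polynomial in $t$ whose coefficient of $t^k$ is $c_k(a_1, \ldots, a_{n-1}) \neq 0$; hence $q$ is a nonzero polynomial of degree exactly $k \leq d$. By the base case, there is $a_n \in S$ with $q(a_n) \neq 0$, and then $a = (a_1, \ldots, a_n) \in S^n$ satisfies $p(a) \neq 0$.

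There is no substantial obstacle here; the only point requiring a little care is the degree bookkeeping — one must check that the leading coefficient $c_k$ (in the variable $x_n$) again has degree at most $d$, so that the induction hypothesis is applicable with the very same set $S$, rather than a set depending on $k$. This is immediate from the observation that every monomial of $c_k$ arises from a monomial of $p$ by dividing out $x_n^{\,k}$, which can only decrease total degree.

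Alternatively, one may simply invoke \cite[Corollary 1]{Schwartz1980} directly, as the deterministic form of the Schwartz--Zippel lemma is exactly this statement; the inductive argument above is included only for self-containedness.
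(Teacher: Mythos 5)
Your proof is correct. Note that the paper itself does not prove \Cref{LEM:ZS} at all; it is invoked as a black box via the citation to \cite{Schwartz1980}, so there is no in-paper argument to compare against beyond that citation (which your last sentence also covers). Your self-contained induction on the number of variables is the standard argument underlying the cited result, and you handle the one delicate point correctly: since every monomial of the leading coefficient $c_k$ (with respect to $x_n$) comes from a monomial of $p$ after dividing out $x_n^k$, one has $\deg(c_k) \leq d-k \leq d$, so the inductive hypothesis applies with the same set $S = \{0,1,\ldots,d\}$, and the restricted univariate polynomial $q(t) = p(a_1,\ldots,a_{n-1},t)$ is nonzero of degree $k \leq d$, so the base case yields $a_n \in S$ with $q(a_n) \neq 0$. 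The only thing your write-up buys beyond the paper's treatment is self-containedness; conversely, the paper's choice to cite rather than prove is harmless since the statement is classical.
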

Applying the deterministic Schwartz-Zippel lemma to the Hessian determinant $h_f \not \equiv 0$ yields the desired point $a \in \Q^n$ of small bit size for which $h_f(a) > 0$ (recall that $h_f(x) \geq 0$ for all $x \in \R^n$).
\begin{corollary} \label{COR:ZSHessian}
    Let $p$ be a polynomial in $n$ variables of degree~$d$ and assume that $h_p \not \equiv 0$. Then, there is an $a \in \Z^n$ with $\bc(a) \leq O(n \log (dn))$ such that $h_p(a) = \det \left(\nabla^2 p (a) \right) \neq 0$.
\end{corollary}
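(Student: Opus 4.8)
The statement to prove is \Cref{COR:ZSHessian}: given a polynomial $p$ in $n$ variables of degree $d$ with $h_p \not\equiv 0$, there is an $a \in \Z^n$ with $\bc(a) \le O(n \log(dn))$ such that $h_p(a) \ne 0$.

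The plan is to apply the deterministic Schwartz--Zippel lemma (\Cref{LEM:ZS}) directly to the Hessian determinant $h_p$. The first step is to recall that $h_p(x) = \det(\nabla^2 p(x))$ is itself a polynomial in $x_1, \ldots, x_n$, and to bound its degree. Each entry of $\nabla^2 p$ is a second partial derivative of $p$, hence a polynomial of degree at most $d - 2 \le d$; since $h_p$ is an $n \times n$ determinant, it is a sum of products of $n$ such entries, so $\deg(h_p) \le n(d-2) \le nd$. (This degree bound is also stated in the Preliminaries: the hessian has degree at most $n \cdot \deg(p)$.)

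Next, since by hypothesis $h_p \not\equiv 0$, I would invoke \Cref{LEM:ZS} with the polynomial $h_p$ and the set $S = \{0, 1, \ldots, \deg(h_p)\} \subseteq \{0, 1, \ldots, nd\}$: there exists $a \in S^n$ with $h_p(a) \ne 0$. It suffices to take $S = \{0, 1, \ldots, nd\}$ (a possibly larger set still works, since containing the zero set of a degree-$\deg(h_p)$ polynomial fails for a set of size exceeding $\deg(h_p)$ in at least one coordinate — more carefully, one applies the lemma with the exact degree, but enlarging $S$ only helps). Then each coordinate $a_i$ is a nonnegative integer at most $nd$, so $\bc(a_i) \le \lceil \log_2(nd) \rceil + O(1)$, and summing over the $n$ coordinates gives $\bc(a) \le n \cdot O(\log(nd)) = O(n \log(dn))$, as claimed. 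Finally, since the excerpt notes $h_p(x) \ge 0$ for all $x$ when $p$ is convex, one could strengthen $h_p(a) \ne 0$ to $h_p(a) > 0$ in that setting, which is how the corollary is used in \Cref{Fact:nonzero}.

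There is essentially no obstacle here: the only thing to be careful about is the degree bound on $h_p$ and the bookkeeping that turns "coordinates in $\{0, \ldots, nd\}$" into the bit-length bound $O(n \log(dn))$. The substance of the argument — that a nonzero polynomial of moderate degree has a nonzero point on a small integer grid — is entirely contained in \Cref{LEM:ZS}, so this corollary is just an instantiation with the right parameters.
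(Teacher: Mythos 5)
Your proof is correct and follows essentially the same route as the paper: bound $\deg(h_p) \leq nd$, apply the deterministic Schwartz--Zippel lemma (\Cref{LEM:ZS}) to get $a \in \{0,1,\ldots,nd\}^n$ with $h_p(a) \neq 0$, and note that such an $a$ has bit length $O(n\log(dn))$. The extra remarks about strengthening to $h_p(a) > 0$ in the convex case match how the paper uses the corollary.
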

\begin{proof}
    By assumption, the Hessian determinant $h_p$ of $p$ is a nonzero polynomial in $n$ variables of degree at most $dn$. By \Cref{LEM:ZS}, there is thus an $a \in \{0, 1, \ldots, dn\}^n$ with $h_p(a) \neq 0$. This $a$ has bit size at most $O(n \log (dn))$.
\end{proof}

\subsubsection{Spectral bounds for rational matrices: proof of~\Cref{Fact:bound}} \label{SEC:PROOF:rationalmatrix}
Now, let $a \in \Q^n$ with $\bc(a) \leq \enc{f}$ and $h_f(a) > 0$. We aim to show that $\mu(a) \geq 2^{-\poly(\enc{f})}$, for which it suffices to show that $\lambda_{\min}(\nabla^2 f (a)) \geq 2^{-\poly(\enc{f})}$. 
For this, note that $\nabla^2 f (a) \succ 0$ is a rational matrix of bit size at most $\poly(\enc{f})$. Indeed, the entries of $\nabla^2 f(a)$ each have bit length at most $\poly(\enc{f}, \, \bc(a))$. The eigenvalues of such matrices can be bounded away from $0$ as follows.

\begin{lemma}[{\cite[Lemma 3.1]{Raghavendra2017}}] \label{LEM:integermatrix}
Let $M \in \Z^{N \times N}$ be a positive semidefinite matrix with ${|M_{ij}| \leq B}$ for all $1 \leq i, j \leq N$. Then $\lambda_{\min}^+(M) \geq (BN)^{-N}$.
\end{lemma}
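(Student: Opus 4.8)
The statement to prove is Lemma 3.1 from Raghavendra-Weitz (here \Cref{LEM:integermatrix}): for a positive semidefinite integer matrix $M \in \Z^{N \times N}$ with $|M_{ij}| \leq B$, the smallest nonzero eigenvalue satisfies $\lambda_{\min}^+(M) \geq (BN)^{-N}$.

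\medskip

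The plan is to reduce to the positive definite case and then use Cramer's rule to bound $\|M^{-1}\|$. First I would let $r = \rank(M)$ and note that the nonzero eigenvalues of $M$ are exactly the eigenvalues of some $r \times r$ principal submatrix $M'$ that is positive definite — indeed, one can pick $r$ linearly independent rows/columns (by symmetry, the same index set works), and the resulting principal submatrix $M'$ is PSD of full rank $r$, hence positive definite; moreover its eigenvalues interlace/relate to those of $M$ in a way that gives $\lambda_{\min}^+(M) \geq \lambda_{\min}(M')$ (this requires a small argument: the nonzero spectrum of $M$ equals the spectrum of the compression of $M$ to its column space, and comparing that compression to the principal submatrix $M'$ via the fact that $M'$'s Rayleigh quotients are a subset of $M$'s restricted to the relevant subspace — alternatively, use that $\det$ of leading minors controls things). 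So it suffices to bound $\lambda_{\min}(M')$ from below for a positive definite integer matrix $M'$ of size $r \times r$ with entries bounded by $B$.

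\medskip

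For the positive definite case: since $M'$ has integer entries, $\det(M') \geq 1$ (it is a positive integer because $M' \succ 0$ implies $\det(M') > 0$, and it is an integer). Now $\lambda_{\min}(M') = 1 / \lambda_{\max}(M'^{-1}) = 1/\|M'^{-1}\|_{\mathrm{op}}$. By Cramer's rule, each entry of $M'^{-1}$ equals $\pm \det(\text{minor})/\det(M')$, and $|\det(M')| \geq 1$, so $|(M'^{-1})_{ij}| \leq |\det(\text{minor})|$. The minors are $(r-1) \times (r-1)$ submatrices with entries bounded by $B$, so by Hadamard's inequality (or the crude bound: permutation expansion with $r!$ terms each at most $B^r$, or the cleaner $|\det| \leq \prod \|\text{row}\| \leq (B\sqrt{r})^r$) we get $|\det(\text{minor})| \leq (B \sqrt{r-1})^{\,r-1} \leq (BN)^{N}$ comfortably. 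Then $\|M'^{-1}\|_{\mathrm{op}} \leq N \cdot \max_{ij}|(M'^{-1})_{ij}| \leq (BN)^N$ (absorbing the extra factor $N$ and the $\sqrt{r}$ terms into the bound, possibly with a slightly different constant, but the stated $(BN)^{-N}$ form holds after checking constants carefully — one should verify $r \cdot (B\sqrt{r})^{r-1} \leq (BN)^N$, which holds since $r \leq N$ and $B, N \geq 1$). Therefore $\lambda_{\min}(M') = 1/\|M'^{-1}\|_{\mathrm{op}} \geq (BN)^{-N}$, and combining with the first paragraph, $\lambda_{\min}^+(M) \geq (BN)^{-N}$.

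\medskip

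I expect the main obstacle to be the reduction step — rigorously justifying that $\lambda_{\min}^+(M) \geq \lambda_{\min}(M')$ for the correctly chosen principal submatrix. The cleanest route is: let $V = \mathrm{colspace}(M) = \mathrm{rowspace}(M)$, so $M$ restricted to $V$ is positive definite with the same nonzero eigenvalues. Choose an index set $I$ of size $r$ such that the coordinate projection $\pi_I : V \to \R^I$ is a bijection (possible since $\mathrm{rank}(M) = r$ means some $r$ columns are independent, and by symmetry $I$ can be taken so the principal submatrix $M_{I,I}$ is invertible). Then for the specific unit vectors $e_i$, $i \in I$, one can relate Rayleigh quotients, but the slickest argument is simply: $M_{I,I}$ is a positive definite integer matrix (PSD as a principal submatrix of a PSD matrix, nonsingular by choice), so the positive-definite analysis applies to $M_{I,I}$ directly, and we only need $\lambda_{\min}^+(M) \geq \lambda_{\min}(M_{I,I})$. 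This last inequality follows because the eigenvalues of $M_{I,I}$ interlace those of $M$ (Cauchy interlacing for principal submatrices) combined with $M_{I,I}$ having the full $r = \mathrm{rank}(M)$ as its size; interlacing gives $\lambda_k(M_{I,I}) \geq \lambda_{k + (N-r)}(M)$ in decreasing order, and taking $k = r$ yields $\lambda_{\min}(M_{I,I}) \geq \lambda_N(M)$; but $\lambda_N(M) = 0$ if $r < N$, which is the wrong direction — so one must instead use that all of $\lambda_1(M), \ldots, \lambda_r(M)$ are the nonzero eigenvalues and $\lambda_{\min}(M_{I,I}) = \lambda_r(M_{I,I}) \geq \lambda_r(M) = \lambda_{\min}^+(M)$ by interlacing applied appropriately (deleting $N - r$ coordinates shifts the index by at most $N-r$, and $\lambda_r(M_{I,I}) \ge \lambda_{r + (N-r)}(M) = \lambda_N(M)$ is again too weak). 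The fix: use the reverse interlacing direction — $\lambda_j(M_{I,I}) \le \lambda_j(M)$ is false in general; rather $\lambda_j(M) \ge \lambda_j(M_{I,I}) \ge \lambda_{j+N-r}(M)$. For $j = r$: $\lambda_r(M_{I,I}) \ge \lambda_N(M) = 0$, useless; $\lambda_r(M) \ge \lambda_r(M_{I,I})$, also the wrong direction. So interlacing alone does not suffice, and the honest approach is the Rayleigh-quotient/change-of-basis argument on $V$: since $\pi_I|_V$ is an isomorphism, for any $y \in \R^I$ write $y = \pi_I(v)$, $v \in V$, and compare $y^\top M_{I,I} y$ with $v^\top M v$ — but these are not equal. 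The genuinely correct and simplest fix is to avoid principal submatrices entirely: instead bound $\lambda_{\min}^+(M)$ via $\lambda_{\min}^+(M) \cdot \lambda_{\max}(M)^{\,r-1} \le \prod_{i: \lambda_i > 0} \lambda_i = $ (product of nonzero eigenvalues) $= $ (sum of $r\times r$ principal minors of $M$, or any nonzero $r \times r$ minor up to sign, which is a nonzero integer hence $\ge 1$ in absolute value by the Cauchy-Binet / compound matrix identity), giving $\lambda_{\min}^+(M) \ge \lambda_{\max}(M)^{-(r-1)} \ge (N B)^{-(N-1)} \ge (NB)^{-N}$, using $\lambda_{\max}(M) \le \mathrm{tr}(M) \le NB$ and that the product of the nonzero eigenvalues equals the largest-order nonzero coefficient of the characteristic polynomial, which is $\pm$ a sum of $r \times r$ principal minors, an integer that is nonzero (it equals the product of nonzero eigenvalues $> 0$) hence $\ge 1$. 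This cleanly gives the claimed bound, and this is the step to get right.
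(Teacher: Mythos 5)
Your final argument is correct, and it is worth noting that the paper itself does not prove this lemma at all: it is quoted verbatim from Raghavendra--Weitz \cite[Lemma 3.1]{Raghavendra2017}, so the only meaningful comparison is with that cited source. The argument you settle on is a clean, self-contained proof: writing $r = \rank(M)$, the product of the nonzero eigenvalues equals the degree-$r$ elementary symmetric function of the spectrum, i.e.\ the sum of all $r \times r$ principal minors of $M$; for an integer PSD matrix this is a positive integer, hence at least $1$, and combining with $\lambda_{\max}(M) \leq \mathrm{tr}(M) \leq NB$ gives $\lambda_{\min}^+(M) \geq \lambda_{\max}(M)^{-(r-1)} \geq (NB)^{-(r-1)} \geq (NB)^{-N}$ (the degenerate cases $M = 0$ and $r = 1$ being trivial). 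This is essentially the same spirit as the cited proof and checks out.

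Two smaller remarks. First, your parenthetical ``or any nonzero $r\times r$ minor up to sign'' is not right --- the product of the nonzero eigenvalues is the \emph{sum} of the principal $r \times r$ minors (the relevant characteristic-polynomial coefficient), not an arbitrary nonzero minor --- but your argument only uses the correct identity, so nothing breaks. Second, you were right to distrust the interlacing route, and your diagnosis of why it fails is accurate; however, the reduction you abandoned is in fact true: if $M_{I,I}$ is a nonsingular $r \times r$ principal submatrix with $r = \rank(M)$, then writing $M = \bigl(\begin{smallmatrix} A & B \\ B^\top & C \end{smallmatrix}\bigr)$ with $A = M_{I,I}$, the rank condition forces $C = B^\top A^{-1} B$, so $M = XX^\top$ with $X^\top = \bigl( A^{1/2} \;\; A^{-1/2} B \bigr)$, and the nonzero eigenvalues of $M$ are those of $X^\top X = A + A^{-1/2} B B^\top A^{-1/2} \succeq A$, giving $\lambda_{\min}^+(M) \geq \lambda_{\min}(M_{I,I})$. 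So your first (Cramer-rule) approach could have been completed via this Schur-complement observation, but the argument you ended with is simpler and suffices.
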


\begin{corollary} \label{COR:rationalmatrix}
Let $M \in \Q^{N \times N}$ be a positive definite matrix. Then,
$
\lambda_{\min}(M) \geq 2^{-\poly(\bc(M))}.
$
\end{corollary}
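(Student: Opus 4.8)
The plan is to reduce the rational case to the integer case of \Cref{LEM:integermatrix} by clearing denominators, and then to check that all the resulting quantities ($N$, the entry bound $B$, and the scaling factor) are $2^{\poly(\bc(M))}$.

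Concretely, write each entry of $M$ in lowest terms as $M_{ij} = p_{ij}/q_{ij}$ with $p_{ij} \in \Z$, $q_{ij} \in \N_{\geq 1}$, and set $D \coloneqq \prod_{i,j} q_{ij}$. Then $M' \coloneqq D \cdot M$ is an integer matrix, and since $D > 0$ and $M \succ 0$, $M'$ is still symmetric positive definite; in particular $M'$ is positive semidefinite and $\lambda_{\min}^+(M') = \lambda_{\min}(M')$. Applying \Cref{LEM:integermatrix} to $M'$ with entry bound $B \coloneqq \max_{i,j} |M'_{ij}|$ gives $\lambda_{\min}(M') \geq (BN)^{-N}$, and since eigenvalues scale linearly, $\lambda_{\min}(M) = \lambda_{\min}(M') / D \geq (BN)^{-N} / D$.

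It remains to bound $D$, $B$, and $N$ in terms of $\bc(M)$. Since $\bc(M) = \sum_{i,j} \big(\bc(p_{ij}) + \bc(q_{ij})\big)$ and each $\bc(q_{ij}) \geq \log_2 q_{ij}$, we get $\log_2 D = \sum_{i,j} \log_2 q_{ij} \leq \bc(M)$, so $D \leq 2^{\bc(M)}$. Each entry of $M'$ equals $p_{ij} \cdot \prod_{(k,l) \neq (i,j)} q_{kl}$ in absolute value at most $2^{\bc(p_{ij})} \cdot D \leq 2^{\bc(M)} \cdot 2^{\bc(M)}$, so $B \leq 2^{2\bc(M)}$. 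Finally, each of the $N^2$ entries of $M$ contributes at least one bit, so $N^2 \leq \bc(M)$ and thus $N \leq \bc(M)$. Substituting, $\lambda_{\min}(M) \geq 2^{-N \log_2(BN) - \log_2 D} \geq 2^{-\bc(M)(2\bc(M) + \log_2 \bc(M)) - \bc(M)} = 2^{-\poly(\bc(M))}$, as desired.

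There is no serious obstacle here: the only things to be careful about are (i) confirming that multiplying by the positive scalar $D$ keeps the matrix positive definite, so that $\lambda_{\min}^+(M') = \lambda_{\min}(M')$ and \Cref{LEM:integermatrix} actually controls the quantity we want, and (ii) the routine but slightly fiddly arithmetic showing that $D$ and the entry bound $B$ of $M' = DM$ are both at most singly exponential in $\bc(M)$ — this uses that the bit length of a reduced fraction dominates the logarithm of its denominator, and that a product of denominators has bit length bounded by the sum of their bit lengths.
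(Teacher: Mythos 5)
Your proof is correct and follows essentially the same route as the paper: clear denominators to reduce to an integer positive definite matrix, apply \Cref{LEM:integermatrix}, and undo the scaling, with all quantities bounded by $2^{\poly(\bc(M))}$. The only cosmetic difference is that you scale by the product of all denominators rather than their least common multiple, which changes nothing since its logarithm is still at most $\bc(M)$.
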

\begin{proof}
    Let $C \in \Z$ be the least common multiple of the denominators of the entries of $M$. 
    Note that $\bc(C) \leq \poly(\bc(M))$, and so $C \cdot M$ is an integer matrix whose entries are bounded in magnitude by $2^{\poly(\bc(M))}$. Applying~\Cref{LEM:integermatrix} yields
    \[
        \lambda_{\min}(M) = \frac{1}{C} \cdot \lambda_{\min}(C \cdot M) \geq \frac{1}{2^{\poly(\bc(M))}} \cdot(2^{\poly(\bc(M))} \cdot N)^{-N} \geq 2^{-\poly(\bc(M))}. \qedhere
    \]
\end{proof}

\subsection{\texorpdfstring{A Hessian decomposition theorem: Proof of~\Cref{PROP:quantdecomposition}}{A Hessian decomposition theorem}}
The primary tool in our proof of~\Cref{PROP:quantdecomposition} is the following characterization of convex functions whose Hessian determinant is identically zero.
\begin{restatable}{proposition}{Hessianvanishes}
    \label{PROP:Hessianvanishes}
    Let $\varphi: \R^n \to \R$ be twice continuously differentiable and convex. Then, $\varphi$ has a direction of linearity if and only if its Hessian is nowhere definite:
    \[
        \det \left(\nabla^2 \varphi \right) \equiv 0 \iff \exists v \in \R^n \setminus \{ 0 \}, \, c \in \R : \mynabla{v} \varphi \equiv c.
    \]
\end{restatable}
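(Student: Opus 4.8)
The plan is to prove the two directions separately. The backward implication is essentially immediate: if $\nabla_{\!v}\varphi \equiv c$ for a fixed nonzero $v$, then differentiating this identity in any direction gives $\langle v, \nabla^2\varphi(x)\,u\rangle = 0$ for all $x,u$, so $v \in \ker\nabla^2\varphi(x)$ for every $x$, whence $\det(\nabla^2\varphi(x)) = 0$ everywhere. The content is in the forward implication, and here I would follow the route sketched in the technical overview: study the range $G \coloneqq \nabla\varphi(\R^n) \subseteq \R^n$ of the gradient map and show it lies in an affine hyperplane.

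First I would establish that $G$ has Lebesgue measure zero. Since $\varphi$ is twice continuously differentiable, $\nabla\varphi : \R^n \to \R^n$ is $C^1$, and its Jacobian at $x$ is exactly $\nabla^2\varphi(x)$. The hypothesis $\det(\nabla^2\varphi) \equiv 0$ says the Jacobian is singular everywhere, i.e.\ every point of $\R^n$ is a critical point of $\nabla\varphi$. By Sard's theorem (cited as \Cref{THM:Sard} in the excerpt), the set of critical \emph{values} — which here is all of $G$ — has measure zero in $\R^n$. Second, I would invoke the "almost convexity" of the gradient range: for a differentiable convex function, $G = \nabla\varphi(\R^n)$ contains the interior of its convex hull $\conv(G)$ (this is the classical fact attributed to Minty, for which the paper promises a short self-contained proof via convex conjugates, \Cref{LEM:convexgradient}). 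Combining the two: if $G$ were not contained in any affine hyperplane, then $\conv(G)$ would be full-dimensional, hence would contain an $n$-dimensional simplex, hence $\mathrm{int}(\conv(G))$ would be nonempty and of positive measure; but $\mathrm{int}(\conv(G)) \subseteq G$, contradicting $|G| = 0$. Therefore $G$ lies in some affine hyperplane $\{y : \langle v, y\rangle = c\}$ with $v \neq 0$, which says precisely $\nabla_{\!v}\varphi(x) = \langle v, \nabla\varphi(x)\rangle = c$ for all $x$ — a direction of linearity.

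I would be slightly careful about two points. One is measurability: $G$ is the continuous image of $\R^n$, so it is a countable union of images of compact sets, hence Lebesgue measurable (in fact Borel up to the usual subtleties — it suffices that it is contained in a measurable set of measure zero, which Sard gives directly, so measurability of $G$ itself is not even needed). The other is the simplex/dimension step: "not contained in a hyperplane" means $\aff(G) = \R^n$, so $G$ contains $n+1$ affinely independent points, and their convex hull is a genuine $n$-simplex with nonempty interior contained in $\mathrm{int}(\conv(G)) \subseteq G$; this is where almost convexity is used.

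The main obstacle, and the only nontrivial ingredient, is the almost-convexity statement $\mathrm{int}(\conv(G)) \subseteq G$ for $G = \nabla\varphi(\R^n)$. A clean way to get it: pass to the convex conjugate $\varphi^*(y) = \sup_x\,(\langle y,x\rangle - \varphi(x))$. One shows that $y \in G$ iff the supremum defining $\varphi^*(y)$ is attained, i.e.\ iff $y \in \partial\varphi(x)$ for some $x$, which (since $\varphi$ is differentiable) is the single point $\nabla\varphi(x)$. A point $y$ in the interior of $\conv(G) \subseteq \mathrm{dom}(\varphi^*)$ lies in the interior of $\mathrm{dom}(\varphi^*)$, where $\partial\varphi^*(y) \neq \emptyset$; picking $x \in \partial\varphi^*(y)$ and using biconjugacy ($\varphi^{**} = \varphi$ since $\varphi$ is convex and continuous) gives $y \in \partial\varphi(x) = \{\nabla\varphi(x)\}$, so $y \in G$. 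This is the step I expect to spell out in full detail, and everything else is routine.
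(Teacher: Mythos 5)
Your proposal is correct and follows essentially the same route as the paper's proof: the backward direction by differentiating the identity $\mynabla{v}\varphi \equiv c$, and the forward direction by combining Sard's theorem (applied to $\nabla\varphi$, whose Jacobian is the everywhere-singular Hessian) with the almost-convexity of the gradient range, proved via the convex conjugate exactly as in \Cref{LEM:convexgradient}. The paper's argument for the hyperplane step and the subgradient/biconjugacy details match what you outline, so nothing further is needed.
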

To establish this proposition, we consider $\nabla \varphi$ as function $\R^n \to \R^n$, and  investigate its range 
\[
    \nabla \varphi(\R^n) \coloneqq \{ \nabla\varphi(x) : x \in \R^n \} \subseteq \R^n.
\]
We aim to show that $\det \left(\nabla^2{\varphi}\right) \equiv 0$ if and only if $\nabla \varphi(\R^n)$ is contained in an affine hyperplane. First, we recall that that the range of the gradient of a convex function is \emph{almost convex}.
\begin{lemma}[cf. {\cite[Corollary 2]{Minty1964}}] \label{LEM:convexgradient}
    Let $\varphi : \R^n \to \R$ be  convex and continuously differentiable. Then the range of the gradient of $\varphi$ is \emph{almost convex}, i.e., we have $\mathrm{int}(\mathrm{conv}(\nabla \varphi (\R^n))) \subseteq \nabla \varphi (\R^n) $.
\end{lemma}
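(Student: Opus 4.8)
The plan is to prove the lemma through the convex conjugate (Legendre--Fenchel transform) $\varphi^*(y) \coloneqq \sup_{x \in \R^n}\bigl(\langle y, x\rangle - \varphi(x)\bigr)$. The idea is that the range $\nabla\varphi(\R^n)$ is pinched between $\mathrm{dom}\,\varphi^* \coloneqq \{y : \varphi^*(y) < \infty\}$ and its interior; since $\mathrm{dom}\,\varphi^*$ is convex (as the effective domain of a convex function), taking convex hulls and then interiors yields the statement. Concretely, the key step I would establish first is the identity
\[
    \nabla\varphi(\R^n) \;=\; \{\, y \in \R^n : \partial\varphi^*(y) \neq \emptyset \,\}.
\]

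For this identity, note that $\varphi$, being finite-valued and convex, is continuous, hence proper and lower semicontinuous, so $\varphi^{**} = \varphi$ by Fenchel--Moreau. The Fenchel--Young equality then characterizes subgradients on both sides: for all $x, y \in \R^n$,
\[
    y \in \partial\varphi(x) \iff \varphi(x) + \varphi^*(y) = \langle x, y\rangle \iff x \in \partial\varphi^*(y).
\]
Since $\varphi$ is $C^1$, $\partial\varphi(x) = \{\nabla\varphi(x)\}$, so $y = \nabla\varphi(x) \iff x \in \partial\varphi^*(y)$, giving the displayed identity. In particular $\nabla\varphi(\R^n) \subseteq \mathrm{dom}\,\varphi^*$, because $\partial\varphi^*(y) \neq \emptyset$ forces $\varphi^*(y)$ to be finite.

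It then remains to invoke the standard fact that a proper convex function is subdifferentiable at every point of the interior of its domain (there it is finite and convex on a neighborhood, hence continuous, and continuous convex functions have nonempty subdifferential). Applied to $\varphi^*$, this gives $\mathrm{int}(\mathrm{dom}\,\varphi^*) \subseteq \nabla\varphi(\R^n)$. Combining with $\conv(\nabla\varphi(\R^n)) \subseteq \mathrm{dom}\,\varphi^*$ (valid since the right-hand side is convex) and monotonicity of the interior,
\[
    \mathrm{int}\bigl(\conv(\nabla\varphi(\R^n))\bigr) \;\subseteq\; \mathrm{int}(\mathrm{dom}\,\varphi^*) \;\subseteq\; \nabla\varphi(\R^n),
\]
which is exactly the claimed almost convexity (the case $\mathrm{int}(\conv(\nabla\varphi(\R^n))) = \emptyset$ being trivial). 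I do not expect a genuine obstacle here; the only thing requiring care is matching the hypotheses of $\varphi$ to the classical convex-analytic statements used (Fenchel--Moreau, the Fenchel--Young subgradient characterization, subdifferentiability on the interior of the domain). If one prefers to avoid convex duality entirely, an equivalent route is: fix $y_0 \in \mathrm{int}(\conv(\nabla\varphi(\R^n)))$, cover a small box around $y_0$ by $\conv\{\nabla\varphi(x_1), \ldots, \nabla\varphi(x_m)\}$ for finitely many $x_j$ (Carathéodory applied at the box's vertices), add up the tangent-line inequalities $\varphi(z) \geq \varphi(x_j) + \langle\nabla\varphi(x_j), z - x_j\rangle$ with appropriate convex weights to show that $z \mapsto \varphi(z) - \langle y_0, z\rangle$ is coercive, and conclude that this convex $C^1$ function attains a minimum $x^*$ with $\nabla\varphi(x^*) = y_0$.
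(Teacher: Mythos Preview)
Your proposal is correct and is essentially the same argument the paper gives: both pass to the convex conjugate $\varphi^*$, identify $\nabla\varphi(\R^n)$ with the set of points where $\varphi^*$ is subdifferentiable (via the Fenchel--Young/subgradient duality), and then sandwich this set between $\mathrm{dom}\,\varphi^*$ and its interior using subdifferentiability on the interior of the effective domain. The only cosmetic difference is that the paper cites Rockafellar's Theorems 12.2, 23.4, and 23.5 directly rather than naming Fenchel--Moreau and Fenchel--Young.
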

\begin{proof}
    The \emph{convex conjugate} of $\varphi : \R^n \to \R$ is the map 
    \[
        \varphi^* : \R^n \to \R \cup \{\infty\}, \quad v \mapsto \sup_{x \in \R^n} \left\{ \langle v, x \rangle - \varphi(x) \right\}.
    \]
    The function $\varphi^*$ is itself convex (since it is the supremum of affine functions). Furthermore, $\varphi^* \not\equiv \infty$ (i.e., $\varphi^*$ is a \emph{proper} convex function) \cite[Theorem 12.2]{Rockafellar1970}.
    Lastly, for any $x, v \in \mathbb{R}^n$, $v$ is a subgradient of $\varphi$ at $x$ if and only if $x$ is a subgradient of $\varphi^*$ at $v$ \cite[Theorem 23.5]{Rockafellar1970}. 
    
    Now, since $\varphi$ is differentiable, we get
    \begin{equation} \label{EQ:subgradient}
    \nabla \varphi (\R^n) = \{v \in \R^n : \varphi^* \text{ has a subgradient at } v\}.
    \end{equation}
    We give a convex set $C$ such that $\mathrm{int}(C) \subseteq \nabla \varphi (\R^n) \subseteq C$.
    This immediately implies that $\nabla \varphi (\R^n)$ is almost convex, since we then have $\mathrm{conv}(\nabla \varphi (\R^n)) \subseteq C$, and thus also $\mathrm{int}(\mathrm{conv}(\nabla \varphi (\R^n))) \subseteq \mathrm{int}(C)$.
    
    Let $C \coloneqq \{v \in \R^n \mid \varphi^*(v) \neq \infty\}$ be the \emph{effective domain} of $\varphi^*$.
    Since $\varphi^* \not\equiv \infty$, we find that~$\varphi^*$ has no subgradient at any $v \in \R^n$ with $\varphi^*(v) = \infty$. 
    Hence, using~\eqref{EQ:subgradient}, we get $\nabla \varphi (\R^n) \subseteq C$.
    On the other hand, because $\varphi^*$ is a proper convex function, it has a subgradient at any point in the relative interior of its effective domain~\cite[Theorem 23.4]{Rockafellar1970}.
    Thus, $\mathrm{int}(C) \subseteq \mathrm{relint}(C) \subseteq \nabla \varphi(\R^n)$, which completes the proof.
\end{proof}
Then, we observe that $\nabla^2 \varphi$ is the Jacobian of $\nabla \varphi$. Thus, if $\det \left(\nabla^2 \varphi\right) \equiv 0$, we find that $\nabla \varphi(\R^n)$ has Lebesgue measure zero by Sard's theorem.
\begin{theorem}[Sard's theorem]  \label{THM:Sard} 
    Let $\psi : \R^n \to \R^n$ be continuously differentiable. Let 
    \[
        X \coloneqq \{ x \in \R^n : \det \left(\nabla \psi(x) \right) = 0\}.
    \]
    Then $\psi(X) \subseteq \R^n$ has Lebesgue measure zero. 
\end{theorem}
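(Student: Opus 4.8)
The plan is to prove this equidimensional case of Sard's theorem directly by a covering argument, exploiting that $C^1$-regularity already suffices when the domain and codomain have the same dimension. Since Lebesgue outer measure is countably subadditive and $\R^n$ is a countable union of closed cubes, it is enough to fix one closed cube $Q \subseteq \R^n$, of side length $\ell$, and show that $\psi(X \cap Q)$ is a null set; summing over a countable cover of $\R^n$ then finishes everything. Set $M \coloneqq \sup_{x \in Q}\|\nabla\psi(x)\|$, which is finite because $\nabla\psi$ is continuous on the compact set $Q$.

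Fix $\epsilon \in (0,1]$. By uniform continuity of $\nabla\psi$ on $Q$, choose $\delta > 0$ with $\|\nabla\psi(x) - \nabla\psi(y)\| < \epsilon$ whenever $x, y \in Q$ and $\|x - y\| < \delta$, and subdivide $Q$ into $N^n$ congruent subcubes of side $\ell/N$, taking $N$ large enough that each subcube has diameter $\sqrt{n}\,\ell/N < \delta$. The crux is a first-order estimate on each subcube $C$ that meets $X$. Pick a point $x_0 \in C \cap X$; for any $x \in C$, writing $\psi(x) - \psi(x_0) - \nabla\psi(x_0)(x-x_0) = \int_0^1 \bigl(\nabla\psi(x_0 + t(x-x_0)) - \nabla\psi(x_0)\bigr)(x-x_0)\, dt$, the whole segment lies in $C \subseteq Q$ within distance $\delta$ of $x_0$, so the integrand has norm at most $\epsilon\|x - x_0\|$. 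Hence $\psi(x)$ lies within distance $\epsilon\sqrt{n}\,\ell/N$ of the affine subspace $\psi(x_0) + \nabla\psi(x_0)(\R^n)$, and, since also $\|\nabla\psi(x_0)(x-x_0)\| \le M\|x-x_0\|$, within distance $(M+1)\sqrt{n}\,\ell/N$ of the point $\psi(x_0)$.

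Because $\det\nabla\psi(x_0) = 0$, the image $\nabla\psi(x_0)(\R^n)$ lies in some hyperplane, so $\psi(C)$ is contained in the intersection of a ball of radius $(M+1)\sqrt{n}\,\ell/N$ about $\psi(x_0)$ with a slab of width $2\epsilon\sqrt{n}\,\ell/N$ around an affine hyperplane. Projecting onto that hyperplane and onto its normal line, such a set fits inside a box of volume at most $\bigl(2(M+1)\sqrt{n}\,\ell/N\bigr)^{n-1}\cdot 2\epsilon\sqrt{n}\,\ell/N = c(n,\ell,M)\cdot\epsilon\cdot N^{-n}$. Summing over the at most $N^n$ subcubes that meet $X$, we get that $\psi(X \cap Q)$ has outer measure at most $c(n,\ell,M)\cdot\epsilon$. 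Letting $\epsilon \to 0$ shows $\psi(X\cap Q)$ is null, as required.

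I expect the only point genuinely requiring care is the first-order remainder estimate under the bare $C^1$ hypothesis: it must be obtained from the integral representation of the remainder together with uniform continuity of $\nabla\psi$, rather than from a Taylor expansion using a bound on second derivatives — this is precisely why $C^1$ is enough in the equidimensional setting but not in general. The remaining work — choosing $N$, tallying the slab volume, and checking that the $N^n$ subcubes and the $N^{-n}$ per-cube volume cancel to leave a bare factor of $\epsilon$ — is routine bookkeeping.
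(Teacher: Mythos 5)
Your proof is correct. The paper does not prove this statement at all: it invokes Sard's theorem as a classical result and immediately applies it to $\nabla f$, so there is no argument of the paper's to compare against. What you supply is the standard elementary proof of the equidimensional ($C^1$) case, and it is complete: the reduction to a single cube by countable subadditivity, the choice of $\delta$ by uniform continuity of $\nabla\psi$ on the compact cube, the remainder estimate via the integral form $\int_0^1\bigl(\nabla\psi(x_0+t(x-x_0))-\nabla\psi(x_0)\bigr)(x-x_0)\,dt$ (which is exactly the right way to avoid needing second derivatives), the containment of $\psi(C)$ in a slab of thickness $O(\epsilon/N)$ around the affine hyperplane through $\psi(x_0)$ (using $\det\nabla\psi(x_0)=0$ so the range lies in a hyperplane), the per-cube volume bound $c(n,\ell,M)\,\epsilon\,N^{-n}$, and the cancellation against the $N^n$ cubes, leaving outer measure at most $c\,\epsilon$ independent of $N$. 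The only points worth being explicit about in a written version are minor: the rotated box has the stated Lebesgue measure by rotation invariance, a set of outer measure zero is automatically measurable with measure zero, and $\epsilon\le 1$ is used to absorb the remainder into the radius $(M+1)\sqrt{n}\,\ell/N$; none of these is a gap. In short, your argument makes the paper's black-box citation self-contained at the cost of about a page, which is a reasonable trade if one wants the paper not to depend on the literature for this step.
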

\begin{proof}[Proof of~\Cref{PROP:Hessianvanishes}]
    First, assume that $\mynabla{v} \varphi \equiv c$ for some $v \in \R^n \setminus \{0\}$, $c \in \R$. Then, ${\mynabla{w} \mynabla{v} \varphi(x) = 0}$ for all $w \in \R^n$ and $x \in \R^n$, meaning $\nabla^2 \varphi(x) \cdot v = 0$. In particular, $\det \left( \nabla^2 \varphi(x) \right) = 0$.

    Now, assume $\det \left(\nabla^2 \varphi\right) \equiv 0$. By the above, this implies that $\nabla \varphi(\R^n)$ is an almost convex set of Lebesgue measure $0$. But that means $\nabla \varphi(\R^n)$ is contained in an affine hyperplane, for, if not, the interior of its convex hull would contain (the interior of) a simplex of positive measure. Thus, there is a $v \in \R^n \setminus \{0\}$, and $c \in \R$ such that $\langle y, v \rangle = c$ for all $y \in \nabla\varphi(\R^n)$. In other words, $\mynabla{v} \varphi \equiv c$.
\end{proof}

\begin{proof}[Proof of~\Cref{PROP:quantdecomposition}]
With \Cref{PROP:Hessianvanishes} in hand, we are able to identify the subspace $\mU \subseteq \R^n$ and vector $w \in \Q^n$ that feature in~\Cref{PROP:quantdecomposition}. Let $f \in \Q[x]$ be a convex polynomial. Consider the joint support 
\[
    S = \{ 0 \} \cup \bigcup_{i=1}^n \, \supp{\partial f / \partial x_i} \subseteq \N^n
\]
of the partial derivatives of $f$, which is of size $|S| \leq n \cdot |\supp{f}| + 1$. (Here, we have added $0 \in \N^n$ for technical reasons that become clear shortly.) Set $\mathcal{R} = \spann {x^\alpha : \alpha \in S} \subseteq \R[x]$. We view the gradient of $f$ as a linear operator $\R^n \to \mathcal{R}$, namely
\[
    \nabla f : v \mapsto \mynabla{v} f = \sum_{i=1}^n v_i \cdot \frac{\partial f}{\partial x_i} \in \mathcal{R} \quad (v \in \R^n).
\]
Importantly, this operator can be expressed as a rational matrix $\mat{\nabla f}$ of size $|S| \times n$, where the $i$th column represents the expansion of $\partial f / \partial x_i$ in the (partial) monomial basis. Moreover, this matrix has bit size at most $\poly(\enc{f})$.  
Write~${\mathds{1} \in \Q^{|S|}}$ for the expansion of the constant polynomial $x \mapsto 1$ (which exists as $0 \in S$). In light of~\Cref{PROP:Hessianvanishes}, we consider the subspace
\[
    \mL \coloneqq \{ v \in \R^n : \mynabla{v} f \equiv c \text{ for some } c \in \R\} =  \{ v \in \R^n : \mat{\nabla f} \cdot v \in \spann{\mathds{1}}\}.
\]
Since $\mL$ is the inverse image of a subspace of dimension $1$, we can write 
\[
    \mL = \ker \left(\mat{\nabla f}\right) \oplus \spann{w},
\]
where $w \in \ker \left(\mat{\nabla f}\right)^\perp$ is either zero, or satisfies $\mat{\nabla f} \cdot w = -\mathds{1}$. Then, we set $\mU = \mL^\perp$, and write $\mV = \ker \left(\mat{\nabla f}\right)$, $\mW = \spann{w}$. By construction, we have
\[
    f(x) = f(x_{\mU} + x_{\mV} + x_{\mW}) = f(x_\mU) - \langle w, x \rangle.
\]
We can compute $w$, and an orthogonal basis $U_1, \ldots, U_k$ for $\mU$ in polynomial time in $\enc{f}$. In particular, $w$ and the matrix $U \in \Q^{k \times n}$ whose rows are $U_1, \ldots, U_k$ have polynomial bit size in~$\enc{f}$. This follows from standard (bit-)complexity results in linear algebra, see \Cref{APP:LA}.

For $1 \leq i \leq k$, write $\overline{U}_i = U_i / \|U_i\|^2$, and note that $\bc(\overline{U}_i) \leq \poly(\enc{f})$. Consider the $k$-variate polynomial $\hat f \in \Q[y_1, \ldots, y_k]$ defined by $\hat f(y) = f(\sum_{i=1}^k y_i \cdot \overline{U}_i)$.  Note that $\hat f$ is convex, and that $\enc{\hat f} \leq \poly(\enc{f})$. Furthermore, for any $x \in \R^n$,
\[
    \hat f (Ux) = f \left(\sum_{i=1}^k \frac{\langle x, U_i \rangle}{\, \|U_i\|^2} \cdot U_i \right) = f(x_\mU).
\]
It remains to show that $h_{\hat f} \not \equiv 0$. By~\Cref{PROP:Hessianvanishes}, this is equivalent to showing it has no directions of linearity. Suppose that it did, i.e., that there is a nonzero $\hat v \in \R^k$ such that $\mynabla{\hat v} \hat f \equiv c$ for some~${c \in \R}$.  Then, writing $e_j \in \R^n$ for the $j$-th standard basis vector,
\[
    \mynabla{e_j} \hat f (y) = \mynabla{\overline{U}_j} f\left(\sum_{i=1}^k y_i \cdot \overline{U}_i \right).
\]
Now, write $u = \sum_{i=1}^k \hat v_i \overline{U}_i \in \mU = \mL^\perp$. Note that $u \neq 0$.
Then, as $\langle w, u \rangle = 0$, for any $x \in \R^n$,
\[
\mynabla{u} f(x) = \mynabla{u} f \left (\sum_{i=1}^k \langle x, U_i \rangle \cdot \overline{U}_i \right) = \mynabla{\hat v} \hat f(Ux) = c,
\]
a contradiction.
\end{proof}

\section{Applications in convex polynomial programming}
\label{SEC:ProofApplications}
In this section, we give a full proof of \Cref{THM:mainpolyhedron} and \Cref{COR:mainpolyhedron}, which we restate below.

\THMmainpolyhedron*

\CORmainpolyhedron*

For the proof of \Cref{THM:mainpolyhedron} and \Cref{COR:mainpolyhedron} we need \Cref{THM:quantstructure}, which we also restate below for convenience. 

\quantstructurethm*

In particular, this gives us the following lower bound on $f$:
\begin{equation}\label{EQ:fullprooflowerboundonf}
    f(x) \geq q(0) -  \|\nabla q(0)\| \cdot \|Ux\| - \langle w, x \rangle + \frac{\mu}{2} \|Ux\|^2.
\end{equation}

We define $\mW \coloneqq \spann{w}$ for $w$ as in \Cref{THM:quantstructure}.
In order to prove \Cref{THM:mainpolyhedron} and \Cref{COR:mainpolyhedron}, we also need the following three lemmas.
\Cref{LEM:fullproofconditionforunbounded} shows how we can efficiently detect whether $f$ is unbounded from below.
\Cref{LEM:fullproofboundinUandWdirection} then shows that any minimizer needs to have small norm in the subspace $\mU \oplus \mW$.
Finally, \Cref{LEM:fullproofboundinVdirection} shows how to lift this norm bound in a subspace to the full space.

\begin{lemma}\label{LEM:fullproofconditionforunbounded}
    Let $f \in \Q[x]$ be a convex polynomial and let $P = \{Ax \leq b\}$ be a (nonempty) polyhedron.
    Let $\mU$ and $w$ as in \Cref{THM:quantstructure}.
    Then, $f$ is unbounded from below on $P$, if and only if there exists $x^0 \in \R^n$ with $Ax^0 \leq 0$, $x^0 \in \mU^\perp$ (or equivalently $Ux^0 = 0$ for $U$ as in \Cref{THM:quantstructure}) and $\langle w, x^0 \rangle = 1$.
\end{lemma}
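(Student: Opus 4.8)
The statement is an ``if and only if'', and I would prove the two directions separately: the ``if'' direction is a short direct computation, while for ``only if'' I would argue the contrapositive using linear programming duality (i.e.\ the same quantitative Farkas-type argument the paper invokes afterwards). Throughout I would use the decomposition $f(x) = f(x_\mU) - \langle w, x\rangle$ and the strongly convex lower bound~\eqref{EQ:fullprooflowerboundonf} from \Cref{THM:quantstructure}, and the elementary fact that $x^0 \in \mU^\perp$ is equivalent to $Ux^0 = 0$ (the rows of $U$ span $\mU$).

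For the ``if'' direction, suppose $x^0$ with $Ax^0 \le 0$, $x^0 \in \mU^\perp$, $\langle w, x^0\rangle = 1$ exists. Since $P \neq \emptyset$, fix $\bar x \in P$. For every $\lambda \ge 0$ we have $A(\bar x + \lambda x^0) = A\bar x + \lambda A x^0 \le b$, so $\bar x + \lambda x^0 \in P$, and because $x^0 \in \mU^\perp$ the projection of $\bar x + \lambda x^0$ onto $\mU$ is again $\bar x_\mU$. Hence
\[
   f(\bar x + \lambda x^0) = f(\bar x_\mU) - \langle w, \bar x + \lambda x^0 \rangle = f(\bar x_\mU) - \langle w, \bar x\rangle - \lambda = f(\bar x) - \lambda \to -\infty \quad (\lambda \to \infty),
\]
so $f$ is unbounded from below on $P$.

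For the converse I would show: if no such $x^0$ exists, then $f$ is bounded from below on $P$. Consider the linear program $\max\{\langle w, d\rangle : Ad \le 0,\ Ud = 0\}$. It is feasible ($d = 0$) and its optimal value is exactly $0$: it is clearly $\ge 0$, and if some feasible $d$ had $\langle w, d\rangle > 0$ then $d/\langle w, d\rangle$ would be a solution of the forbidden system, a contradiction (this also rules out an unbounded optimum). Since the program is feasible with finite optimum, LP duality yields a dual solution, i.e.\ $\lambda \ge 0$ and $z \in \R^k$ with $A^\top \lambda + U^\top z = w$. Then for every $x \in P$, using $Ax \le b$ and $\lambda \ge 0$,
\[
   \langle w, x\rangle = \lambda^\top(Ax) + z^\top(Ux) \le \lambda^\top b + \|z\|\cdot\|Ux\|,
\]
and substituting into~\eqref{EQ:fullprooflowerboundonf} gives
\[
   f(x) \ge q(0) - \bigl(\|\nabla q(0)\| + \|z\|\bigr)\cdot\|Ux\| + \tfrac{\mu}{2}\|Ux\|^2 - \lambda^\top b .
\]
The right-hand side is a quadratic in $t = \|Ux\| \ge 0$ with positive leading coefficient $\mu/2$, hence bounded below over all $t \ge 0$, so $f$ is bounded from below on $P$. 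This proves the contrapositive and completes the lemma.

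The only place where one must be careful is the step hidden in the converse: the decomposition controls $f$ only through the component of $x$ in $\mU \oplus \mW$, so one needs the quadratic term in $\|Ux\|$ to dominate the linear contribution of $\langle w, x\rangle$ — which is exactly what LP duality buys us, by writing $w$ in terms of the rows of $A$ and $U$. I regard this as the main (modest) obstacle; setting up the LP and its dual correctly and checking the sign conventions is the bulk of the work. I would note in passing that the alternative route via a minimizing sequence $x^\ell \in P$ with $f(x^\ell)\to-\infty$ and a normalized limit $x^0$ requires an extra argument to guarantee $x^0 \notin (\mU\oplus\mW)^\perp$; the duality argument above avoids that subtlety entirely.
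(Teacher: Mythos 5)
Your proof is correct, but the necessity direction takes a genuinely different route from the paper's. The sufficiency direction is identical. For necessity, the paper argues directly from unboundedness: it takes a sequence $(x^k) \subseteq P$ with $f(x^k) \to -\infty$, normalizes the projections onto $\mU \oplus \mW$, and uses the lower bound \eqref{EQ:fullprooflowerboundonf} to show (in \Cref{CL:limitinUplusW}) that a subsequential limit lies in $\mW$ with positive inner product with $w$; it then lifts this limit to the recession cone of $P$ via Rockafellar's results on projections of recession cones (\Cref{PROP:projectionreccesioncone,PROP:projectionofpolyhedronispolyhedron}). You instead prove the contrapositive: infeasibility of the system $Ax^0 \le 0$, $Ux^0 = 0$, $\langle w, x^0\rangle = 1$ yields, by LP duality (equivalently Farkas, exactly the certificate the paper derives later in \Cref{LEM:Farkaslemmaforboundedpolynomial}), vectors $\lambda \ge 0$ and $z$ with $A^\top\lambda + U^\top z = w$, and substituting $\langle w, x\rangle \le \lambda^\top b + \|z\|\,\|Ux\|$ into \eqref{EQ:fullprooflowerboundonf} shows $f$ is bounded below on $P$ since $\mu > 0$ by \Cref{THM:quantstructure}. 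There is no circularity: the paper obtains the certificate from the conclusion of this lemma, whereas you obtain it directly from the assumed infeasibility, so your argument in effect merges the content of this lemma with that of \Cref{LEM:Farkaslemmaforboundedpolynomial}. What each approach buys: yours is shorter and avoids the compactness argument, the recession-cone machinery, and the subtlety of keeping the limit point out of $(\mU \oplus \mW)^\perp$; the paper's sequence argument is more constructive (it exhibits the unbounded direction from the minimizing sequence) and cleanly separates the qualitative characterization of unboundedness from the quantitative Farkas step used in \Cref{LEM:fullproofboundinUandWdirection}. One cosmetic point: the multiplier $\lambda$ should live in $\R^m_{\geq 0}$ with $m$ the number of rows of $A$.
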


\begin{lemma}\label{LEM:fullproofboundinUandWdirection}
    Let $f \in \Q[x]$ be a convex polynomial and let $P = \{Ax \leq b\}$ be a (nonempty) polyhedron.
    Let $\mU$ and $w$ as in \Cref{THM:quantstructure} and let $x \in P$.
    Assume that $x$ satisfies $\|x_{\mU \oplus \mW}\| > 2^{\poly(\enc{f}, \,\enc{P})}$.
    Then there is a point $x' \in P$ with $\|x'\| \leq 2^{\poly(\enc{f}, \,\enc{P})}$  and $f(x') < f(x)$.
    In particular, if $f$ is bounded from below on $P$, then every minimizer $x^* \in P$ of $f$ on $P$ satisfies $\|x^*_{\mU \oplus \mW}\| \leq 2^{\poly(\enc{f}, \,\enc{P})}$.
\end{lemma}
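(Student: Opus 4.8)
The plan is to use the strongly convex lower bound~\eqref{EQ:fullprooflowerboundonf} together with the fact that, since $f$ is (assumed) bounded from below on $P$, the system in Lemma~\ref{LEM:fullproofconditionforunbounded} is infeasible, so Farkas' lemma yields a certificate controlling $\langle w, x\rangle$ in terms of $\|Ux\|$ on $P$. Concretely, the infeasibility of $Ax^0 \leq 0,\ Ux^0 = 0,\ \langle w, x^0\rangle = 1$ gives (via a quantitative Farkas' lemma, e.g.\ the version in Schrijver) a multiplier $\lambda \geq 0$ and a vector $z$ with $A^\top \lambda + U^\top z = w$, and both $\|\lambda\|$ and $\|z\|$ bounded by $2^{\poly(\enc{f},\,\enc{P})}$ (their bit length is polynomial because the constraint matrix $[A; U; w^\top]$ has polynomial bit length by Theorem~\ref{THM:quantstructure} and the preliminaries). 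For any $x \in P$ this yields $\langle w, x\rangle = \langle z, Ux\rangle + \langle \lambda, Ax\rangle \leq \|z\|\cdot\|Ux\| + \|\lambda\|\cdot\|b\|$, and similarly a lower bound, so $|\langle w, x\rangle| \leq \|z\|\cdot\|Ux\| + \|\lambda\|\cdot\|b\|$. Plugging this into~\eqref{EQ:fullprooflowerboundonf} makes the right-hand side a univariate quadratic in the single quantity $t \coloneqq \|Ux\|$ with leading coefficient $\mu/2 > 0$ and all other coefficients bounded in absolute value by $2^{\poly(\enc{f},\,\enc{P})}$ (using $\bc(q)\leq\poly(\enc{f})$ and $\mu \geq 2^{-\poly(\enc{f})}$).

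Next I would make the threshold argument quantitative. If $\|x_{\mU\oplus\mW}\| > 2^{\poly(\enc{f},\,\enc{P})}$ for a suitable polynomial, then since $\|x_{\mU\oplus\mW}\|^2 = \|x_\mU\|^2 + \|x_\mW\|^2$ (orthogonal decomposition), at least one of $\|x_\mU\|$ or $\|x_\mW\| = |\langle w,x\rangle|/\|w\|$ is large. Here I use that $\|Ux\| = \|x_\mU\|$ up to the norms of the rows of $U$ (the rows are orthogonal but not orthonormal, so $\|Ux\|^2 = \sum_i \|U_i\|^2 \langle x, U_i/\|U_i\|^2\rangle^2$; since the $U_i$ have polynomial bit length, $\|Ux\|$ and $\|x_\mU\|$ differ by at most a $2^{\poly(\enc{f})}$ factor in each direction). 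So a large $\|x_{\mU\oplus\mW}\|$ forces a large $\|Ux\|$ — either directly if $\|x_\mU\|$ is large, or via the Farkas inequality $|\langle w,x\rangle| \leq \|z\|\|Ux\| + \|\lambda\|\|b\|$ if $\|x_\mW\|$ is large. Once $t = \|Ux\|$ exceeds the larger root of the quadratic lower bound, we get $f(x) \geq (\text{explicit positive quantity growing in } t)$, and in particular $f(x) > f(x^{\mathrm{ref}})$ for any fixed reference point $x^{\mathrm{ref}} \in P$ of bounded norm (which exists by Proposition~\ref{PROP:solvelinearprogram}). This already shows $x$ is not a minimizer.

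For the "in particular" part about producing $x'$ with small norm: I would take $x' \coloneqq x^{\mathrm{ref}}$, the feasible point of polynomial bit length from Proposition~\ref{PROP:solvelinearprogram}; it satisfies $\|x'\| \leq 2^{\poly(\enc{P})}$ and $f(x') < f(x)$ by the previous paragraph, because $f(x')$ is some fixed finite value bounded by $2^{\poly(\enc{f},\,\enc{P})}$ (evaluate the polynomial $f$ at a point of polynomial bit length) whereas $f(x)$ exceeds it once $\|x_{\mU\oplus\mW}\|$ is large enough — so I choose the polynomial in the hypothesis large enough that the quadratic lower bound at $t = \|Ux\|$ exceeds $f(x')$. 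The final sentence then follows: a minimizer $x^*$ cannot satisfy $\|x^*_{\mU\oplus\mW}\| > 2^{\poly(\enc{f},\,\enc{P})}$, since we just exhibited a strictly better point.

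The main obstacle I anticipate is bookkeeping rather than conceptual: carefully tracking that every constant appearing — $\|\nabla q(0)\|$, $q(0)$, $1/\mu$, $\|z\|$, $\|\lambda\|$, $\|b\|$, the row norms $\|U_i\|$, and $f(x^{\mathrm{ref}})$ — is bounded by $2^{\poly(\enc{f},\,\enc{P})}$, and that the chain of "large $\|x_{\mU\oplus\mW}\|$ $\Rightarrow$ large $\|Ux\|$ $\Rightarrow$ large $f(x)$" uses compatible polynomial exponents, so that a single $\poly(\enc{f},\,\enc{P})$ in the hypothesis suffices. The quantitative Farkas' lemma (bounding the size of the multipliers $\lambda, z$) is the one external ingredient I would need to invoke precisely; everything else reduces to elementary estimates on the quadratic and on orthogonal projections.
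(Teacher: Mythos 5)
Your proposal follows essentially the same route as the paper's proof: infeasibility of the unboundedness system plus a quantitative Farkas' lemma gives multipliers $\lambda, z$ with $w = A^\top\lambda + U^\top z$, hence $|\langle w,x\rangle| \leq \|z\|\cdot\|Ux\| + \|\lambda\|\cdot\|b\|$ on $P$, which turns~\eqref{EQ:fullprooflowerboundonf} into a quadratic in $\|Ux\|$ with leading coefficient $\mu/2$, and a fixed feasible reference point of small norm from \Cref{PROP:solvelinearprogram} serves as $x'$, with the same bit-length bookkeeping and the same passage from $\|x_{\mU\oplus\mW}\|$ to $\|Ux\|$ and $|\langle w,x\rangle|$ (your formula for $\|Ux\|^2$ in terms of the rows $U_i$ has a harmless slip, but the claimed $2^{\poly(\enc{f})}$-factor equivalence is correct). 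This matches the paper's argument, so the proposal is correct.
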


\begin{lemma}\label{LEM:fullproofboundinVdirection}
    Let $P = \{Ax \leq b\}$ be a (nonempty) polyhedron. Let $\mZ$ be a linear subspace of $\R^n$ spanned by orthogonal vectors $x^1, \ldots, x^k$. Let $x^* \in P$. There is an $x' \in P$ such that $x^*_\mZ = x'_\mZ$ and
    \[
        \|x'\| \leq \poly\left(\|x^*_{\mZ}\|,\,2^{\poly(\enc{P}, \,\bc(\mZ))}\right),
    \]
    where $\bc(\mZ) = \bc(x^1) + \ldots + \bc(x^k)$.
\end{lemma}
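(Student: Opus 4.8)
The plan is to rephrase the lifting as the problem of finding a small-norm point in an explicit (nonempty) polyhedron whose defining matrix is rational with small bit length, even though its right-hand side is irrational but of small norm; and then to invoke the standard structure theory of polyhedra together with \Cref{COR:rationalmatrix}.

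First I would encode the constraint $x'_\mZ = x^*_\mZ$ as linear equalities. Let $X \in \R^{k\times n}$ be the matrix with rows $x^1,\dots,x^k$, and set $c \coloneqq Xx^* \in \R^k$, so that $c_j = \langle x^j, x^*\rangle$. Since $x^1,\dots,x^k$ span $\mZ$ orthogonally, a point $y$ satisfies $y_\mZ = x^*_\mZ$ if and only if $Xy = c$. Hence the set of admissible lifts is
\[
  Q \;\coloneqq\; \{\, y \in \R^n : Ay \le b,\ Xy = c\,\} \;=\; \{\, y \in \R^n : My \le d\,\},
  \qquad M \coloneqq \begin{pmatrix} A\\ X\\ -X\end{pmatrix},\quad d \coloneqq \begin{pmatrix} b\\ c\\ -c\end{pmatrix},
\]
which is nonempty because $x^* \in Q$. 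Here $M$ is rational with $\bc(M) \le \bc(A) + 2\,\bc(\mZ) \le \poly(\enc{P},\bc(\mZ))$, while $d$ need not be rational; however $|c_j| = |\langle x^j, x^*_\mZ\rangle| \le \|x^j\|\cdot\|x^*_\mZ\| \le 2^{\bc(\mZ)}\|x^*_\mZ\|$ and $\|b\| \le 2^{\poly(\enc{P})}$, so $\|d\| \le \poly\!\left(\|x^*_\mZ\|,\, 2^{\poly(\enc{P},\bc(\mZ))}\right)$.

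Next I would pass to a minimal face of $Q$. By the standard structure theory of polyhedra (see, e.g., \cite{Schrijver1994}), the nonempty polyhedron $Q$ has a minimal nonempty face of the form $F = \{\,y : M_I y = d_I\,\}$ for some set $I$ of row indices with $\rank(M_I) = \rank(M)$, and this $F$ is contained in $Q$. (To avoid invoking this as a black box: starting from $x^* \in Q$, if the currently tight constraints do not yet have rank $\rank(M)$, pick a nonzero $v$ in their kernel but not in $\ker M$, and move along whichever of $\pm v$ is not a recession direction of $Q$ until a new constraint becomes tight; the new constraint is linearly independent of the previously tight ones since $v$ is orthogonal to those but not to it, so the rank of the tight system strictly increases, and iterating terminates at a point of such an $F$.) Let $M'$ consist of a maximal linearly independent subset of the rows of $M_I$, with corresponding right-hand side $d'$ and $r \coloneqq \rank(M') = \rank(M_I)$; then $\{\,y : M'y = d'\,\} = \{\,y : M_I y = d_I\,\} = F \subseteq Q$. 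Permuting coordinates, write $M' = \begin{pmatrix} B & C\end{pmatrix}$ with $B$ an invertible $r\times r$ submatrix, and put $x' \coloneqq (B^{-1}d',\, 0)$. Then $M'x' = d'$, so $x' \in F \subseteq Q$; in particular $Ax' \le b$ and $Xx' = c$, the latter giving $x'_\mZ = x^*_\mZ$ as required.

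Finally I would bound $\|x'\|$. We have $\|x'\| = \|B^{-1}d'\| \le \|B^{-1}\|_{\mathrm{op}}\cdot\|d'\| \le \|B^{-1}\|_{\mathrm{op}}\cdot\|d\|$. Since $B$ is an invertible rational submatrix of $M$, the matrix $B^\top B$ is positive definite with $\bc(B^\top B) \le \poly(\bc(M)) \le \poly(\enc{P},\bc(\mZ))$, so \Cref{COR:rationalmatrix} gives $\|B^{-1}\|_{\mathrm{op}}^2 = \lambda_{\min}(B^\top B)^{-1} \le 2^{\poly(\enc{P},\bc(\mZ))}$. Combining with the bound on $\|d\|$ yields $\|x'\| \le \poly\!\left(\|x^*_\mZ\|,\, 2^{\poly(\enc{P},\bc(\mZ))}\right)$. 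The main obstacle — and the reason the statement is a norm bound rather than a bit-length bound — is precisely that $c = Xx^*$ is in general irrational, so one cannot round the system or bound its solutions by Cramer's rule applied to $d$ directly; the resolution is that the lift depends on $d$ only linearly through $B^{-1}$, whose operator norm is controlled purely by the rational, small-bit-length matrix $M$. The secondary subtlety — ensuring the constructed point lies in $Q$ rather than merely in the affine hull of some tight constraints — is exactly what the passage to a minimal face secures.
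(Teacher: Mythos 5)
Your proof is correct, and it rests on the same quantitative mechanism as the paper's: encode the lift as a polyhedron $\{My \le d\}$ whose constraint matrix is rational of bit length $\poly(\enc{P},\bc(\mZ))$ while the right-hand side may be irrational but has norm $\poly\bigl(\|x^*_\mZ\|, 2^{\poly(\enc{P},\bc(\mZ))}\bigr)$, then exhibit a distinguished feasible point that solves a square rational subsystem and bound its norm by the operator norm of a rational inverse times $\|d\|$. Where you diverge is in handling the fact that this polyhedron need not be pointed: the paper augments the system with the equations $\langle x, y^j\rangle = 0$ for a small-bit-length basis $y^1,\ldots,y^\ell$ of the kernel of the constraint matrix (verifying feasibility of the augmented system via the projection of $x^*$), which forces a vertex to exist, and then bounds that vertex by $\|\hat A^{-1}\|\,\|\hat b\|$ using bit-length bounds on $\hat A^{-1}$ from \cite{Schrijver1994}; you instead pass to a minimal nonempty face $\{y : M_I y = d_I\}$ with $\rank(M_I)=\rank(M)$, pick the point with zero non-basic coordinates, and control $\|B^{-1}\|$ via \Cref{COR:rationalmatrix} applied to $B^\top B$. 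Both devices are sound; yours avoids computing a kernel basis and the separate feasibility check, and your constructive face-finding argument sidesteps any rationality assumptions in the cited structure theory, while the paper's route stays closer to standard vertex arguments and off-the-shelf bit-length estimates. One step worth spelling out in your sketch: at termination the inclusion $\{y : M_I y = d_I\}\subseteq Q$ holds because once $\rank(M_I)=\rank(M)$ one has $\ker(M_I)=\ker(M)$, so every solution of the tight system differs from the feasible point you reached by an element of $\ker(M)$ and hence satisfies all remaining inequalities.
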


We now first give a formal proof of \Cref{THM:mainpolyhedron} and \Cref{COR:mainpolyhedron} given the lemmas and then prove \Cref{LEM:fullproofconditionforunbounded,LEM:fullproofboundinUandWdirection,LEM:fullproofboundinVdirection} in \Cref{SEC:fullproofconditionforunbounded,SEC:fullproofboundinUandWdirection,SEC:fullproofboundinVdirection}.

\begin{proof}[Proof of \Cref{THM:mainpolyhedron}]
    Assume that $f$ is bounded from below (otherwise we are immediately done).
    We claim that
    \begin{equation}\label{EQ:fullproofPattainsminimumonball}
        \inf_{x \in P} f(x) = \inf_{x \in P :\: \|x\| \leq R} f(x)
    \end{equation}
    for some $R = 2^{\poly(\enc{f}, \,\enc{P})}$.
    Since the set on the right-hand side is compact and $f$ is continuous, this will allow us to conclude that $f$ attains its minimum.
    Moreover, it attains it at a point $x^*$ with $\|x^*\| \leq R$, which will complete the proof.
    Thus, it remains to prove \eqref{EQ:fullproofPattainsminimumonball}.
    Assume by contradiction that this is not true.
    Then, there is $x^0 \in P$ with
    \[
        f(x^0) < \inf_{x \in P :\: \|x\| \leq R} f(x).
    \]
    By \Cref{LEM:fullproofboundinUandWdirection}, we can assume without loss of generality that $\|(x^0)_{\mU \oplus \mW}\| \leq 2^{\poly(\enc{f}, \,\enc{P})}$.
    Otherwise, we could replace $x^0$ by a point ${x^0}'$ that satisfies this and has $f({x^0}') < f(x^0)$.
    By \Cref{LEM:fullproofboundinVdirection} (setting $\mZ = \mU \oplus \mW$ and $x^1, \ldots, x^k$ being the rows of $U$ and $w$), we can now find an $x' \in P$ such that $x^0_{\mU \oplus \mW} = x'_{\mU \oplus \mW}$ and
    \begin{equation}\label{EQ:fullproofpointofsmallnormforminimum}
        \|x' \| \leq \poly\left(\|x^0_{\mU \oplus \mW}\|,\, 2^{\poly(\enc{P}, \,\bc(\mU \oplus \mW))}\right) \leq 2^{\poly(\enc{f}, \,\enc{P})}.
    \end{equation}
    Here we used that by \Cref{THM:quantstructure} we have $\bc(\mU \oplus \mW) \leq \poly(\enc{f})$.
    Choosing $R$ as this upper bound, we have
    \[
        f(x') = f(x'_{\mU \oplus \mW}) = f(x^0_{\mU \oplus \mW}) = f(x^0) < \inf_{x \in P :\: \|x\| \leq R} f(x),
    \]
    which contradicts \eqref{EQ:fullproofpointofsmallnormforminimum}.
    Thus, \eqref{EQ:fullproofPattainsminimumonball} holds, which completes the proof.
\end{proof}

\begin{algorithm}
\caption{Algorithm for convex polynomial programming}\label{ALG:mainalgorithm}
\begin{algorithmic}[1]
\Input A polynomial $f$, a matrix $A \in \Q^{m \times n}$, a vector $b \in \Q^m$ (defining a nonempty polyhedron $P=\{Ax \leq b\}$), an error parameter $\varepsilon$
\Output Either $-\infty$ if $f$ is unbounded on $P$ or a point $\tilde{x} \in P$ with $f(\tilde{x}) \leq \min_{x \in \R^n} f(x) + \varepsilon$
\State Compute $U$ and $w$ as in \Cref{THM:quantstructure}.\label{ALGLINE:structuretheorem}
\If{there exists $x^0 \in \R^n$ with $Ax^0 \leq 0$, $Ux^0 = 0$ and $\langle w, x^0 \rangle = 1$}\label{ALGLINE:checkforunbounded}
\Return $-\infty$
\Else
\State \parbox[t]{0.928\linewidth}{Run the ellipsoid method from \Cref{PROP:applicationellipsoid} with input $f$, $P$, $R = 2^{\poly(\enc{f}, \,\enc{P})}$ (as in \Cref{THM:mainpolyhedron}) and $\varepsilon$ and \Return the output of the ellipsoid method.\label{ALGLINE:ellipsoid}}
\EndIf
\end{algorithmic}
\end{algorithm}
\begin{proof}[Proof of \Cref{COR:mainpolyhedron}]
Consider \Cref{ALG:mainalgorithm}, which we want to use to show \Cref{COR:mainpolyhedron}.
Correctness follows from \Cref{LEM:fullproofconditionforunbounded} and \Cref{THM:mainpolyhedron}.
The runtime is $\poly(\enc{f}, \,\enc{P},\, \log(1/\varepsilon))$.
Indeed, by \Cref{THM:quantstructure}, line~\ref{ALGLINE:structuretheorem} can be done in time $\poly(\enc{f})$.
The check in line~\ref{ALGLINE:checkforunbounded} can be done in time $\poly(\enc{f}, \,\enc{P})$ by \Cref{PROP:solvelinearprogram} since it checks for feasibility of a linear program and the bit lengths of $A$, $U$ and $w$ are bounded by $\poly(\enc{f}, \,\enc{P})$.
Finally, the ellipsoid method from \Cref{PROP:applicationellipsoid} in line~\ref{ALGLINE:ellipsoid} runs in time $\poly(\enc{f},\, \enc{P},\, \log(R),\, \log(1/\varepsilon))$, which is also $\poly(\enc{f}, \,\enc{P},\, \log(1/\varepsilon))$ since we have $R = 2^{\poly(\enc{f}, \,\enc{P})}$.
\end{proof}

\subsection{\texorpdfstring{Deciding unboundedness: Proof of \Cref{LEM:fullproofconditionforunbounded}}{Deciding unboundedness}}\label{SEC:fullproofconditionforunbounded}
In this section, we prove \Cref{LEM:fullproofconditionforunbounded}, i.e. we show that $f$ is unbounded on $P$ if and only if there exists an $x^0 \in \R^n$ with $Ax^0 \leq 0$, $x^0 \in \mU^\perp$ and $\langle w, x^0 \rangle = 1$.
We first show that this condition is sufficient.

\begin{proof}[Proof of \Cref{LEM:fullproofconditionforunbounded} (part 1: condition is sufficient)]
    Let $x \in P$ and consider the points $x + \lambda x^0$ for $\lambda \geq 0$. We have
    \[
        A(x + \lambda x^0) = Ax + \lambda Ax^0 \leq b + \lambda \cdot 0 = b
    \]
    and thus $x + \lambda x^0 \in P$ for any $\lambda \geq 0$. Furthermore
    \[
        f(x + \lambda x^0) = f((x + \lambda x^0)_\mU) - \langle w, x + \lambda x^0 \rangle.
    \]
    Since $x^0 \in \mU^\perp$, we have $(x + \lambda x^0)_\mU = x_\mU$. Thus, using $\langle w, x^0 \rangle = 1$,
    \[
        f(x + \lambda x^0) = f(x_\mU) - \langle w,x \rangle - \lambda,
    \]
    showing that $f(x + \lambda x^0) \to -\infty$ as $\lambda \to \infty$.
    Hence, if the condition is satisfied, then $f$ is unbounded from below on $P$.
\end{proof}

To show that the condition is also necessary, assume that $f$ is unbounded from below on~$P$.
Consider a sequence $(x^k)_{k \geq 1} \subseteq P$ with $\lim_{k \to \infty} f(x^k) = -\infty$.
Since the unit sphere $S^{n-1}$ is compact, a subsequence of the normalized points $x^k/\|x^k\|$ converges to a point $x^0$.
Unfortunately, $x^0$ could be in $(\mU \oplus \mW)^\perp$ (in particular meaning that $\langle w, x^0 \rangle = 0$).
Instead, we get the following claim about the projected normalized points.
(Note that, since $f(x) = f(x_{\mU \oplus \mW})$ for all $x \in \R^n$, the statements ${\lim_{k \to \infty} f(x^k) = -\infty}$ and $\lim_{k \to \infty} f(x^k_{\mU \oplus \mW}) = -\infty$ are equivalent.)
\begin{claim}\label{CL:limitinUplusW}
    Consider a sequence $(x^k)_{k \geq 1}$ such that $\lim_{k \to \infty} f(x^k) = -\infty$. Then a subsequence of the normalized projected points $x^k_{\mU \oplus \mW}/\|x^k_{\mU \oplus \mW}\|$ converges to a point $y^0 \in \mW$ with $\langle w, y^0 \rangle > 0$.
\end{claim}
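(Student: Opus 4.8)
The plan is to run a blow-up (scaling) argument on the strongly convex lower bound~\eqref{EQ:fullprooflowerboundonf}. First I would record that $\|x^k_{\mU\oplus\mW}\|\to\infty$: since $f(x)=f(x_{\mU\oplus\mW})$ depends only on the projection onto $\mU\oplus\mW$ and $f$ is continuous, a bounded subsequence of $\big(x^k_{\mU\oplus\mW}\big)_k$ would keep $f(x^k)$ bounded along it, contradicting $f(x^k)\to-\infty$. Hence $t_k\coloneqq\|x^k_{\mU\oplus\mW}\|\to\infty$, the normalized projected points $y^k\coloneqq x^k_{\mU\oplus\mW}/t_k$ are well defined for large $k$, and by compactness of the unit sphere of $\mU\oplus\mW$ we may pass to a subsequence along which $y^k\to y^0$ for some unit vector $y^0\in\mU\oplus\mW$.

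Next I would show $y^0\in\mW$. Since the rows of $U$ span $\mU$ we have $Ux^k=Ux^k_{\mU\oplus\mW}=t_k\,Uy^k$, and since $w\in\mW\subseteq\mU\oplus\mW$ we have $\langle w,x^k\rangle=t_k\langle w,y^k\rangle$; thus~\eqref{EQ:fullprooflowerboundonf} reads
\[
f(x^k)\;\ge\;q(0)-\|\nabla q(0)\|\,t_k\|Uy^k\|-t_k\langle w,y^k\rangle+\tfrac{\mu}{2}\,t_k^2\|Uy^k\|^2 .
\]
Dividing by $t_k^2$ and letting $k\to\infty$ along the subsequence, the left-hand side satisfies $\limsup_k f(x^k)/t_k^2\le 0$ (as $f(x^k)\to-\infty$), while the right-hand side converges to $\tfrac{\mu}{2}\|Uy^0\|^2$ (the first three terms vanish because $t_k\to\infty$ while $\|Uy^k\|$ and $|\langle w,y^k\rangle|$ stay bounded, using $\|y^k\|=1$). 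Hence $\tfrac{\mu}{2}\|Uy^0\|^2\le 0$, so $Uy^0=0$; since $\ker U=\mU^\perp$ and $y^0\in\mU\oplus\mW$, this gives $y^0\in(\mU\oplus\mW)\cap\mU^\perp=\mW$.

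Finally I would fix the sign. Since $y^0\in\mW=\spann{w}$ is a unit vector, $w\neq 0$ and $\langle w,y^0\rangle=\pm\|w\|$. Suppose toward a contradiction that $\langle w,y^0\rangle=-\|w\|<0$. Then $\langle w,y^k\rangle<0$ for all large $k$ in the subsequence, so the term $-\langle w,x^k\rangle=-t_k\langle w,y^k\rangle$ in~\eqref{EQ:fullprooflowerboundonf} is nonnegative; dropping it yields
\[
f(x^k)\;\ge\;q(0)-\|\nabla q(0)\|\,\|Ux^k\|+\tfrac{\mu}{2}\|Ux^k\|^2\;\ge\;q(0)-\frac{\|\nabla q(0)\|^2}{2\mu},
\]
by minimizing the quadratic $s\mapsto q(0)-\|\nabla q(0)\|s+\tfrac{\mu}{2}s^2$ over $s\ge 0$. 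This contradicts $f(x^k)\to-\infty$, so $\langle w,y^0\rangle=\|w\|>0$, as claimed.

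The step I expect to be the main obstacle is the last one: the quadratic ``curvature'' term in~\eqref{EQ:fullprooflowerboundonf} only controls the $\mU$-direction, so the plain scaling argument cannot distinguish $y^0$ from $-y^0$. One genuinely has to return to the \emph{undivided} inequality and combine the sign of the linear term $-\langle w,x^k\rangle$ with coercivity of the lower bound in the $\mU$-direction to rule out the wrong sign. The remaining points are routine bookkeeping: checking $t_k\to\infty$, and observing that $w\neq 0$ whenever such a sequence $(x^k)$ exists (which in any case falls out of $y^0\in\mW$ being a unit vector).
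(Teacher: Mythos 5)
Your proposal is correct and follows essentially the same route as the paper: both arguments rest on the strongly convex lower bound \eqref{EQ:fullprooflowerboundonf}, compactness of the unit sphere in $\mU \oplus \mW$, and the fact that a quadratic in $\|Ux\|$ with positive leading coefficient is bounded below. The only difference is organizational—you obtain $Uy^0 = 0$ by rescaling the inequality by $\|x^k_{\mU\oplus\mW}\|^2$ and then rule out the wrong sign of $\langle w, y^0\rangle$ by a separate contradiction, whereas the paper first shows $\langle w, x^k_{\mU\oplus\mW}\rangle \to \infty$ and $\|Ux^k_{\mU\oplus\mW}\|/\langle w, x^k_{\mU\oplus\mW}\rangle \to 0$ and reads off both conclusions from that; these are the same estimates in a different order.
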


Before we prove this claim, we show how to apply it to prove that the condition of \Cref{LEM:fullproofconditionforunbounded} is also necessary.
For this, we want to argue that we can lift this limit to a point in the full space satisfying the conditions of \Cref{LEM:fullproofconditionforunbounded}.
In order to do so, we need the following two propositions.
Recall that the recession cone $\recc(C)$ of a convex set $C$ is the set of directions in which $C$ extends to infinity, i.e.
\[
    \recc(C) = \{y \in \R^n : x+\lambda y \in C \:\:\forall x \in C\: \forall \lambda \geq 0\}.
\]
\begin{proposition}[{\cite[(Special case of) Corollary 8.3.4]{Rockafellar1970}}]\label{PROP:projectionreccesioncone}
    Let $C$ be a (nonempty) closed convex set.
    Let $\mZ$ be a subspace.
    Consider the projection
    \[
        C_\mZ = \{x \in \mZ : \exists x' \in \mZ^\perp \text{ with } x + x' \in C\}.
    \]
    Then the projection of the recession cone of $C$ is the recession cone of the projection $C_\mZ$.
\end{proposition}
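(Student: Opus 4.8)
Write $\pi\colon\R^n\to\mZ$ for the orthogonal projection onto $\mZ$, so that $\pi(x)=x_\mZ$ and $C_\mZ=\pi(C)$. The plan is to prove the two inclusions $\pi(\recc(C))\subseteq\recc(C_\mZ)$ and $\recc(C_\mZ)\subseteq\pi(\recc(C))$ separately.

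The inclusion $\pi(\recc(C))\subseteq\recc(C_\mZ)$ is immediate from linearity of $\pi$: if $y\in\recc(C)$ and $u\in C_\mZ$, pick $x\in C$ with $\pi(x)=u$; then $x+\lambda y\in C$ for all $\lambda\geq 0$, hence $u+\lambda\,\pi(y)=\pi(x+\lambda y)\in C_\mZ$, i.e.\ $\pi(y)\in\recc(C_\mZ)$. This direction uses neither closedness nor any further structure of $C$.

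For the reverse inclusion I would exploit that in our application $C=P$ is a polyhedron, which is the only case needed. By the Minkowski--Weyl theorem, $P=Q+\recc(P)$ with $Q$ a polytope and $\recc(P)$ a closed polyhedral cone, so $\pi(P)=\pi(Q)+\pi(\recc(P))$ is the Minkowski sum of the polytope $\pi(Q)$ and the polyhedral cone $\pi(\recc(P))$. A short argument shows that the recession cone of such a sum equals the cone summand (the containment $\supseteq$ being trivial): if $q_0+\lambda d\in\pi(Q)+\pi(\recc(P))$ for all $\lambda\geq 0$ with $q_0\in\pi(Q)$, write it as $q_\lambda+r_\lambda$ with $q_\lambda\in\pi(Q)$ bounded and $r_\lambda\in\pi(\recc(P))$; then $r_\lambda/\lambda=(q_0-q_\lambda)/\lambda+d\to d$, and $d\in\pi(\recc(P))$ since the cone is closed. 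Hence $\recc(\pi(P))=\pi(\recc(P))$. (For a general closed convex $C$ one instead argues by compactness: given $v\in\recc(C_\mZ)\setminus\{0\}$ and $\bar x\in C$, the points $x_m\in C$ with $\pi(x_m)=\pi(\bar x)+mv$ satisfy $\|x_m\|\to\infty$, so along a subsequence $x_m/\|x_m\|\to d$ for a unit vector $d\in\recc(C)$ by the usual limit characterization of the recession cone of a closed convex set; one then checks $\pi(d)$ is a nonnegative scalar multiple of $v$ and rescales $d$.)

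The step I expect to be the main obstacle is precisely the degenerate possibility, in the general argument, that $\pi(d)=0$ — i.e.\ the limiting direction $d$ lies in $\mZ^\perp$ and produces no recession direction of $C$ lying over $v$. Indeed the conclusion fails for arbitrary closed convex $C$ (take $C=\{(x,y):y\geq x^2\}$ with $\mZ$ the $x$-axis: then $C_\mZ=\R$ has recession cone $\R$, while $\recc(C)=\{0\}\times[0,\infty)$ projects to $\{0\}$), so one genuinely needs a hypothesis excluding this — polyhedrality of $C$, or, as in Rockafellar's Corollary~8.3.4, the absence of a one-sided direction of recession of $C$ inside $\mZ^\perp$. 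Since the proposition is only ever invoked for polyhedra, the Minkowski--Weyl argument above suffices and sidesteps the difficulty.
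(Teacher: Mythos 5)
Your argument is correct as far as it goes, but note that the paper offers no proof of this proposition at all --- it is quoted as a special case of Rockafellar's Corollary~8.3.4 --- so your route is necessarily different: you give a short, self-contained proof of the polyhedral case. That proof is sound: the inclusion $\pi(\recc(C))\subseteq\recc(C_\mZ)$ follows from linearity alone, and for a polyhedron the Minkowski--Weyl decomposition $P=Q+\recc(P)$, together with closedness of the finitely generated cone $\pi(\recc(P))$ and boundedness of $\pi(Q)$, gives the reverse inclusion exactly as you write. Your parabola counterexample is also valid: for $C=\{(x,y):y\geq x^2\}$ and $\mZ$ the $x$-axis one has $\recc(C_\mZ)=\R$ while $\pi(\recc(C))=\{0\}$, so the proposition as literally stated for an arbitrary nonempty closed convex $C$ is false; Rockafellar's corollary carries an additional hypothesis excluding one-sided recession directions of $C$ inside $\mZ^\perp$, which the statement in the paper suppresses. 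This is harmless for the paper, since the proposition is only ever invoked for $C=P$ a polyhedron with $\mZ=\mU\oplus\mW$ (alongside the companion proposition that projections of polyhedra are polyhedra), and in the polyhedral case the identity holds unconditionally --- which is precisely what your argument establishes. In short: your approach buys an elementary, verifiable proof of everything the paper actually uses (and usefully flags that the stated generality is too broad), while the paper's citation buys brevity at the cost of a statement that, read literally, needs either the polyhedrality restriction or Rockafellar's recession-direction hypothesis.
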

\begin{proposition}[{\cite[Theorem 19.3]{Rockafellar1970}}]\label{PROP:projectionofpolyhedronispolyhedron}
    The projection of the polyhedron $P = \{Ax \leq b\}$ to a subspace~$\mZ$ is again a polyhedron, i.e., it can be described as
    \[
        P_{\mZ} = \{x \in \mZ : \exists x' \in (\mZ)^\perp \text{ with } x + x' \in P\} = \{\hat{A}x \leq \hat{b}\}.
    \]
\end{proposition}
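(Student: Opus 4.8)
The plan is to prove this by \emph{Fourier--Motzkin elimination}, after first reducing to the elimination of coordinates. Fix an orthonormal basis $z_1, \ldots, z_k$ of $\mZ$ and $z_{k+1}, \ldots, z_n$ of $\mZ^\perp$, and let $Q \in \R^{n \times n}$ be the orthogonal matrix with these columns. Substituting $x = Qy$ turns $P$ into $\{y \in \R^n : (AQ)y \leq b\}$, turns $\mZ$ into the coordinate subspace $\{y : y_{k+1} = \cdots = y_n = 0\}$, and turns $P_\mZ$ (read in the new coordinates) into exactly the image of $\{(AQ)y \leq b\}$ under the map that forgets the last $n - k$ coordinates. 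So it suffices to show: the image of a polyhedron $\{y \in \R^m : Cy \leq d\}$ under the projection $\R^m \to \R^{m-1}$ dropping the last coordinate is again a polyhedron. Iterating this $n - k$ times and transforming back via $Q$ then yields the claim, and produces an explicit $\hat A, \hat b$ (in coordinates $y_i = \langle z_i, x\rangle$ for $x \in \mZ$).

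For the single-variable step, partition the rows of $C$ according to the sign of the coefficient $c_{im}$ of $y_m$: rows with $c_{im} > 0$ rearrange to an upper bound $y_m \leq u_i(y')$ with $u_i$ affine in $y' = (y_1, \ldots, y_{m-1})$; rows with $c_{im} < 0$ to a lower bound $y_m \geq \ell_j(y')$; and rows with $c_{im} = 0$ to an inequality $g_r(y') \leq 0$ not involving $y_m$. I claim the projection equals
\[
    \{\, y' \in \R^{m-1} : g_r(y') \leq 0 \text{ for all } r, \ \text{ and } \ \ell_j(y') \leq u_i(y') \text{ for all } i, j \,\},
\]
which is a polyhedron (finitely many affine inequalities). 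The inclusion ``$\subseteq$'' is immediate by reading off the bounds from a point in the fibre over $y'$. For ``$\supseteq$'', given such a $y'$, every lower bound is at most every upper bound, so with the conventions $\max \emptyset = -\infty$ and $\min \emptyset = +\infty$ we have $\max_j \ell_j(y') \leq \min_i u_i(y')$, and choosing any $y_m$ in this nonempty interval produces a point of $\{Cy \leq d\}$ lying over $y'$.

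The main point requiring care is the bookkeeping in this elimination step: one must handle the degenerate cases where a group of rows is empty (no upper bound, no lower bound, or $y_m$ already absent), and check honestly that the stated finite system of pairwise inequalities is equivalent to solvability for $y_m$. This is routine but is the only nontrivial ingredient. An alternative that sidesteps Fourier--Motzkin is the Minkowski--Weyl theorem: $P$ is finitely generated, $P = \conv\{v_1, \ldots, v_p\} + \mathrm{cone}\{r_1, \ldots, r_q\}$, and since the orthogonal projection $\pi_\mZ : \R^n \to \mZ$ is linear, $P_\mZ = \pi_\mZ(P) = \conv\{\pi_\mZ(v_i)\} + \mathrm{cone}\{\pi_\mZ(r_j)\}$ is again finitely generated, hence a polyhedron; but this merely trades one standard theorem for another.
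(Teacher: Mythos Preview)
Your Fourier--Motzkin argument is correct. Note, however, that the paper does not give its own proof of this proposition at all: it is simply quoted from Rockafellar (Theorem~19.3), where the statement is that the image of a polyhedral convex set under any linear transformation is polyhedral. Rockafellar's proof goes via the finitely-generated description of polyhedra --- essentially the Minkowski--Weyl route you mention as an alternative at the end --- rather than Fourier--Motzkin. So your primary approach is genuinely different from the cited source: Fourier--Motzkin is more elementary and constructive (it directly produces $\hat A,\hat b$), while the Minkowski--Weyl route is shorter once one has the representation theorem in hand and immediately gives the statement for arbitrary linear images, not just orthogonal projections. Either is perfectly fine here, since the paper only needs the existence of \emph{some} $\hat A,\hat b$ and explicitly disclaims any control over their size.
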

Note that the dimensions of $\hat{A}$ and $\hat{b}$ can be exponential in $n$, but since we are only interested in proving existence of an $x^0$ as in \Cref{LEM:fullproofconditionforunbounded} this is not a problem.
We now show how to use \Cref{CL:limitinUplusW} and \Cref{PROP:projectionreccesioncone,PROP:projectionofpolyhedronispolyhedron} to prove \Cref{LEM:fullproofconditionforunbounded}.

\begin{proof}[Proof of \Cref{LEM:fullproofconditionforunbounded} (part 2: condition is necessary)]
    Since $f$ is unbounded, there is a sequence $(x^k)_{k \geq 1}$ of points in $P$ such that $\lim_{k \to \infty} f(x^k) = -\infty$.
    Thus, using \Cref{CL:limitinUplusW} we get a $y^0 \in \mW$ with 
    \[
        y^0 = \lim_{k \to \infty} \frac{x^k_{\mU \oplus \mW}}{\|x^k_{\mU \oplus \mW}\|}.
    \]
    Note that we can assume without loss of generality that $y^0$ is the limit of all the points (and not just of a subsequence).
    Since $y^0 \in \mW$, we in particular have $y^0 \in \mU^\perp$.

    We want to show that we can lift $y^0$ to a point $x^0$ (meaning to an $x^0$ such that $x^0 - y^0 \in (\mU \oplus \mW)^\perp$) with $Ax^0 \leq 0$.
    We have $x^k_{\mU \oplus \mW} \in P_{\mU \oplus \mW}$.
    By \Cref{PROP:projectionofpolyhedronispolyhedron}, $P_{\mU \oplus \mW} = \{\hat{A}x \leq \hat{b}\}$ for some $\hat{A}$ and~$\hat{b}$.
    Since we also have ${\lim_{k \to \infty} f(x_{\mU \oplus \mW}^k) = -\infty}$, it follows that $\lim_{k \to \infty} \|x_{\mU \oplus \mW}^k\| = \infty$ ($f$ is bounded over any compact set) and thus also
    \[
        \hat{A}y^0 = \lim_{k \to \infty}\frac{1}{\|x_{\mU \oplus \mW}^k\|}\hat{A}x_{\mU \oplus \mW}^k \leq \lim_{k \to \infty} \frac{1}{\|x_{\mU \oplus \mW}^k\|}\hat{b} = 0.
    \]
    Thus, $y^0$ is in the recession cone of $P_{\mU \oplus \mW}$, which by \Cref{PROP:projectionreccesioncone} is the projection of the recession cone $\recc(P) = \{x \in \R^n : Ax \leq 0\}$ of $P$.
    This implies that we can lift $y^0$ to a point $x^0 \in \recc(P)$.
    
    We then have $Ax^0 \leq 0$ by definition of $\recc(P)$.
    Since $x^0$ and $y^0$ only differ in $(\mU \oplus \mW)^\perp$, we still have $x^0 \in \mU^\perp$.
    Since $\langle w, x^0 \rangle = \langle w, y^0 \rangle > 0$, by rescaling, we can get $\langle w, x^0\rangle = 1$, which completes the proof.
\end{proof}

It remains to prove \Cref{CL:limitinUplusW}.

\begin{proof}[Proof of \Cref{CL:limitinUplusW}]
    Consider the projections $x^k_{\mU \oplus \mW}$.
    We have ${\lim_{k \to \infty} f(x^k_{\mU \oplus \mW}) = -\infty}$.
    Thus, by compactness of the unit sphere, a subsequence of the normalized points $x^k_{\mU \oplus \mW}/\|x^k_{\mU \oplus \mW}\|$ converges to a point $y^0 \in \mU \oplus \mW$.
    Without loss of generality, we can replace the sequence $(x^k)_{k \geq 1}$ by the elements corresponding to this subsequence and thus assume $\lim_{k \to \infty} x^k_{\mU \oplus \mW}/\|x^k_{\mU \oplus \mW}\| = y^0$.

    By \eqref{EQ:fullprooflowerboundonf}, we can conclude two things:
    First, we need to have $\lim_{k \to \infty} \langle w, x^k_{\mU \oplus \mW} \rangle = \infty$. Without loss of generality we can thus assume $\langle w, x^k_{\mU \oplus \mW} \rangle > 0$ for all $k$, which implies $\langle w, y^0 \rangle \geq 0$.
    Second, we need to have 
    \[
        \lim_{k \to \infty} \frac{\left\|U\frac{x^k_{\mU \oplus \mW}}{\|x^k_{\mU \oplus \mW}\|}\right\|}{\left\langle w, \frac{x^k_{\mU \oplus \mW}}{\|x^k_{\mU \oplus \mW}\|} \right\rangle} = \lim_{k \to \infty} \frac{\|Ux^k_{\mU \oplus \mW}\|}{\langle w, x^k_{\mU \oplus \mW} \rangle} = 0.
    \]
    Indeed, if $\langle w, x^k_{\mU \oplus \mW} \rangle \leq c \|Ux^k_{\mU \oplus \mW}\|$ for all $k$ and a constant $c$ independent of $k$, then, by \eqref{EQ:fullprooflowerboundonf},
    \[
        f(x^k_{\mU \oplus \mW}) \geq q(0) -  (\|\nabla q(0)\| + c) \cdot \|Ux^k_{\mU \oplus \mW}\| + \frac{\mu}{2} \|Ux^k_{\mU \oplus \mW}\|^2.
    \]
    The right hand side is globally lower bounded since it is a quadratic polynomial with positive leading coefficient, which contradicts $\lim_{k \to \infty} f(x^k_{\mU \oplus \mW}) = - \infty$.
    
    In particular, since $\left\langle w, \frac{x^k_{\mU \oplus \mW}}{\|x^k_{\mU \oplus \mW}\|} \right\rangle \leq \|w\|$ this furthermore implies 
    \[
        \|Uy^0\| = \lim_{k \to \infty} \left\|U \frac{x^k_{\mU \oplus \mW}}{\|x^k_{\mU \oplus \mW}\|}\right\| = 0 
    \]
    and thus $Uy^0 = 0$ or in other word $y^0 \in \mU^\perp$.
    Hence, $y^0 \in \mW$ and since
    \[
        \|y^0\| = \lim_{k \to \infty} \left\|\frac{x^k_{\mU \oplus \mW}}{\|x^k_{\mU \oplus \mW}\|}\right\| = 1,
    \]
    we also need to have $\langle w, y^0\rangle > 0$ (as opposed to just $\langle w, y^0 \rangle \geq 0$), which completes the proof.
\end{proof}

\subsection{\texorpdfstring{Bounding the norm of minimizers in a subspace: Proof of \Cref{LEM:fullproofboundinUandWdirection}}{Bounding the norm of minimizers in a subspace}}\label{SEC:fullproofboundinUandWdirection}
In order to prove \Cref{LEM:fullproofboundinUandWdirection}, we want to use \eqref{EQ:fullprooflowerboundonf} and show that if $\|Ux\|$ or $|\langle w, x \rangle|$ is large, then $x$ cannot be a minimizer.
We need Farkas' lemma to get a certificate for the fact that there is no $x^0$ as in \Cref{LEM:fullproofconditionforunbounded}.
\begin{proposition}[Farkas' lemma, see e.g. {\cite[Corollary 7.1e]{Schrijver1994}}]\label{PROP:Farkaslemma}
    Let $C \in \R^{M \times N}$ be a matrix and $d \in \R^M$ be a vector.
    Exactly one of the following two statements hold:
    \begin{itemize}
        \item The system $Cx \leq d$ has a solution.
        \item There is a vector $y \in \R_{\geq 0}^M$ with $C^\top y = 0$ and $d^\top y < 0$.  
    \end{itemize}
\end{proposition}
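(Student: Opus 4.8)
The two alternatives are mutually exclusive: if $Cx \le d$ holds and $y \ge 0$ satisfies $C^\top y = 0$, then $d^\top y \ge (Cx)^\top y = x^\top(C^\top y) = 0$, so $d^\top y < 0$ is impossible. Hence the real content is the implication: if the system $Cx \le d$ has no solution, then there exists $y \ge 0$ with $C^\top y = 0$ and $d^\top y < 0$. The plan is to prove this by \emph{Fourier--Motzkin elimination}, which has the advantage that it produces the certificate $y$ explicitly as a nonnegative combination of the rows of $[\,C \mid d\,]$.

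Concretely, I would eliminate the variables $x_1, \dots, x_N$ one at a time. Suppose the current system is $C^{(j)} x^{(j)} \le d^{(j)}$ in the remaining variables $x^{(j)} = (x_j, \dots, x_N)$, carrying the invariant that every row of $[\,C^{(j)} \mid d^{(j)}\,]$ is a nonnegative linear combination of the rows of $[\,C \mid d\,]$ whose coefficients on the already-eliminated coordinates $x_1, \dots, x_{j-1}$ are zero. To eliminate $x_j$, partition the inequalities according to whether the coefficient of $x_j$ is positive, negative, or zero. The new system keeps the zero-coefficient inequalities unchanged and, for each positive/negative pair, forms the nonnegative combination in which the $x_j$-coefficients cancel; if $x_j$ appears with only one sign (or not at all), the corresponding inequalities impose no constraint on the projection and are simply discarded. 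Standard bookkeeping shows that the solution set of the new system is exactly the projection onto $(x_{j+1}, \dots, x_N)$ of the solution set of the old one, and that the invariant is preserved (a positive/negative combination of two nonnegative combinations is again nonnegative, and by construction the $x_j$-coefficient now vanishes).

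After eliminating all $N$ variables, we are left with a variable-free system: a finite list of inequalities of the form $0 \le \tilde d_i$. Applying the projection property $N$ times, this reduced system is feasible if and only if the original $Cx \le d$ is. Since we assumed $Cx \le d$ is infeasible, some $\tilde d_i < 0$. By the invariant, $\tilde d_i = y^\top d$ for some $y \ge 0$ with $y^\top C = 0$ (the coefficient on every coordinate $x_1, \dots, x_N$ having been driven to zero during elimination). This $y$ is precisely the certificate demanded by the second alternative, completing the proof.

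Two remarks on the obstacles. An alternative route applies the separating-hyperplane theorem to the point $d$ and the convex cone $K := C(\R^N) + \R^M_{\ge 0}$: infeasibility means $d \notin K$, a separating functional $y$ must be nonnegative on $K$ (hence $y \ge 0$ and $C^\top y = 0$, using that $K$ is a cone), while $y^\top d < 0$. The only genuine difficulty there is showing that $K$ is \emph{closed}, which is itself a Minkowski--Weyl/Farkas-type statement; the Fourier--Motzkin argument is self-contained precisely because it sidesteps this. Within the Fourier--Motzkin proof, the one point requiring care is maintaining the nonnegativity-of-multipliers invariant through all the degenerate cases (a variable occurring with a single sign, or an elimination step leaving no inequalities at all) — routine, but essential so that the final contradicting row genuinely comes equipped with a valid $y \ge 0$.
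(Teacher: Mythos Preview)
Your proof is correct. The paper does not actually prove this proposition: it is stated with a citation to Schrijver's textbook and used as a black box in the proof of \Cref{LEM:Farkaslemmaforboundedpolynomial}. Your Fourier--Motzkin elimination argument is one of the standard self-contained proofs (and is in fact the route taken in the cited reference), so there is nothing to compare against and nothing to add.
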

Applying this to the system from \Cref{LEM:fullproofconditionforunbounded}, we get the following.
\begin{lemma}\label{LEM:Farkaslemmaforboundedpolynomial}
    Let $f \in \Q[x]$ be a convex polynomial and let $P = \{Ax \leq b\}$ be a polyhedron.
    Let $U$ and $w$ as in \Cref{THM:quantstructure}.
    If $f$ is bounded from below on $P$, then there exist vectors $\lambda \in \R_{\geq 0}^n$ and $z \in \R^k$ such that
    \begin{equation}\label{EQ:Farkaslemmaforboundedpolynomial}
        w = A^\top \lambda + U^\top z.
    \end{equation}
    Furthermore, there exist such $\lambda$ and $z$ with $\bc(\lambda), \bc(z) \leq \poly(\enc{f},\, \bc(A))$.
\end{lemma}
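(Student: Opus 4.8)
The plan is to combine \Cref{LEM:fullproofconditionforunbounded} with Farkas' lemma (\Cref{PROP:Farkaslemma}), and then to control the bit length by passing to a vertex of the polyhedron of all Farkas certificates.

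Since $f$ is bounded from below on $P$, \Cref{LEM:fullproofconditionforunbounded} says that the system ``$Ax^0 \le 0$, $Ux^0 = 0$, $\langle w, x^0\rangle = 1$'' has no solution. I would rewrite it in the standard form $Cx^0 \le d$ by splitting each equality into two inequalities: with
\[
C = \begin{pmatrix} A \\ U \\ -U \\ w^\top \\ -w^\top \end{pmatrix}, \qquad d = \begin{pmatrix} 0 \\ 0 \\ 0 \\ 1 \\ -1 \end{pmatrix},
\]
the system $Cx^0 \le d$ is exactly the one above. By \Cref{PROP:Farkaslemma}, there is $y \ge 0$, written in blocks $y = (\lambda, z_+, z_-, s_+, s_-)$ matching the row blocks of $C$, with $C^\top y = 0$ and $d^\top y < 0$; that is, $A^\top \lambda + U^\top(z_+ - z_-) + (s_+ - s_-)w = 0$ and $s_+ - s_- < 0$. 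Rescaling $y$ by $1/(s_- - s_+) > 0$ we may assume $d^\top y = -1$, i.e. $s_- - s_+ = 1$, and then $w = A^\top \lambda + U^\top (z_+ - z_-)$; so \eqref{EQ:Farkaslemmaforboundedpolynomial} holds with $z = z_+ - z_-$. This already proves existence.

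For the bit-length bound I would consider the polyhedron of normalized certificates
\[
Y = \{y \ge 0 : C^\top y = 0,\ d^\top y = -1\}.
\]
It is nonempty by the previous paragraph, and since it lies in the nonnegative orthant it contains no line; hence it has a vertex $y^*$. A vertex of a rational polyhedron is a basic feasible solution, so by Cramer's rule its bit length is polynomial in the bit length of the defining data, which here consists only of the entries of $C$ and $d$ (together with the $0$'s and $\pm 1$'s coming from the orthant and normalization constraints); thus $\bc(y^*) \le \poly(\bc(C), \bc(d))$ by standard estimates (see, e.g., \cite{Schrijver1994}). Reading off $y^* = (\lambda, z_+, z_-, s_+, s_-)$ and setting $z = z_+ - z_-$ gives $\lambda \ge 0$ and $z$ satisfying \eqref{EQ:Farkaslemmaforboundedpolynomial} with $\bc(\lambda), \bc(z) \le \poly(\bc(C), \bc(d))$. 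Finally, \Cref{THM:quantstructure} yields $\bc(U), \bc(w) \le \poly(\enc{f})$, and $d$ has $O(n)$ entries in $\{0, \pm 1\}$, whence $\bc(C) + \bc(d) \le \poly(\enc{f}, \bc(A))$, which finishes the argument.

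The argument is essentially routine; the points that require care are the correct translation of the equality constraints into Farkas-standard form (with the accompanying sign bookkeeping), the observation that the certificate polyhedron $Y$ is pointed so that a vertex exists, and the invocation of the standard bit-length bound for vertices of rational polyhedra. The last of these, combined with the control on $\bc(U)$ and $\bc(w)$ supplied by \Cref{THM:quantstructure}, is precisely what upgrades the qualitative Farkas lemma to the quantitative statement we need; I do not anticipate any genuine obstacle beyond this bookkeeping.
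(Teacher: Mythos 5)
Your proposal is correct and follows essentially the same route as the paper: both apply Farkas' lemma (\Cref{PROP:Farkaslemma}) to the infeasible system from \Cref{LEM:fullproofconditionforunbounded} with exactly the same stacked matrix $C$ and vector $d$, split the multiplier into blocks, and rescale so that the $w$-coefficient equals $-1$. The only difference is the final bit-length step, where you pass to a vertex of the pointed certificate polyhedron $\{y \ge 0 : C^\top y = 0,\ d^\top y = -1\}$ and invoke Cramer-rule bounds, while the paper instead cites \Cref{PROP:solvelinearprogram} to produce a certificate of polynomial bit size; both are standard and equally valid for the existence claim.
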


\begin{proof}
    Since $f$ is bounded from below on $P$, there does not exist a vector as in \Cref{LEM:fullproofconditionforunbounded}, i.e. there is no $x^0$ with $Ax^0 \leq 0$, $Ux^0 = 0$ and $w^\top x^0 = 1$.
    By applying \Cref{PROP:Farkaslemma} to the system 
    \[
        C = \begin{bmatrix} A\\U\\-U\\w^\top\\-w^\top \end{bmatrix}, \: d = \begin{bmatrix}0\\0\\0\\1\\-1\end{bmatrix}
    \]
    we get that there is a vector $y \geq 0$ and $C^\top y = 0$ and $d^\top y < 0$. Decomposing
    \[
        y = \begin{bmatrix}\lambda'\\y^1\\y^2\\\alpha_1\\\alpha_2\end{bmatrix}
    \]
    for $\lambda' \in \R_{\geq 0}^n$, $y^1,y^2 \in \R_{\geq 0}^k$, $\alpha_1,\alpha_2 \in \R_{\geq 0}$, we get
    \[
        A^\top \lambda' + U^\top (y^1-y^2) + (\alpha_1-\alpha_2)w = 0 \text{ and } \alpha_1-\alpha_2 < 0.
    \]
    Rescaling this by $\frac{1}{\alpha_2-\alpha_1} > 0$ and defining $\lambda = \frac{1}{\alpha_2-\alpha_1} \lambda'$ and $z = \frac{1}{\alpha_2-\alpha_1} (y^1-y^2)$, we get 
    \[
        A^\top \lambda + U^\top z = w.
    \]
    By \Cref{PROP:solvelinearprogram}, we can even get $\lambda$ and $z$ with
    \[
        \bc(\lambda), \bc(z) \leq \poly(\bc(C),\, \bc(d)) = \poly(\bc(A),\, \bc(U),\, \bc(w)) = \poly(\enc{f},\, \bc(A)),
    \]
    where the last step used that $\bc(U), \bc(w) \leq \poly(\enc{f})$ by \Cref{THM:quantstructure}.
    Note that, since rescaling~$y$ by positive scalars does not change feasibility, we can replace $d^\top y < 0$ by $d^\top y = -1$  (thus making it a linear program as in \Cref{PROP:solvelinearprogram}).
\end{proof}

We can now prove \Cref{LEM:fullproofboundinUandWdirection}.

\begin{proof}[Proof of \Cref{LEM:fullproofboundinUandWdirection}]
    Let $\lambda \in \R^n_{\geq 0}$ and $z \in \R^k$ as in \Cref{LEM:Farkaslemmaforboundedpolynomial}.
    For any $x \in \R^n$ we get
    \[
        \langle w, x \rangle = \langle A^\top \lambda, x\rangle + \langle U^\top z, x \rangle = \langle \lambda, Ax\rangle + \langle z, Ux \rangle.
    \]
    For $x \in P$ we have $Ax \leq b$, which together with $\lambda \geq 0$ implies
    \[
        \langle \lambda, Ax\rangle \leq \lambda^\top b.
    \]
    Thus, we get for any $x \in P$    
    \begin{equation}\label{EQ:fullproofboundwcompintermsofUcomp}
        |\langle w, x \rangle| \leq \|\lambda \| \cdot \|b\| + \|z\| \cdot \|Ux\|.
    \end{equation}
    Using \eqref{EQ:fullproofboundwcompintermsofUcomp} in \eqref{EQ:fullprooflowerboundonf}, we get for all $x \in P$
    \begin{align*}
         f(x) &\geq q(0) - \|\nabla q(0)\| \cdot \|Ux\| - \|\lambda \| \cdot \|b\| - \|z\| \cdot \|Ux\| + \frac{\mu}{2} \|Ux\|^2\\
         &= (q(0) - \|\lambda \| \cdot \|b\|) + (- \|\nabla q(0)\| - \|z\|) \cdot \|Ux\| +  \frac{\mu}{2} \|Ux\|^2
    \end{align*}
    Fix $a \in P$ with $\|a\| \leq 2^{\poly(\enc{P})}$ (such a point exists by \Cref{PROP:solvelinearprogram}).
    We are interested when the lower bound is at least $f(a)$, i.e. when
    \[
    (q(0) - \|\lambda \| \cdot \|b\| - f(a)) + (- \|\nabla q(0)\| - \|z\|) \cdot \|Ux\| +  \frac{\mu}{2} \|Ux\|^2 \geq 0
    \]
    Since this is a quadratic polynomial in $\|Ux\|$ with positive leading coefficient, as $\|Ux\| \to \infty$ this is positive.
    Note that for $x=a$ the lower bound needs to be non-positive, i.e. this polynomial in $\|Ux\|$ has at least one root.
    Thus, if $\|Ux\|$ is larger than the largest root, it needs to be positive.
    That is, whenever
    \[
        \|Ux\| > \frac{ \|\nabla q(0)\| + \|z\| + \sqrt{(\|\nabla q(0)\| + \|z\|)^2 + 2\mu (f(a) + \|\lambda \| \cdot \|b\| - q(0))}}{\mu}
    \]
    for some $x \in P$, then we have $f(x) > f(a)$.
    Furthermore, combining this with \eqref{EQ:fullproofboundwcompintermsofUcomp}, whenever
    \[
        |\langle w, x \rangle| > \|\lambda \| \cdot \|b\| + \|z\| \cdot \frac{ \|\nabla q(0)\| + \|z\| + \sqrt{(\|\nabla q(0)\| + \|z\|)^2 + 2\mu (f(a) + \|\lambda \| \cdot \|b\| - q(0))}}{\mu}
    \]
    for some $x \in P$, then we also have $f(x) > f(a)$.
    
    Thus, if $\|Ux\|$ or $|\langle w, x \rangle|$ are large, then $f(x) > f(a)$. We now want to quantify this in terms of the bit length of the input.
    We do this term by term:
    \begin{itemize}
        \item By \Cref{THM:quantstructure}, we have $\bc(q) \leq \poly(\enc{f})$ and thus $|q(0)|, \|\nabla q(0)\| \leq 2^{\poly(\enc{f})}$.
        \item By \Cref{THM:quantstructure}, we have $\mu \geq 2^{-\poly(\enc{f})}$.
        \item By \Cref{LEM:Farkaslemmaforboundedpolynomial}, we have $\bc(z) \leq \poly(\bc(A),\, \bc(U),\, \bc(w)) \leq \poly(\enc{f},\, \bc(A))$ and thus $\|z\| \leq 2^{\poly(\enc{f}, \,\enc{P})}$.
        \item By \Cref{LEM:Farkaslemmaforboundedpolynomial}, we have $\bc(\lambda) \leq \poly(\bc(A),\, \bc(U),\, \bc(w)) \leq \poly(\enc{f},\, \bc(A))$ and thus $\|\lambda\| \leq 2^{\poly(\enc{f}, \,\enc{P})}$.
        \item Since we have $\|a\| \leq 2^{\poly(\enc{P})}$, we can bound $|f(a)| \leq 2^{\poly(\enc{f})} \cdot \|a\|^d$ and thus we get $|f(a)| \leq 2^{\poly(\enc{f}, \,\enc{P})}$.
        \item We have $\|b\| \leq 2^{\poly(\enc{P})}$.
    \end{itemize}
    Putting this all together, we get that for all $x \in P$ we have
    \begin{equation}\label{EQ:fullproofUxorAtoplambdaxlarge}
        \|Ux\| > 2^{\poly(\enc{f}, \,\enc{P})} \quad \text{or} \quad |\langle w, x \rangle| > 2^{\poly(\enc{f}, \,\enc{P})} \quad \Longrightarrow \quad f(x) > f(a).
    \end{equation}
    To complete the proof, it remains to connect $\|Ux\|$ and $|\langle w, x\rangle|$ to $x_{\mU \oplus \mW}$.
    Note that we have
    \[
        x_{\mU \oplus \mW} = \sum_{i=1}^k \frac{(Ux)_i}{\|U_i\|^2}U_i + \frac{\langle w, x\rangle}{\|w\|^2}w,
    \]
    where $U_i$ are the rows of $U$ (that are orthogonal and span $\mU$).
    Thus, we have 
    \[
        \|x_{\mU \oplus \mW}\| \leq \sum_{i=1}^k \frac{|(Ux)_i|}{\|U_i\|} + \frac{|\langle w, x\rangle|}{\|w\|}.
    \]
    First, since $\bc(U) \leq \poly(\enc{f})$ by \Cref{THM:quantstructure} we have $\|U_i\| \geq 2^{-\poly(\enc{f})}$.
    Furthermore, we have that $\sum_{i=1}^k |(Ux)_i| = \|Ux\|_1 \leq \sqrt{k} \|Ux\|$.
    Thus, we get that
    \[
        \sum_{i=1}^k \frac{|(Ux)_i|}{\|U_i\|} \leq 2^{\poly(\enc{f})} \|Ux\|.
    \]
    Second, by \Cref{THM:quantstructure}, we have $\bc(w) \leq \poly(\enc{f})$ and thus $\|w\| \geq 2^{-\poly(\enc{f})}$.
    Together, this gives
    \[
        \|x_{\mU \oplus \mW}\| \leq 2^{\poly(\enc{f})} (\|Ux\|_2 + |\langle w, x\rangle|)
    \]
    and hence, by \eqref{EQ:fullproofUxorAtoplambdaxlarge}, we can conclude that, for all $x \in P$,
    \[
        \|x_{\mU \oplus \mW}\| > 2^{\poly(\enc{f}, \,\enc{P})} \Longrightarrow f(x) > f(a).\qedhere
    \]
\end{proof}

\subsection{\texorpdfstring{Finding a minimizer of small norm: Proof of \Cref{LEM:fullproofboundinVdirection}}{Finding a minimizer of small norm}}\label{SEC:fullproofboundinVdirection}
It remains to prove \Cref{LEM:fullproofboundinVdirection}, which shows that we can lift a point $x^* \in P$ with small norm in some subspace $\mZ$ to a point $x' \in P$, whose norm in the entire space is small.

\begin{proof}[Proof of \Cref{LEM:fullproofboundinVdirection}]
    Consider the polyhedron
    \begin{align*}
        P' &= \{x \in \R^n : Ax \leq b, \: \langle x, x^i \rangle = \langle x^*, x^i \rangle \: \forall 1 \leq i \leq k\}\\
        &= \left\{ x \in \R^n : A' x \leq b'\right\}
    \end{align*}
    for
    \[
        A' = \begin{bmatrix} A \\ \phantom{-}{x^1}^\top \\ - {x^1}^\top \\ \vdots \\ \phantom{-}{x^k}^\top \\ -{x^k}^\top \end{bmatrix} \quad \text{and} \quad b' = \begin{bmatrix} b\\\phantom{-}\langle x^*, x^1 \rangle\\-\langle x^*, x^1 \rangle\\ \vdots \\ \phantom{-}\langle x^*, x^k \rangle\\-\langle x^*, x^k \rangle \end{bmatrix}.
    \]
    Note that $x' \in P'$ if and only if $x' \in P$ and $x^*_\mZ = x'_\mZ$.
    Thus, it remains to show that there is a $x' \in P'$ with $\|x'\| \leq \poly(\|x^*_\mZ\|,\, 2^{\poly(\enc{P}, \,\bc(\mZ))})$.
    
    Notice that we cannot apply \Cref{PROP:solvelinearprogram} immediately since the vector $b'$ might not be rational.
    Instead, we want to pick a vertex of $P'$ and argue that it has small norm (even though it might also not be rational).
    However, it could be that $P'$ has no vertices.
    Namely, $P'$ has no vertices if and only if $\rank(A') < n$ or in other words $\ker(A') \neq \{0\}$ \cite[section 8.5]{Schrijver1994}.
    Let $y^1, \ldots, y^\ell$ be a basis for $\ker(A')$.
    We have $\ell \leq n$ and since the $y^j$ are solutions to the linear system $A' y = 0$, they satisfy $\bc(y^j) \leq \poly(\bc(A),\, \bc(x^1),\, \ldots,\, \bc(x^k)) \leq \poly(\bc(A),\, \bc(\mZ))$ \cite[Corollary 3.2d]{Schrijver1994}.
    Consider the polyhedron
    \begin{align*}
        P'' &= \{x \in \R^n : Ax \leq b, \: \langle x, x^i \rangle = \langle x^*, x^i \rangle \: \forall 1 \leq i \leq k, \: \langle x, y^j\rangle = 0 \: \forall 1 \leq j \leq \ell\}\\
        &= \left\{ x \in \R^n : A'' x \leq b''\right\},
    \end{align*}
    where
    \[
        A'' = \begin{bmatrix} A \\ \phantom{-}{x^1}^\top \\ - {x^1}^\top \\ \vdots \\ \phantom{-}{x^k}^\top \\ - {x^k}^\top \\ \phantom{-}y^1 \\ -y^1 \\ \vdots \\ \phantom{-}y^\ell \\ -y^\ell \end{bmatrix} \quad \text{and} \quad b'' = \begin{bmatrix} b\\\phantom{-}\langle x^*, x^1 \rangle\\-\langle x^*, x^1 \rangle \\ \vdots \\ \phantom{-}\langle x^*, x^k \rangle\\-\langle x^*, x^k \rangle \\ 0 \\ 0 \\ \vdots \\ 0 \\ 0\end{bmatrix}.
    \]
    We claim that the projection ${x^*}' \coloneqq x^*_{\spann{y^1, \ldots, y^\ell}^\perp}$ of $x^*$ to $\spann{y^1, \ldots, y^\ell}^\perp$ is in $P''$, i.e. that $P''$ is feasible.
    Since $x^* \in P'$ and $y^1, \ldots, y^\ell \in \ker(A')$, we have $A'{x^*}' \leq b'$.
    Clearly, also $\langle {x^*}', y^j \rangle = 0$, so we indeed have ${x^*}' \in P''$.
    Furthermore, since $P'' \subseteq P'$, it is sufficient to find a point $x' \in P''$ that satisfies $\|x'\| \leq \poly(\|x^*_\mZ\|,\, 2^{\poly(\enc{P}, \,\bc(\mZ))})$.
    
    Note that since the $y^j$ generate $\ker(A')$, we have $\ker(A'') = \{0\}$.
    Thus, $\rank(A'') = n$ and $P''$ has a vertex $x'$.
    This vertex is the solution to a subsystem
    \[
        \hat{A}x' = \hat{b},
    \]
    where $\hat{A}$ contains $n$ linearly independent rows of $A''$ and $\hat{b}$ contains the corresponding elements of $b''$ (see e.g. \cite[equation (23)]{Schrijver1994}).
    This means that we have
    \[
        x' = \hat{A}^{-1} \hat{b}
    \]
    and thus also
    \[
        \|x'\| \leq \|\hat{A}^{-1}\| \|\hat{b}\|.
    \]
    Note that $\bc(\hat{A}) \leq \bc(A'') \leq \poly(\bc(A),\, \bc(\mZ))$.
    We also have $\bc(\hat{A}^{-1}) \leq \poly(\bc(\hat{A}))$ \cite[Corollary 3.2a]{Schrijver1994}.
    Thus, we get 
    \[
        \|\hat{A}^{-1}\| \leq \sqrt{n} \|\hat{A}^{-1}\|_F \leq 2^{\poly(\bc(\hat{A}))} \leq 2^{\poly(\bc(A),\, \bc(\mZ))}.
    \]
    For this, note that $n \leq \bc(\hat{A})$.
    Furthermore, we have
    \[
        \|\hat{b}\|^2 \leq \|b\|^2 + \sum_{i = 1}^k \langle x^*, x^i\rangle^2 \leq \|b\|^2 + \sum_{i=1}^k \|x^*_\mZ\| \|x^i\| \leq \poly(\|x^*_\mZ\|,\, 2^{\poly(\bc(b), \bc(\mZ))}).
    \]
    Combining these, we get
    \[
        \|x'\| \leq \poly(\|x^*_\mZ\|,\, 2^{\poly(\enc{P}, \,\bc(\mZ))}),
    \]
    which completes the proof.
\end{proof}

\section*{Acknowledgments}
\addcontentsline{toc}{section}{Acknowledgments}
We thank Amir Ali Ahmadi for bringing the problem of efficient minimization of convex polynomials to our attention. We thank Rares-Darius Buhai, Jarosław Błasiok,  Hongjie Chen, Jingqiu~Ding and Yiding Hua for valuable discussions at an early stage of the project.
We thank Abraar~Chaudhry for a discussion on quadratic lower bounds for convex polynomials.
We thank Kevin~Shu for a discussion on polynomials with vanishing Hessian determinant, and in particular for pointing us to the Schwartz-Zippel lemma. 

\smallskip\noindent
This research was supported by the Swiss National Science Foundation (SNSF), grant no. 10004947.

\appendix
\section{Linear algebra in the bit model}
\label{APP:LA}
The following proposition shows that one can compute an orthogonal basis of small bit size for the inverse image of a (one-dimensional) subspace under a rational matrix, and for its complement. This captures the linear algebra required in our proof of~\Cref{THM:quantstructure}. 
\begin{proposition} \label{PROP:polytimeLA}
    Let $A \in \Q^{M \times N}$, $M \geq N$, and $b \in \Q^M$. Define the subspaces
    \[
        \mL = \{v \in \R^N : A \cdot v \in \spann{b} \}, \quad \mU = \mL^\perp.
    \]
    We then have $\mL = \ker(A) \oplus \spann{w}$, where $w \in \ker(A)^\perp$ is either zero or $A \cdot w = b$.    
    We can compute the vector $w$; an orthogonal basis $v_1, \ldots, v_\ell$ for $\ker (A)$; an orthogonal basis $u_1, \ldots, u_k$ for~$\mU$ in polynomial time in $\bc(A) + \bc(b)$. In particular, these have polynomial bit size in $\bc(A) + \bc(b)$.
\end{proposition}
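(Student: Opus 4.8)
The plan is to reduce the proposition to two polynomial-time primitives in the bit model that are by now standard: solving a rational system of linear equations (Gaussian elimination, which produces solutions and kernel bases of bit length $\poly(\bc(A)+\bc(b))$, cf.~\cite{Schrijver1994}), and Gram--Schmidt orthogonalization performed \emph{without} the normalization step, which keeps all intermediate and output vectors rational with polynomially bounded bit length. Every object in the statement will be assembled from a constant number of calls to these two routines.

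First I would determine $\mL$ by solving the system $A v = b$. Since $\spann{b}$ is one-dimensional, a vector $v$ lies in $\mL$ exactly when $A v = 0$ or $A v = \lambda b$ for some $\lambda \neq 0$; the latter is possible only if $b$ belongs to the column space of $A$. Hence, if $A v = b$ is infeasible, then $\mL = \ker(A)$ and I set $w = 0$. Otherwise Gaussian elimination returns a rational $v_0$ with $A v_0 = b$ of bit length $\poly(\bc(A)+\bc(b))$, and $\mL = \ker(A) + \spann{v_0}$.

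Next I would compute a basis of $\ker(A)$ by solving the homogeneous system $A v = 0$, and apply Gram--Schmidt (without normalization) to obtain an \emph{orthogonal} basis $v_1, \dots, v_\ell$ of $\ker(A)$ of polynomial bit length. Projecting $v_0$ off $\ker(A)$ then isolates $w$:
\[
    w \;=\; v_0 \;-\; \sum_{i=1}^{\ell} \frac{\langle v_0, v_i\rangle}{\langle v_i, v_i\rangle}\, v_i \;\in\; \ker(A)^{\perp},
\]
which satisfies $A w = A v_0 = b$ (as $v_0 - w \in \ker A$), has polynomial bit length, and gives $\mL = \ker(A) \oplus \spann{w}$. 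For $\mU = \mL^{\perp}$, observe that $v_1, \dots, v_\ell, w$ is already an orthogonal basis of $\mL$ (since $w \perp \ker A$); running Gram--Schmidt (again without normalization) on the concatenated list $v_1, \dots, v_\ell, w, e_1, \dots, e_N$ and discarding vectors that orthogonalize to zero, the surviving vectors beyond the first $\dim \mL$ of them form the desired orthogonal basis $u_1, \dots, u_k$ of $\mU$. (Equivalently, $\mU$ is the kernel of the matrix whose rows are $v_1, \dots, v_\ell, w$, which could be handled by one more homogeneous solve followed by one more orthogonalization.)

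The only step that is not purely routine is the bit-complexity accounting for Gram--Schmidt: the orthogonalized vectors have coordinates that are ratios of Gram subdeterminants of the input vectors, so Hadamard's inequality together with the usual bounds on determinants and inverses of rational matrices (cf.~\cite{Schrijver1994}) keep all numerators and denominators of polynomial bit length. This is exactly the reason the proposition promises only \emph{orthogonal} rather than orthonormal bases --- dividing through by the (typically irrational) norms would leave the rational world and is precisely what one must avoid. Granting this standard fact, the entire computation runs in time $\poly(\bc(A)+\bc(b))$, and $w$, the $v_i$, and the $u_j$ all have polynomial bit length.
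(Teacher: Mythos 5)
Your proposal is correct and follows essentially the same route as the paper's proof: solve $Av=b$ (or detect infeasibility) and compute a kernel basis, then obtain $w$ by projecting the particular solution off $\ker(A)$, and finally extend by unnormalized Gram--Schmidt to get orthogonal bases of $\ker(A)$ and $\mU=\mL^\perp$, with bit sizes controlled by the standard determinant-based bounds. The only cosmetic difference is that the paper performs the linear algebra via the Hermite normal form (and completes to a basis using its unimodular complement rather than the standard basis vectors $e_1,\ldots,e_N$), a substitution the paper itself notes is interchangeable with Gaussian elimination over $\Q$.
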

\Cref{PROP:polytimeLA} follows from standard results on linear system solving and Gram-Schmidt orthogonalization in the bit model. We include a brief proof for completeness. We rely on the \emph{Hermite normal form} of an integer matrix, which is an integer analog of the echelon form. The same conclusion could be reached using a polynomial-time algorithm to compute the echelon form (over~$\Q$), as described, e.g., in~\cite{Schrijver1994}.

\begin{lemma}[Hermite normal form] \label{LEM:HNF}
Let $A \in \Z^{M \times N}$, $M \geq N$ be of column rank $R$. We can compute in time polynomial in $\bc(A)$ a decomposition
\[
    A \cdot 
    \begin{bmatrix}
    \begin{array}{c|c}
    U & K
    \end{array}
    \end{bmatrix} = 
    \begin{bmatrix}
    \begin{array}{c|c}
    H & 0
    \end{array}
    \end{bmatrix},
\]
where $U \in \Z^{N \times R}$, $K \in \Z^{N \times (N - R)}$ satisfy $[U \mid K] \in \mathrm{GL}_N(\Z)$, and $H \in \Z^{M \times R}$ is lower triangular. In particular these matrices have bit length polynomial in $\bc(A)$.    
\end{lemma}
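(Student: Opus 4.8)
The plan is to obtain $[U \mid K]$ as the record of a sequence of elementary \emph{integer column operations} applied to $A$. Recall that swapping two columns, negating a column, and adding an integer multiple of one column to another each correspond to right-multiplication by a matrix in $\mathrm{GL}_N(\Z)$; if we apply the same sequence of operations to the $N\times N$ identity matrix, the resulting matrix $V = [U\mid K]$ lies in $\mathrm{GL}_N(\Z)$ and satisfies $AV = [H \mid 0]$, where $[H\mid 0]$ is the column-reduced form of $A$ (nonzero columns collected on the left). Taking $R$ to be the number of nonzero columns of this form — which will turn out to equal the column rank of $A$ — we read off $U$ as the first $R$ columns of $V$ and $K$ as the last $N-R$, giving exactly the claimed decomposition, with $AK = 0$ so that the columns of $K$ form a lattice basis of $\{v\in\Z^N : Av = 0\}$.

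To produce the reduction I would process rows $i = 1, 2, \dots, M$ in turn, maintaining the invariant that after step $i-1$ the already-used pivot columns form a lower-triangular block supported on rows $1,\dots,i-1$. At step $i$, run the extended Euclidean algorithm via column operations on the entries of row $i$ lying in the not-yet-used columns: this combines them into a single entry equal to their gcd while zeroing the rest, and one then moves this pivot column into the next pivot slot, swapping any column that has become entirely zero to the far right. After all $M$ steps the nonzero columns are lower triangular, so their number $R$ equals $\rank(A)$. (If the normalized Hermite form — positive diagonal, off-diagonal entries reduced modulo the diagonal — is wanted, one appends a final cleanup pass of column operations; the statement as given does not need this, so I would skip it.)

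The one genuinely non-trivial point, and the reason this is a theorem rather than a routine observation, is bounding the bit length of the intermediate matrices: carried out naively, the Euclidean steps can roughly square the size of the entries at every stage, causing exponential blowup. The standard remedy, due to Kannan and Bachem, is to keep as an additional invariant that (after a suitable reordering) the leading $j\times j$ submatrix is already in Hermite normal form with nonzero determinant $g_j$, and to reduce every entry of the currently active columns modulo $g_j$ after each pivot step; this leaves the relevant column spans — hence all invariants — unchanged, while $|g_j|$ is (up to sign) a gcd of $j\times j$ minors of $A$ and thus bounded by $2^{\poly(\bc(A))}$ via Hadamard's inequality. Consequently all intermediate entries, and in particular the entries of the accumulated transformation $V = [U\mid K]$ and of $H$, stay of magnitude $2^{\poly(\bc(A))}$, the number of arithmetic operations is $\poly(M,N)$, and the whole computation runs in time $\poly(\bc(A))$; the claimed bit-length bounds follow immediately. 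Alternatively, one may invoke a polynomial-time Hermite-normal-form algorithm (see, e.g., \cite{Schrijver1994}) as a black box, the bit bounds then being automatic for the output of a polynomial-time procedure; I expect the bit-complexity control to be the only part worth spelling out.
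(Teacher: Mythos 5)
Your proposal is correct, but it takes a different route from the paper: the paper does not prove the lemma from scratch at all, it simply applies \cite[Propositions 6.3 and 6.6]{Storjohann2000} to $A^\top$, using those results as a black box both for the triangularization and for the existence of a unimodular multiplier of polynomially bounded bit length. Your main argument instead sketches the Kannan--Bachem-style algorithm directly: column reduction via the extended Euclidean algorithm, with intermediate entries kept small by reducing modulo the determinant $g_j$ of the leading principal block (legitimate because, by Cramer's rule, $g_j$ times a standard basis vector lies in the lattice generated by those pivot columns, so the reduction is a genuine unimodular column operation), and $|g_j|$ bounded via Hadamard's inequality. This is the standard polynomial-time analysis and buys a self-contained proof where the paper only has a citation. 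The one place where your sketch is lighter than it should be is the bit-length bound on the accumulated transformation $V=[U\mid K]$ itself: the modular reduction argument directly controls the entries of the working copy of $A$, and controlling the recorded multiplier requires either applying the size-reduction to the stacked matrix and arguing it remains valid, or a separate argument (this is precisely what \cite[Proposition 6.6]{Storjohann2000} supplies, and why the paper cites two propositions rather than one). Your closing fallback --- invoking a polynomial-time HNF algorithm as a black box, e.g.\ from \cite{Schrijver1994} --- is essentially the paper's own proof, except that the bit bound on the unimodular multiplier should still be stated as part of what the cited result guarantees rather than as ``automatic'' for any polynomial-time procedure (a polynomial-time algorithm need not output a multiplier at all unless it is designed to).
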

\begin{proof}
    Apply \cite[Proposition 6.3]{Storjohann2000} and \cite[Proposition 6.6]{Storjohann2000} to $A^\top$.
\end{proof}

\begin{lemma}[Gram-Schmidt orthogonalization] \label{LEM:LAGS}
    Let $v_1, \ldots, v_k \in \Q^{N}$. Then, there exist pairwise orthogonal vectors $u_1, \ldots, u_k \in \Q^N$, such that
    \[
        \spann{u_1, \ldots, u_i} = \spann{v_1, \ldots, v_i} \quad (\forall i \leq k).
    \]
    Moreover, these vectors can be computed in polynomial time in $N$, $k$ and $B \coloneqq \max_{i}\bc(v_i)$. 
\end{lemma}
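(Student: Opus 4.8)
The plan is to run the classical Gram--Schmidt recursion with exact rational arithmetic, and to control the bit complexity by expressing each output vector \emph{directly} in terms of the input vectors rather than in terms of the previously computed output vectors --- this is the key to avoiding a cascading blow-up of numerators and denominators across the $k$ iterations.

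Concretely, I would process $v_1, \dots, v_k$ in order, maintaining an index set $J \subseteq \{1, \dots, i-1\}$ with the property that $(v_j)_{j \in J}$ is a basis of $\spann{v_1, \dots, v_{i-1}}$. Given $v_i$, write $J = \{j_1 < \dots < j_r\}$, form the Gram matrix $G \in \Q^{r \times r}$ with $G_{ab} = \langle v_{j_a}, v_{j_b} \rangle$ and the vector $g \in \Q^r$ with $g_a = \langle v_i, v_{j_a} \rangle$; since $(v_j)_{j \in J}$ is linearly independent, $G$ is invertible. Solving $G c = g$ and setting $u_i \coloneqq v_i - \sum_{a=1}^r c_a v_{j_a}$ produces a vector orthogonal to every $v_{j_a}$, hence to all of $\spann{v_1, \dots, v_{i-1}}$; if $u_i \neq 0$ we append $i$ to $J$, otherwise we leave $J$ unchanged. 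Correctness is an easy induction: the nonzero vectors among $u_1, \dots, u_i$ are pairwise orthogonal because $u_i \perp \spann{v_1, \dots, v_{i-1}} = \spann{\{u_j : j \le i-1,\ u_j \neq 0\}}$, and $\spann{u_1, \dots, u_i} = \spann{v_1, \dots, v_{i-1}} + \spann{u_i} = \spann{v_1, \dots, v_i}$ since $u_i - v_i \in \spann{v_1, \dots, v_{i-1}}$.

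The only real obstacle is the bit complexity. Clearing denominators, each $v_a$ equals $\tilde v_a / d_a$ with $\tilde v_a \in \Z^N$ and $d_a \in \Z_{>0}$ of bit length $\poly(N, B)$, so every inner product $\langle v_a, v_b \rangle$ is a rational of bit length $\poly(N, B)$; hence at each step $G$ and $g$ have entries of bit length $\poly(N, B)$. Because $G$ is a nonsingular rational matrix, Cramer's rule together with Hadamard's determinant inequality (e.g.\ \cite[Corollary 3.2]{Schrijver1994}) shows that $c = G^{-1} g$ has bit length $\poly(r, \bc(G), \bc(g)) = \poly(N, k, B)$, and therefore so does $u_i = v_i - \sum_a c_a v_{j_a}$. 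Crucially this estimate does not accumulate over the iterations, since $u_i$ is written through the original $v$'s only. Finally, all the operations involved --- forming the inner products, solving $G c = g$, and testing whether $u_i = 0$ in order to update $J$ --- run in time polynomial in $N$, $k$ and $B$ by standard results on linear algebra in the bit model, which completes the proof.
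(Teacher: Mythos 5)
Your proposal is correct and is essentially the argument the paper relies on: the paper's proof is a one-liner noting that the recursively defined Gram--Schmidt coefficients do not blow up and deferring the analysis to \cite{Lenstra1982} or \cite{ErlingssonKaltofenMusser:GramSchmidt}, and the standard analysis there rests on exactly the point you make explicit, namely that each $u_i$ can be expressed through the \emph{original} vectors via the normal equations $Gc=g$ for a Gram matrix of a linearly independent subset, so Cramer's rule and Hadamard-type bounds give bit length $\poly(N,k,B)$ with no accumulation across iterations. Your write-up is simply a self-contained version of that cited analysis (and it correctly handles dependent inputs, where some $u_i=0$, which is all the paper needs in the proof of \Cref{PROP:polytimeLA}).
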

\begin{proof}
    The only thing to check is that the recursively defined coefficients that appear in the Gram-Schmidt procedure do not grow too large. Such an analysis is carried out, e.g., in~\cite{Lenstra1982} or~\cite{ErlingssonKaltofenMusser:GramSchmidt}.
\end{proof}

\begin{proof}[Proof of~\Cref{PROP:polytimeLA}]
    After multiplication by an integer of bit length $\poly(\bc(A))$, we may assume without loss of generality that $A$ is an integer matrix. Let $R$ be the column rank of $A$.
    Compute the Hermite normal form $A \cdot [U \mid  K] = [H \mid 0]$ as in~\Cref{LEM:HNF}. Note that the columns of $[U \mid K]$ form a basis for $\R^N$, and that the columns of $K$ form a basis for~$\ker(A)$. The lower triangular system $H \cdot y = b$ can be solved via forward substitution; this either yields a solution $y \in \Q^{R}$ (of bit length at most $\poly(\bc(H))$), or shows that no solution exists at all. In the former case, $\hat w \coloneqq U\cdot y$ satisfies $A \cdot \hat w = b$ (and $\bc(\hat w) \leq \poly(\bc(U),\, \bc(H))$. It remains to orthogonalize: 
    Apply \Cref{LEM:LAGS} to the columns of $K$, $\hat w$, and the columns of $U$, \emph{in that order}. This yields orthogonal vectors  $v_1, \ldots, v_{N-R}, \, w, \, u_{1}, \ldots, u_{R}$ (of appropriately bounded bit length). Note that $v_1,\ldots,v_{N-R}$ span $\ker(A)$, that $v_1,\ldots,v_{N-R}, w$ span $\mathcal{L}$, and that $u_1, \ldots, u_{R}$ span $\mathcal{L}^\perp$.
    Note that either $w$, or exactly one of the $u_i$ will be zero. Discarding that vector completes the proof.
\end{proof}

\section{A polynomial with identically vanishing hessian} \label{APP:HessianCounterexample}
We provide an example adapted from~\cite{Gordan1876, Garbagnati2009} of a polynomial whose Hessian is everywhere singular, but which does not have a direction of linearity. 
Let $p(x) = x_1x_4^2 + 2x_2x_4x_5 + x_3x_5^2$.
Then,
\[
\nabla^2 p (x) = \begin{pmatrix}
0 & 0 & 0  & 2x_{4} & 0 \\
0 & 0 & 0  & 2x_{5} & 2x_{4} \\
0 & 0 & 0  & 0 & 2x_{5} \\
2x_{4} & 2x_{5} & 0 & 2x_{1} & 2x_{2} \\
0 & 2x_{4} & 2x_{5} & 2x_{2} & 2x_{3}
\end{pmatrix}.
\]
One may verify that the Hessian of $p$ is singular for every $x \in \R^n$. On the other hand, we have
\[
    \nabla p(x) = (x_4^2,~2x_4x_5,~ x_5^2,~ 2x_1x_4 + 2x_2x_5,~ 2x_2x_4 + 2x_3x_5),
\]
and so $p$ has no direction of linearity (its directional derivatives are not affinely dependent). But, we do have an algebraic dependency of degree $2$, namely
\[
    \frac{\partial p}{\partial x_1} \cdot \frac{\partial p}{\partial x_3} = x_4^2 \cdot x_5^2 = \frac{1}{4}(2x_4x_5)^2 = \frac{1}{4} \left(\frac{\partial p}{\partial x_2} \right)^2.
\]

\section{Witnesses for polynomial programs}
\label{APP:Complexity}
In this appendix, we give a rational univariate convex polynomial of degree 4 that has an irrational minimizer (see \Cref{EX:convexdegree4irrationalminimizer}). 
This shows that for the decision problem~\eqref{EQ:decision} for convex polynomials of degree at least 4, $x^*$ is not always a compact witness.
We also give a rational univariate convex polynomial of degree 6 that has minimum value 0, but attains this at exactly one \emph{irrational} point (see \Cref{EX:convexdegree6irrationalminimizerrationalminimum}).
This shows that for convex polynomials of degree at least 6, for some polynomials there is no compact witness at all.
Finally, we give a proof that such a (univariate) polynomial cannot exist for degree 4.
We show that for a rational univariate convex polynomial of degree 4, if the minimum value is rational, then also the minimizer is rational (see \Cref{LEM:convexdegree4ifminimumrationalthenminimizerrational}).
Thus, for convex polynomials of degree 4, it could be that there is always a compact witness even though the minimizer might not be:
If the minimizer is irrational, then the set $\{x: f(x) \leq 0\}$ does not contain just the minimizer since the minimum value is not $0$ (it is irrational).
Thus, there will at least be a rational point in this set and it is unclear whether there might always be a compact witness.

\begin{example}\label{EX:convexdegree4irrationalminimizer}
    Consider the polynomial
    \[
        f(x) = x^4 + x.
    \]
    The second derivative is positive for all $x$, thus $f$ is convex.
    The (unique) minimizer of $f$ is the point $x^*$ that satisfies $4{x^*}^3 + 1 = 0$, i.e., $x^* = -1/\sqrt[3]{4}$, which is irrational.
\end{example}

\begin{example}\label{EX:convexdegree6irrationalminimizerrationalminimum}
    Consider the polynomial
    \[
        f(x) = (x^3 + x + 1)^2 = x^6 + 2x^4 + 2x^3 + x^2 + 2x + 1.
    \]
    The second derivative of $f$ is given by
    \[
        f''(x) = 30 x^4 + 24 x^2 + 12 x + 2 = 30 x^4 + 6x^2 + 2 \cdot (3x + 1)^2 \geq 0 \quad (\forall x \in \R)
    \]
    and hence $f$ is a convex polynomial.

    We now want to determine the minimal value and the minimizer of $f$.
    Note that $f$ is non-negative.
    To determine the minimum and the minimizer, we want to examine the term ${g(x) = x^3 + x + 1}$.
    As a cubic polynomial it has at least one real root.
    Thus, the minimum value of $f$ is $0$ and the minimizer(s) of $f$ are exactly the root(s) of $g$.
    
    Cardano's formula states that a cubic polynomial $x^3 + px + q$ with $\frac{q^2}{4} + \frac{p^3}{27} > 0$ has in fact exactly one real root.
    Furthermore, by the rational root theorem, if there is a rational root it is an integer divisor of $q$.
    This implies that $g$ (for which $p=q=1$) has exactly one real root $x^*$.
    Since $\pm 1$ (the only divisors of $q=1$) are not roots, this root is not rational, i.e. $x^* \in \R \backslash \Q$.
    This $x^*$ is the (unique) minimizer of $f$.
    In fact, we can even compute $x^*$ explicitly:
    \[
        x^* = \sqrt[3]{-\frac{1}{2}+\sqrt{\frac{1}{4}+\frac{1}{27}}} + \sqrt[3]{-\frac{1}{2}-\sqrt{\frac{1}{4}+\frac{1}{27}}}.
    \]
    
    Hence, $f$ is a convex polynomial with minimum value $0$, but this is attained at exactly one (irrational) point $x^*$.
\end{example}

\begin{lemma}\label{LEM:convexdegree4ifminimumrationalthenminimizerrational}
    Let $f \in \Q[x]$ be a convex (univariate) polynomial of degree $4$.
    Assume the minimum value of $f$ is rational.
    Then also the minimizer of $f$ is rational.
\end{lemma}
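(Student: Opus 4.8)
The plan is to combine the uniqueness of the minimizer of a convex polynomial with the rigidity of degree-$4$ rational polynomials. Write $f(x)=ax^4+bx^3+cx^2+dx+e$ with rational coefficients. Since $f$ is convex of degree $4$, its leading coefficient satisfies $a>0$ (otherwise $f''(x)\to-\infty$ as $|x|\to\infty$), so $f$ is coercive and attains its minimum. First I would note that this minimizer is unique: the set of minimizers of a convex function is convex, and if it contained two distinct points $x_1<x_2$ then $f$ would be constant equal to $\min f$ on $[x_1,x_2]$, forcing $f-\min f\equiv 0$, which contradicts $\deg f=4$. Let $x^*$ denote the unique minimizer and $m:=f(x^*)$, which is rational by hypothesis.

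Next, set $g:=f-m\in\Q[x]$, a polynomial of degree $4$ with $g(x^*)=0$; since $x^*$ is a global minimum of the differentiable function $g$, also $g'(x^*)=0$, so $x^*$ is a root of $g$ of multiplicity at least $2$. As $x^*$ is algebraic over $\Q$ (being a root of $g\not\equiv 0$), let $\mu\in\Q[x]$ be its monic minimal polynomial. It is irreducible, hence separable (we are in characteristic $0$), so $\mu'(x^*)\neq 0$. Writing $g=\mu\cdot q$ with $q\in\Q[x]$ and differentiating at $x^*$ gives $0=g'(x^*)=\mu'(x^*)\,q(x^*)$, so $q(x^*)=0$ and therefore $\mu^2\mid g$. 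Comparing degrees, $2\deg\mu\le 4$, i.e.\ $\deg\mu\in\{1,2\}$. If $\deg\mu=1$ then $x^*\in\Q$ and we are done.

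It remains to rule out $\deg\mu=2$, which is the only place the argument has content. In that case $\deg\mu^2=4=\deg g$ and $\mu$ is monic, so $g=a\mu^2$. The quadratic $\mu$ has the real root $x^*$, hence its other root $x^{**}$ is real as well (a real quadratic with one real root has both roots real), and $x^{**}\neq x^*$ since $\mu$, being irreducible over $\Q$, has no repeated root. Then $g(x^{**})=a\,\mu(x^{**})^2=0$, i.e.\ $f(x^{**})=m=\min_{\R}f$, so $f$ has two distinct minimizers, contradicting the uniqueness established above. Hence $\deg\mu=2$ is impossible, so $\deg\mu=1$ and $x^*$ is rational. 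The main obstacle, and the crux of the argument, is precisely this last step: for a \emph{quadratic} minimal polynomial the Galois conjugate of $x^*$ is automatically \emph{real}, which manufactures a second global minimizer; this is exactly the phenomenon that fails at degree $6$, where a cubic minimal polynomial may have a pair of complex conjugate roots (cf.\ \Cref{EX:convexdegree6irrationalminimizerrationalminimum}), explaining why no analogue of the lemma holds there. Everything else — coercivity and attainment of the minimum, uniqueness of the minimizer, and the passage from a double root to $\mu^2\mid g$ — is routine.
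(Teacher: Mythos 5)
Your proof is correct and follows essentially the same route as the paper's: both pass to the minimal polynomial of $x^*$ over $\Q$, show that its square divides $f-\min f$ so that its degree is at most $2$, and then dispose of the quadratic case by observing that the second (necessarily real, distinct) root would give another point where $f$ equals its minimum, contradicting uniqueness of the minimizer. The only difference is local and cosmetic: you obtain the squared divisibility directly from $g'(x^*)=0$ together with separability of the minimal polynomial, whereas the paper reaches the same conclusion via a parity-of-degrees argument applied to $f-f(x^*)$ and $f'$; your version is a mild streamlining, not a different approach.
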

\begin{proof}
    Let $x^* \in \R$ be the minimizer of $f$.
    By convexity, $x^*$ is the unique minimizer and the unique critical point of $f$.
    Consider the two polynomials (recall that by assumption $f(x^*) \in \Q$)
    \[
        g_1(x) = f(x) - f(x^*) \in \Q[x] \quad \text{and} \quad g_2(x) = f'(x) \in \Q[x].
    \]
    We have that $x^*$ is the unique (real) root of these two polynomials.
    Consider $m \in \Q[x]$ the monic polynomial of minimal degree that has $x^*$ as a root.
    Since $x^*$ is a root of $g_1$ and $g_2$, we have $m \mid g_1$ and $m \mid g_2$.\footnote{Formally, consider the field extension $\Q(x^*)/\Q$. Since $g_1(x^*) = 0$ and $g_1 \in \Q[x]$, $x^*$ is algebraic over $\Q$. Then, $m$ is the minimal polynomial of $x^*$ over $\Q$, i.e. the unique monic irreducible polynomial $m \in \Q[x]$ that has $x^*$ as a root \cite[Theorem A-3.87]{RotmanAlgebra}. Furthermore, $m$ generates the ideal of all polynomials in $\Q[x]$ that have $x^*$ as a root. Thus, $m$ divides any polynomial $g$ that has $x^*$ as a root \cite[Proof of Theorem A-3.87]{RotmanAlgebra}.}
    Thus, there are polynomials $h_1, h_2 \in \Q[x]$ such that $g_1 = m \cdot h_1$ and $g_2 = m \cdot h_2$.
    
    Note that $\deg(h_1) = \deg(h_2) + 1$.
    Thus one of $\deg(h_1)$ and $\deg(h_2)$ is odd.
    Let $i \in \{1,2\}$ be the index such that $\deg(h_i)$ is odd.
    Then $h_i$ has a real root.
    Since $x^*$ is the unique real root of $g_i$, this root needs to be $x^*$.
    But then we need to have $m \mid h_i$ and there is a polynomial $\Tilde{h}_i$ such that $g_i = m^2 \cdot \Tilde{h}_i$.
    This implies $4 \geq \deg(g_i) \geq 2 \deg(m)$ and thus $\deg(m) \leq 2$.
    
    If $\deg(m) = 1$, then $m(x) = x + a_0$ and thus $x^* = -a_0 \in \Q$.
    If $\deg(m) = 2$, then $m(x) = x^2 + a_1 x + a_0$.
    The roots of this polynomial are $\frac{-a_1 \pm \sqrt{a_1^2-4a_0}}{2}$.
    Since $x^*$ is the unique (real) root of $g_1$ (and thus of~$m$), we need to have that $a_1^2-4a_0 = 0$ and thus $x^* = \frac{-a_1}{2} \in \Q$, which completes the proof.
\end{proof}

\section{Ellipsoid method}
\label{APP:Ellipsoid}
In this appendix, we give a proof of \Cref{PROP:applicationellipsoid} about minimizing a convex polynomial over a polyhedron.
\PROPapplicationellipsoid*

All the ideas presented in this appendix are standard results on the ellipsoid method and can for example be found in \cite{GrötschelLovaszSchrijver:ellipsoid, Vishnoi:convexoptimization}.
We combine these results to get the exact statement of \Cref{PROP:applicationellipsoid}.
We include a full proof for completeness and in order to carry out the bit complexity analysis.

\subsection{Additional preliminaries on the ellipsoid method}
In this section, we discuss the necessary preliminaries to apply the ellipsoid method to our case.
The section is adapted from different definitions and theorems from \cite{GrötschelLovaszSchrijver:ellipsoid}.
We first define what a strong separation oracle is.
\begin{definition}[Strong separation oracle]
    A strong separation oracle for a convex set $K \subseteq \R^n$ is an oracle that, given as input $y \in \Q^n$ either asserts $y \in K$ or finds a separating hyperplane $c \in \Q^n$ such that $\max_{x \in K} c^\top x < c^\top y$.
\end{definition}

\begin{remark}
    We always assume that given an input $y \in \Q^n$, the bit length of the output of the oracle is at most polynomial in $\bc(y)$ for a fixed polynomial. See also \cite[Assumption 1.2.1]{GrötschelLovaszSchrijver:ellipsoid}.
\end{remark}

For a polyhedron, there is always a strong separation oracle that runs in polynomial time in the bit length of the input and in $\enc{P}$.

\begin{lemma}[{\cite[Example 2.16]{GrötschelLovaszSchrijver:ellipsoid}}]\label{LEM:seporacleforP}
    For a polyhedron $P = \{Ax \leq b\}$ there is a strong separation oracle that runs on an input of bit length $k$ in time polynomial in $k$ and $\enc{P}$.
\end{lemma}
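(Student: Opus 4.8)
The plan is to implement the oracle by directly evaluating the $m$ linear inequalities that define $P$. On input $y \in \Q^n$ of bit length $k$, the first step is to compute the residual vector $r \coloneqq Ay - b \in \Q^m$: each coordinate $r_i = \langle A_i, y \rangle - b_i$ is a sum of $n+1$ products of rationals appearing in $A$, $b$, and $y$, so it can be computed and reduced to lowest terms in time $\poly(\enc{P}, k)$, with $\bc(r_i) \le \poly(\enc{P}, k)$.

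If $r_i \le 0$ for all $i$, then $Ay \le b$ and the oracle asserts $y \in P$. Otherwise I would locate some index $i$ with $r_i > 0$ --- found by at most $m$ sign tests on rationals of bit length $\poly(\enc{P},k)$ --- and return the separating vector $c \coloneqq A_i^\top$, i.e., the $i$-th row of $A$ viewed as a column vector, which has bit length at most $\enc{P}$. Its validity is immediate: for every $x \in P$ we have $\langle c, x \rangle = \langle A_i, x\rangle \le b_i < \langle A_i, y \rangle = \langle c, y \rangle$, hence $\sup_{x \in P} \langle c, x\rangle \le b_i < \langle c, y\rangle$, which is exactly the strong-separation condition. (If $P = \emptyset$ the supremum is $-\infty$ and the condition holds vacuously; the oracle is still well-defined, since then $Ay \le b$ fails for every $y$ and we always end up in the second case.)

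The running time is dominated by one matrix--vector product and $O(m)$ rational comparisons, all on numbers of bit length $\poly(\enc{P}, k)$, hence is $\poly(\enc{P}, k)$; and the output --- either the assertion $y \in P$, or the vector $c$ --- has bit length $\poly(\enc{P}, k)$ as well, so the standing assumption that the oracle's output is polynomially bounded in the size of its input is met. I do not expect any genuine obstacle here: the only point requiring care is the routine verification that the intermediate rationals $r_i$ and the returned vector $c$ do not acquire super-polynomial bit length, which follows from the standard fact that a bounded number of additions and multiplications of rationals increases bit length only polynomially.
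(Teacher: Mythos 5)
Your proposal is correct and matches the paper's approach exactly: the paper (following \cite[Example 2.16]{GrötschelLovaszSchrijver:ellipsoid}) also checks each constraint of $P$ on the input $y$ and returns any violated row $A_i$ as the separating hyperplane, with the same routine bit-length bookkeeping. Your additional remarks on reducing the rationals $r_i$ and on the degenerate case $P = \emptyset$ are fine but not needed beyond what the paper sketches.
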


The idea for this separation oracle is to check for all constraints of $P$ whether $y$ satisfies them.
If all constraints are satisfied, then $y \in P$.
Otherwise, any violated constraint is a separating hyperplane.
We now state a result on how we can apply the ellipsoid method to solve feasibility problems given strong separation oracles.

\begin{proposition}[{\cite[Theorem 3.2.1]{GrötschelLovaszSchrijver:ellipsoid}}]\label{PROP:ellipsoidforconvexset}
    There is an algorithm with the following guarantees:
    Given as input $r > 0$, $R > 0$ and a strong separation oracle to a closed convex set $K \subseteq B_R(0)$, the algorithm outputs outputs one of the following:
    \begin{enumerate}
        \item a vector $a \in K$;\footnote{We note that the algorithm in \cite[Theorem 3.2.1]{GrötschelLovaszSchrijver:ellipsoid}, if given a strong separation oracle (instead of a weak one), in fact outputs $y \in K$ (as opposed to just $d(y,K) \leq r$).}
        \item an ellipsoid $E$ such that $K \subseteq E$ and $\vol(E) \leq r$.
    \end{enumerate}
    In particular, if $\vol(K) > r$, then the algorithm outputs a vector in $K$.
    The number of oracle calls and the runtime\footnote{Here, the runtime does not include the oracle call, but it does include the time needed to write down the input for the oracle and read the output of the oracle.} of the algorithm are polynomial in $n$, $\log(1/r)$ and $\log(R)$.
\end{proposition}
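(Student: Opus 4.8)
The plan is to run the classical \emph{central-cut ellipsoid method}: I would maintain a sequence of ellipsoids $E_0, E_1, E_2, \dots$, each containing $K$, whose volumes shrink geometrically, stopping as soon as the volume falls below $r$. I would start from $E_0 = B_R(0)$. At iteration $t$, letting $c_t$ denote the center of $E_t$, I query the separation oracle at $c_t$. If it asserts $c_t \in K$, output $c_t$. Otherwise it returns $a \in \Q^n$ with $\max_{x \in K} a^\top x < a^\top c_t$, so that $K \subseteq E_t \cap \{x : a^\top x \le a^\top c_t\}$; I then let $E_{t+1}$ be the minimum-volume ellipsoid containing this half-ellipsoid (slightly inflated; see below) and continue. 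If at some point $\vol(E_t) \le r$, output $E_t$.

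The geometric engine is the \emph{L\"owner--John half-ellipsoid update}: for $E = \{x : (x-c)^\top Q^{-1}(x-c) \le 1\}$ and the central halfspace $\{x : a^\top(x-c) \le 0\}$, the minimum-volume ellipsoid $E^{+} \supseteq E \cap \{x : a^\top(x-c) \le 0\}$ has an explicit closed form (center $c - \tfrac{1}{n+1}\cdot\tfrac{Qa}{\sqrt{a^\top Q a}}$, matrix $\tfrac{n^{2}}{n^{2}-1}\bigl(Q - \tfrac{2}{n+1}\cdot\tfrac{Qaa^\top Q}{a^\top Q a}\bigr)$) and satisfies $\vol(E^{+})/\vol(E) \le e^{-1/(2(n+1))} < 1$. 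Since $K \subseteq E_t$ and, by the oracle's guarantee, $K$ lies in the relevant halfspace, the invariant $K \subseteq E_t$ is preserved at every step. Hence after $T = \lceil 2(n+1)\ln(\vol(B_R(0))/r)\rceil$ iterations one has $\vol(E_T) \le r$; as $\vol(B_R(0)) \le (2R)^n$, this gives $\ln(\vol(B_R(0))/r) = O(n + n\log R + \log(1/r))$ and thus $T = \poly(n, \log R, \log(1/r))$, with $O(n^2)$ arithmetic per step and one oracle call per step. The final ``in particular'' is then immediate: if the algorithm halts with an ellipsoid $E \supseteq K$ and $\vol(E) \le r$, then $\vol(K) \le \vol(E) \le r$, so whenever $\vol(K) > r$ it must instead halt with a point of $K$.

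The hard part will be the bit-complexity analysis in the Turing model. The update involves $\sqrt{a^\top Q a}$, so it cannot be performed exactly over $\Q$, and iterating naively would let the entries of $Q_t$ and $c_t$ grow uncontrollably. The standard remedy (Khachiyan; see \cite{GrötschelLovaszSchrijver:ellipsoid, Schrijver1994}) is, after each step, to round every entry of the updated ellipsoid to $p = \poly(n, \log R, \log(1/r))$ bits and then \emph{inflate} it slightly (replace $Q^{+}$ by $(1+\delta)Q^{+}$ with $\delta = 2^{-\Theta(p)}$), the inflation chosen large enough that the rounded, inflated ellipsoid still provably contains $E_t \cap \{x : a^\top x \le a^\top c_t\}$, hence still contains $K$. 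One then checks, by a careful but routine accounting, that (i) the inflation degrades the volume-shrinkage factor only from $e^{-1/(2(n+1))}$ to, say, $e^{-1/(5(n+1))}$, still bounded away from $1$, so the iteration count stays polynomial; and (ii) the entries of $Q_t$ and $c_t$ remain rationals of at most $p$ bits with magnitude in $[2^{-p}, 2^{p}]$ throughout, so every arithmetic operation is on polynomially sized numbers. Together with the convention (cf.\ the remark preceding \Cref{LEM:seporacleforP}) that the oracle's output has bit length polynomial in its input, this yields a genuine polynomial-time algorithm in the bit model.
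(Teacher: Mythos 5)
Your sketch is correct and is exactly the standard central-cut ellipsoid argument (L\"owner--John update, geometric volume decay, rounding-and-inflation for the bit model) behind the cited result; the paper itself does not reprove \Cref{PROP:ellipsoidforconvexset} but simply invokes \cite[Theorem 3.2.1]{GrötschelLovaszSchrijver:ellipsoid}, so your approach coincides with the proof in that reference.
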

\begin{remark}
    When we apply \Cref{PROP:ellipsoidforconvexset}, the oracle runs in polynomial time in $\enc{f}$, $\enc{P}$, $k$ on an input of bit length $k$.
    Therefore, the total runtime of the algorithm will be polynomial in $\enc{f}$, $\enc{P}$, $\log(1/r)$ and $\log(R)$.
\end{remark}

To apply the statement above, we also need the following two statements that will allow us to reduce to the case where the polynomial has volume (i.e. the full-dimensional case).

\begin{proposition}[{\cite[Theorems 6.4.9 and 6.5.5]{GrötschelLovaszSchrijver:ellipsoid}}]\label{PROP:getaffinehullofpolyhedron}
    There is an algorithm with the following guarantees:
    Let $P = \{Ax \leq b\} \subseteq \R^n$ be a polyhedron and let $\varphi$ be a bound on the maximum bit length of any constraint of $P$.
    Given as input $n$, $\varphi$ and a strong separation oracle to $P$, the algorithms output affinely independent vectors $v_1, \ldots, v_k$ such that $\aff(P) = \aff(\{v_1, \ldots, v_k\})$.
    The number of oracle calls and the runtime of the algorithm is polynomial in $n$ and $\varphi$.    
\end{proposition}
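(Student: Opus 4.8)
The plan is to reduce the claim to the full-dimensional feasibility primitive already available, \Cref{PROP:ellipsoidforconvexset}, and to build up a maximal affinely independent subset of $P$ one point at a time. First I would make $P$ bounded: since the vertices of a rational polyhedron $\{Ax\le b\}$ have bit length $\poly(n,\varphi)$, there is an $R$ with $\log R\le\poly(n,\varphi)$ for which $\aff(P\cap B_R(0))=\aff(P)$ and $P\cap B_R(0)=\emptyset$ exactly when $P=\emptyset$; a strong separation oracle for $P\cap B_R(0)$ is obtained from the one for $P$ by additionally testing membership in the ball, so we may assume $P\subseteq B_R(0)$. A first application of \Cref{PROP:ellipsoidforconvexset} to the inflated body $P+B_\varepsilon(0)$, with $\varepsilon$ and $r:=\vol(B_\varepsilon(0))$ inverse-exponential in $n$ and $\varphi$, either certifies $P=\emptyset$ (and we output nothing) or returns a point that we round into $P$ to obtain a rational $v_1\in P$ with $\bc(v_1)\le\poly(n,\varphi)$; rounding a near-feasible point to an exact one, using the rationality of the data underlying the oracle, is a standard step.

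For the inductive step, suppose we have affinely independent $v_1,\dots,v_j\in P$ of bit length $\poly(n,\varphi)$; set $A=\aff(\{v_1,\dots,v_j\})$ and let $L$ be the $(j-1)$-dimensional linear subspace parallel to $A$. Using Gram-Schmidt (\Cref{LEM:LAGS}) I would compute an orthogonal basis $c_1,\dots,c_{n-j+1}$ of $L^\perp$ of bit length $\poly(n,\varphi)$. Since $P\subseteq A$ if and only if $\langle c_\ell,x\rangle=\langle c_\ell,v_1\rangle$ for every $x\in P$ and every $\ell$, it suffices, for each $\ell$ and each sign $s\in\{+1,-1\}$, to test emptiness of the polyhedron $Q_{\ell,s}=P\cap\{x:\,s\langle c_\ell,x\rangle\ge s\langle c_\ell,v_1\rangle+\delta\}$, where $\delta=2^{-\poly(n,\varphi)}$; a strong separation oracle for $Q_{\ell,s}$ is the one for $P$ together with one extra inequality check. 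If some test produces a point $x$, then $\langle c_\ell,x\rangle\ne\langle c_\ell,v_1\rangle$, so $x\notin A$, and I set $v_{j+1}:=x$; if all tests report emptiness I output $v_1,\dots,v_j$ and halt, since then $P\subseteq A=\aff(\{v_1,\dots,v_j\})\subseteq\aff(P)$. Correctness of the stopping rule rests on a quantitative estimate: if $P\not\subseteq A$, then for some $\ell$ the interval $\langle c_\ell,P\rangle$ has length at least $2^{-\poly(n,\varphi)}$, being the difference of two linear-programming optima with $\poly(n,\varphi)$-size data, and since it contains $\langle c_\ell,v_1\rangle$ it extends past $\langle c_\ell,v_1\rangle$ by at least $\delta$ on one side; choosing $s$ to point in that direction, convexity of $P$ gives $Q_{\ell,s}\ne\emptyset$, whereas if $P\subseteq A$ then every $Q_{\ell,s}$ is empty.

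The step I expect to be the main obstacle is precisely the use of \Cref{PROP:ellipsoidforconvexset} to decide emptiness: that routine returns a point only when the set has volume exceeding $r$, yet $Q_{\ell,s}$, though possibly nonempty, may be lower-dimensional and thus of $n$-volume zero. I would circumvent this with the standard device (as in \cite[Ch.~6]{GrötschelLovaszSchrijver:ellipsoid}): apply the routine not to $Q_{\ell,s}$ but to its inflation $Q_{\ell,s}+B_{\varepsilon'}(0)$ for an inverse-exponential $\varepsilon'$, whose separation oracle is again derived from that of $Q_{\ell,s}$, so that nonemptiness becomes equivalent to having volume at least $\vol(B_{\varepsilon'}(0))$; any near-feasible point obtained this way can then be rounded back into $Q_{\ell,s}\subseteq P$ because all quantities involved are rational of bit length $\poly(n,\varphi)$ (Cramer-type bounds). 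The remaining bookkeeping is routine: there are at most $n$ inductive steps, each using $O(n)$ calls to \Cref{PROP:ellipsoidforconvexset} and one Gram-Schmidt computation, every oracle query runs in time $\poly(n,\varphi)$ on inputs of bit length $\poly(n,\varphi)$, and the $v_i$ retain bit length $\poly(n,\varphi)$; hence the procedure runs in time $\poly(n,\varphi)$ and outputs affinely independent vectors whose affine hull is $\aff(P)$, as required.
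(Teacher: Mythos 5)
Note first that the paper does not prove this proposition at all: it is imported verbatim from Gr\"otschel--Lov\'asz--Schrijver (Theorems 6.4.9 and 6.5.5), so what you are really attempting is a re-derivation of those results from \Cref{PROP:ellipsoidforconvexset}. Your outer loop --- grow an affinely independent set $v_1,\dots,v_j\subseteq P$, orthogonalize a basis $c_1,\dots,c_{n-j+1}$ of the complement of the parallel subspace, and test whether $P$ sticks out of the current affine hull by at least $\delta=2^{-\poly(n,\varphi)}$ in some direction $\pm c_\ell$, with the gap bound coming from the fact that nonequal LP optima over a polyhedron of facet complexity $\varphi$ differ by at least $2^{-\poly(n,\varphi)}$ --- is exactly the structure of the GLS proof of Theorem~6.5.5, and that part of your argument is sound (modulo the cosmetic point that after intersecting with a ball the set is no longer a polyhedron; use the box $[-R,R]^n$ instead).

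The genuine gap is in the primitive you use to decide nonemptiness of $P$ and of $Q_{\ell,s}$, i.e.\ precisely the content of GLS Theorem~6.4.9, which you assume away as ``standard''. Two steps fail as written. First, \Cref{PROP:ellipsoidforconvexset} requires a \emph{strong} separation oracle for the set it is run on, and no such oracle for the inflated body $Q_{\ell,s}+B_{\varepsilon'}(0)$ is derivable from the strong oracle for $P$: on a query $y\notin Q_{\ell,s}$ with $d(y,Q_{\ell,s})\le\varepsilon'$ the oracle for $P$ can neither certify membership in the inflation (it cannot estimate the distance to $Q_{\ell,s}$) nor return a hyperplane separating $y$ from the \emph{whole} inflated set (the returned $c$ only satisfies $\max_{Q_{\ell,s}}\langle c,x\rangle<\langle c,y\rangle$, with no margin $\varepsilon'\|c\|$); with an invalid oracle the guarantees of \Cref{PROP:ellipsoidforconvexset} are void, so the emptiness test is unsound in both directions. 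This is exactly why GLS introduce weak oracles. Second, the ``rounding back into $Q_{\ell,s}$'' is not a Cramer-type triviality. If you round a near-feasible $y$ coordinatewise to denominators $\le Q$, the resulting point has common denominator up to $Q^{n}$, so exact feasibility requires it to satisfy every constraint within roughly $2^{-\varphi}Q^{-n}$, whereas coordinatewise rounding only places it within roughly $1/Q$ of $y$ --- exponentially too coarse for $n\ge 2$. Closing this mismatch requires finding a nearby rational point with a \emph{single} small common denominator, i.e.\ simultaneous Diophantine approximation via lattice basis reduction, which is the actual engine behind the two cited theorems and is not available among the paper's primitives (\Cref{PROP:ellipsoidforconvexset}, \Cref{LEM:HNF}, \Cref{LEM:LAGS}). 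So the proposal reproduces the easy outer layer of the cited proof but leaves its hard core unproved.
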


\begin{proposition}[{\cite[Lemmas 3.1.33 and 3.1.35]{GrötschelLovaszSchrijver:ellipsoid}}]\label{PROP:volumefulldimensionalcase}
    Let $P=\{Ax \leq b\}$ be a polyhedron. If $P$ is full-dimensional, then
    \[
        \vol(P) \geq 2^{-\poly(\enc{P})}.
    \]
    In fact, there exist $v_1, \ldots, v_{n+1} \in P$ with $\|v_i\| \leq 2^{\poly(\enc{P})}$ such that
    \[
        \vol(\conv(\{v_1, \ldots, v_{n+1}\}) \geq 2^{-\poly(\enc{P})}.
    \]
\end{proposition}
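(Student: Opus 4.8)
The plan is to reduce the first inequality to the ``in fact'' statement: since $\conv\{v_1,\dots,v_{n+1}\}\subseteq P$ by convexity of $P$, monotonicity of Lebesgue measure gives $\vol(P)\ge\vol(\conv\{v_1,\dots,v_{n+1}\})\ge 2^{-\poly(\enc{P})}$, so it suffices to exhibit the $v_i$. The idea is to first find a \emph{Euclidean} ball $B(a_0,r)\subseteq P$ with $\|a_0\|\le 2^{\poly(\enc{P})}$ and $r\ge 2^{-\poly(\enc{P})}$, and then take $v_0=a_0$ and $v_i=a_0+\tfrac{r}{2}e_i$ for $1\le i\le n$, where $e_i$ is the $i$-th standard basis vector. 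These $n+1$ points (after relabeling) lie in $B(a_0,r)\subseteq P$, satisfy $\|v_i\|\le\|a_0\|+r/2\le 2^{\poly(\enc{P})}$, are affinely independent, and span a simplex of volume $\tfrac{1}{n!}(r/2)^n\ge 2^{-\poly(\enc{P})}$ (using $\enc{P}\ge n$).

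To produce the inscribed ball, I would use that for a row $A_i$ of $A$ the closed ball $B(a_0,r)$ satisfies $A_i x\le b_i$ for every $x\in B(a_0,r)$ if and only if $A_i a_0+r\|A_i\|_2\le b_i$. Since $\|A_i\|_2\le\|A_i\|_1=:c_i$, a rational number with $\bc(c_i)\le\poly(\enc{P})$, it is enough to find a feasible $(a_0,r)$ with $r>0$ of the linear program
\[
    \max\left\{\, r \;:\; A_i x+r\,c_i\le b_i\ (1\le i\le m),\ r\le 1 \,\right\},
\]
whose data is rational of encoding length $\poly(\enc{P})$. This LP is feasible (take $r$ very negative) and bounded above (by $1$), so its optimum $r^*$ is attained. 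Full-dimensionality of $P$ supplies a point $\bar x$ with $A_i\bar x<b_i$ for all $i$, so $(\bar x,r)$ is feasible for every sufficiently small $r>0$; hence $r^*>0$.

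The main obstacle is the \emph{quantitative} control: showing that $r^*\ge 2^{-\poly(\enc{P})}$ and that the LP above has an optimal solution $a_0$ with $\|a_0\|\le 2^{\poly(\enc{P})}$. I would invoke the standard size estimates for linear programming coming from Cramer's rule (as in \cite{Schrijver1994}): the optimum of an LP with rational data, when positive, is bounded below by $2^{-\poly}$ of the input size, and there is an optimal solution of polynomially bounded encoding length. The one subtlety is that the feasible region of the LP need not be pointed when $A$ has a nontrivial kernel, so there may be no vertex to which Cramer's rule applies directly; this is handled in the usual way, e.g.\ by first intersecting with a box $\{\,\|x\|_\infty\le 2^{\poly(\enc{P})}\,\}$ chosen large enough to still contain a strict interior point of $P$, or equivalently by replacing a candidate optimum $a_0$ by its projection onto the row space of $A$, on which the program becomes pointed. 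Once $r^*$ and $a_0$ are bounded as claimed, we have $B(a_0,r^*)\subseteq P$, and the construction of the first paragraph yields the points $v_i$; the first inequality then follows since $\conv\{v_0,\dots,v_n\}\subseteq P$.
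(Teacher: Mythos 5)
Your proof is correct: the paper does not prove this proposition itself but cites it from Gr\"otschel--Lov\'asz--Schrijver, and your argument (a Chebyshev-center-type LP with rational data $c_i=\|A_i\|_1$, standard Cramer-rule size bounds giving an inscribed ball $B(a_0,r)\subseteq P$ with $r\ge 2^{-\poly(\enc{P})}$ and $\|a_0\|\le 2^{\poly(\enc{P})}$, then the coordinate simplex of side $r/2$ inside the ball, using $\enc{P}\ge n$) is essentially the standard argument behind those cited lemmas. Two cosmetic remarks: an interior point satisfies $A_i\bar x<b_i$ only for the nonzero rows (for a zero row the corresponding LP constraint is vacuous, so nothing breaks), and the pointedness worry is unnecessary, since the standard polynomial-size bounds for the optimal value and for some optimal solution of a rational LP hold without assuming the feasible region has vertices.
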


\subsection{\texorpdfstring{Proof of \Cref{PROP:applicationellipsoid}}{Proof}}
In order to prove \Cref{PROP:applicationellipsoid} we want to reduce to the full-dimensional case, for which we then can apply \Cref{PROP:ellipsoidforconvexset}.
We first prove the full-dimensional case of \Cref{PROP:applicationellipsoid}.
\begin{proof}[{Proof of \Cref{PROP:applicationellipsoid} (full-dimensional case)}]
    Consider the set $F_\tau \coloneqq \{x \in \R^n: f(x) \leq \tau\}$.
    Our goal is to apply \Cref{PROP:ellipsoidforconvexset} to the sets $P \cap B_R(0) \cap F_\tau$ for different values of $\tau$.
    Thus, we need to give a strong separation oracle for this set and find a value for $r$ such that $\vol(P \cap B_R(0) \cap F_\tau) \geq r$.

    \paragraph{Strong separation oracle.}
    By \Cref{LEM:seporacleforP} we have a strong separation oracle for $P$.
    Clearly, there is also a strong separation oracle for $B_R(0)$ (check if $\|y\| \leq R$; if not, $y$ is a separating hyperplane) and thus it remains to argue that we can get a strong separation oracle for $F_\tau$.
    Given $y \in \Q^n$, we can determine whether $f(y) \leq \tau$ and thus whether $y \in F_\tau$. If $y \not\in F_\tau$, then the gradient $\nabla f(y)$ is a separating hyperplane.
    This is true since $f(x) - f(y) \geq \nabla f(y)^\top (x-y)$ for all $x \in \R^n$ and thus for $x \in F_\tau$ we have $f(x) \leq \tau < f(y)$ and thus $\max_{x \in F_\tau}\nabla f(y)^\top x < \nabla f(y)^\top y$.
    Thus, we have strong separation oracles for $P$, $B_R(0)$ and $F_\tau$ and thus also a strong separation oracle for $P \cap B_R(0) \cap F_\tau$.
    This separation oracle runs in time $\poly(\enc{f},\, \enc{P},\, \bc(\tau),\, \log(R),\, \bc(y))$ if given as input a point $y \in \Q^n$.

    \paragraph{Bound on volume.}
    Next, we want to bound $\vol(P \cap B_R(0) \cap F_\tau)$.
    Let $x^*$ be a minimizer of~$f$ on $P \cap B_R(0)$ and let $f^*=f(x^*)$.
    Note that $f$ is Lipschitz on $B_R(0)$ with Lipschitz constant ${L = 2^{\poly(\enc{f},\, \log(R))}}$.
    If $\tau = f^* + c$, then we want to argue that we can lower bound the volume in terms of $c$.
    Define $c' = c/L$.
    Since $P$ is full-dimensional, by \Cref{PROP:volumefulldimensionalcase}, there are ${v_1, \ldots, v_{n+1} \in P}$ with $\|v_i\| \leq 2^{\poly(\enc{P})}$ such that
    \[
        \vol(\conv(\{v_1, \ldots, v_{n+1}\})) \geq 2^{-\poly(\enc{P})}.
    \]
    Without loss of generality, we assume $R \geq \|v_i\|$ for all $i$ such that $v_i \in P \cap B_R(0)$.\footnote{We can do this because $\|v_i\| \leq 2^{\poly(\enc{P})}$. Thus, if $R < 2^{\poly(\enc{P})}$, we can replace $R$ by $R' = 2^{\poly(\enc{P})}$, which satisfies $\log(R') = \poly(\enc{f})$.}
    Let $d_i = \|x^* - v_i\|$ and let $d_{\max} = \max_{i \in \{1, \ldots n+1\}} d_i$.
    We have, for some $i \in \{1, \ldots, n\}$,
    \[
        d_{\max} = \|x^* - v_i\| \leq \|x^*\| + \|v_i\| \leq R + 2^{\poly(\enc{P})}.
    \]
    If $d_{\max} \leq c'$, then, by Lipschitzness of $f$, we have 
    \[
        f(v_i) \leq f(x^*) + L \|v_i - x^*\| \leq f^* + L \cdot c' = f + c = \tau.
    \]
    Hence, $v_i \in P \cap B_R(0) \cap F_\tau$ and thus
    \[
        \vol(P \cap B_R(0) \cap F_\tau) \geq \vol(\conv(\{v_1, \ldots, v_{n+1}\})) \geq 2^{-\poly(\enc{P})}.
    \]
    If $d_{\max} > c'$, consider the points $\hat{v}_i \coloneqq x^* + \frac{c'}{d_{\max}}(v_i - x^*) \in P \cap B_R(0)$ (since $x^*, v_i \in P \cap B_R(0)$).
    Then, we have, again by Lipschitzness of $f$,
    \[
        f(\hat{v}_i) \leq f(x^*) + L \left\|\frac{c'}{d_{\max}}(v_i - x^*)\right\| \leq f^* + L \cdot c' \cdot \frac{d_i}{d_{\max}} \leq f^* + L \cdot c' = \tau
    \]
    and hence $\hat{v}_i \in F_\tau$.
    Thus, $\conv(\{\hat{v}_1, \ldots, \hat{v}_{n+1}\}) \subseteq P \cap B_R(0) \cap F_\tau$.
    We have
    \begin{align*}
        \vol\left(\conv\left(\left\{\hat{v}_1, \ldots, \hat{v}_{n+1}\right\}\right)\right) &= \vol\left(\conv\left(\left\{x^* + \frac{c'}{d_{\max}}(v_i - x^*) : i \in \{1, \ldots, n+1\}\right\}\right)\right)\\
        &= \left(\frac{c'}{d_{\max}}\right)^n \vol\left(\conv\left(\left\{v_i : i \in \{1, \ldots, n+1\}\right\}\right)\right)\\
        &\geq \left(\frac{c'}{d_{\max}}\right)^n \cdot 2^{-\poly(\enc{P})}.
    \end{align*}
    Thus, we get that 
    \[
        \vol(P \cap B_R(0) \cap F_\tau) \geq \vol\left(\conv\left(\left\{\hat{v}_1, \ldots, \hat{v}_{n+1}\right\}\right)\right) \geq \left(R + 2^{\poly(\enc{P})}\right)^{-n} \cdot (c')^n \cdot 2^{-\poly(\enc{P})}.
    \]
    Thus, by choosing
    \[
        r = \min \left\{2^{-\poly(\enc{P})}, \left(R + 2^{\poly(\enc{P})}\right)^{-n} \cdot \left(\frac{\varepsilon}{2L}\right)^n \cdot 2^{-\poly(\enc{P})}\right\}
    \]
    we get that
    \begin{equation}\label{EQ:taulargeimpliesvolumelarge}
        \tau > f_{\min} + \frac{\varepsilon}{2} \quad \Longrightarrow \quad \vol(P \cap B_R(0) \cap F_\tau) > r,
    \end{equation}
    since if $\tau > f_{\min} + \frac{\varepsilon}{2}$, then $c > \frac{\varepsilon}{2}$ and thus $\vol(P \cap B_R(0) \cap F_\tau) > r$.
    Note that, using that the Lipschitz constant $L$ satisfies $L = 2^{\poly(\enc{f},\, \log(R))}$, we get
    \[
        \log(1/r) = \poly(n,\, \enc{P},\, \log(R),\, \log(1/\varepsilon)).
    \]
    
    \begin{algorithm}
    \caption{}\label{ALG:algorithmfulldimensionalcase}
    \begin{algorithmic}[1]
    \Input A polynomial $f$, a matrix $A \in \Q^{m \times n}$, a vector $b \in \Q^m$, a bound $R$, an error parameter $\varepsilon$
    \Output A point $\tilde{x} \in P$ with $f(\tilde{x}) \leq \min_{x \in P \cap B_R(0)} f(x) + \varepsilon$.
    \State Let $\tau_\ell = -2^{\poly(\enc{f}, \, \log(R))}$ and $\tau_r = 2^{\poly(\enc{f}, \, \log(R))}$.
    \While{$\tau_r-\tau_\ell \geq \frac{\varepsilon}{2}$}
    \State $\tau = \frac{\tau_r + \tau_\ell}{2}$
    \State{\parbox[t]{0.964\linewidth}{Run the algorithm from \Cref{PROP:ellipsoidforconvexset} with input $r$, $R$ and a strong separation oracle for the set ${P \cap B_R(0) \cap F_\tau}$.}}
    \If{output is $x \in P \cap B_R(0) \cap F_\tau$ (i.e. we are in case (i))}
    \State $\tau_r = \frac{\tau_r + \tau_\ell}{2}$
    \Else{ (i.e. we are in case (ii))}
    \State $\tau_\ell = \frac{\tau_r + \tau_\ell}{2}$
    \EndIf
    \EndWhile
    \State{Run the algorithm from \Cref{PROP:ellipsoidforconvexset} with input $r$, $R$ and a strong separation oracle for the set $P \cap B_R(0) \cap F_{\tau_r}$.}
    \State{\Return the point $\Tilde{x}$ that is output by this algorithm}
    \end{algorithmic}
    \end{algorithm}

    \paragraph{Algorithm.}
    Using binary search, we want to use this to get a point $\Tilde{x} \in P$ with the guarantee that ${f(\Tilde{x}) \leq \min_{x \in P \cap B_R(0)} f(x) + \varepsilon}$.
    Consider \Cref{ALG:algorithmfulldimensionalcase}.
    This algorithm and the analysis are a standard way to move from using the ellipsoid method to solve a feasibility problem to solving an optimization problem.
    It can for example be found in \cite[Chapter 13]{Vishnoi:convexoptimization}.
    We include it here for completeness and in order to carry out the bit complexity arguments in detail, which is needed for our result.
    
    We claim that the algorithm satisfies the following two invariants:
    \begin{itemize}
        \item For $\tau_r$, we always have that the output of the algorithm from \Cref{PROP:ellipsoidforconvexset} belongs to case (i) (given $r$, $R$ and a separation oracle to $P \cap B_R(0) \cap F_{\tau_r}$ as input).
        \item For $\tau_\ell$, we always have $\tau_\ell \leq f^* + \frac{\varepsilon}{2}$.
    \end{itemize}
    This is true in the beginning because by Lipschitzness of $f$ and since $f(0) \leq 2^{\poly(\enc{f}, \, \log(R))}$, we know that on $B_R(0)$ the value of $f$ is in $[-2^{\poly(\enc{f}, \, \log(R))}, 2^{\poly(\enc{f}, \, \log(R))}]$.
    Thus, for the first choice of $\tau_r$, we have $F_{\tau_r} \supseteq B_R(0)$ and hence $\vol(P \cap B_R(0) \cap F_{\tau_r}) = \vol(P \cap B_R(0)) > r$ (using \eqref{EQ:taulargeimpliesvolumelarge} for $\tau = \infty$).
    For the first choice of $\tau_\ell$, we have $\tau_\ell \leq f^*$.
    Furthermore, this stays true during the whole algorithm because we only update $\tau_r$ in case (i) of the algorithm from \Cref{PROP:ellipsoidforconvexset}, i.e. exactly when the invariant stays true.
    If we update $\tau_\ell$, we are in case (ii) of the algorithm from \Cref{PROP:ellipsoidforconvexset}.
    Then we need to have $\vol(P \cap B_R(0) \cap \tau_\ell) < r$ (for the updated $\tau_\ell$) and hence $\tau_\ell \leq f_{\min} + \frac{\varepsilon}{2}$ by \eqref{EQ:taulargeimpliesvolumelarge}.

    Thus, the algorithm is well-defined (i.e., after the while-loop the algorithm from \Cref{PROP:ellipsoidforconvexset} does in fact output a point $\tilde{x}$). This point is in $P \cap B_R(0) \cap F_{\tau_r}$ for the final value of $\tau_r$. Hence, $f(\Tilde{x}) \leq f^* + \tau_r$.
    Furthermore, we have $\tau_r \leq \tau_\ell + \frac{\varepsilon}{2} \leq f^* + \varepsilon$ and thus we output a point $\Tilde{x} \in P \cap B_R(0)$ with
    \[
        f(\Tilde{x}) \leq f^* + \varepsilon = \min_{x \in P \cap B_R(0)} f(x) + \varepsilon.
    \]

    It remains to argue the runtime of the algorithm.
    We have $\poly(\enc{f},\, \log(R), \, \log(1/\varepsilon))$ iterations of the while-loop (we start with $\tau_r - \tau_\ell = 2^{\poly(\enc{f}, \, \log(R))}$, end with $\tau_r - \tau_\ell \leq \frac{\varepsilon}{2}$ and we half this distance in every step).
    This then also implies that ${\bc(\tau_r), \bc(\tau_\ell) \leq \poly(\enc{f},\, \log(R), \, \log(1/\varepsilon))}$.
    So, the separation oracle for $P \cap B_R(0) \cap F_{\frac{\tau_r + \tau_\ell}{2}}$ needs time $\poly(\enc{f},\, \enc{P},\, \log(R),\, \log(1/\varepsilon),\, k)$ on an input of bit length $k$.
    Thus, one execution of the algorithm from \Cref{PROP:ellipsoidforconvexset} takes time
    \[
        \poly(n,\, \log(R),\, \log(1/r),\, \enc{f},\, \enc{P},\, \log(1/\varepsilon)).
    \]
    Using $\log(1/r) = \poly(n,\, \enc{P},\, \log(R),\, \log(1/\varepsilon))$ and $n \leq \enc{f}$, this runtime is
    \[
        \poly(\enc{f},\, \enc{P},\, \log(R),\, \log(1/\varepsilon)).
    \]
    Thus, since we have $\poly(\enc{f},\, \log(R),\, \log(1/\varepsilon))$ iterations the overall runtime of the algorithm is also
    \[
        \poly(\enc{f},\, \enc{P},\, \log(R),\, \log(1/\varepsilon)),
    \]
    which completes the proof of \Cref{PROP:applicationellipsoid} for the full-dimensional case.
\end{proof}

Finally, we want to prove \Cref{PROP:applicationellipsoid} for a general (not necessarily full-dimensional) polyhedron by reducing this to the full-dimensional case.
\begin{proof}[Proof of \Cref{PROP:applicationellipsoid} (general case)]
    By \Cref{PROP:getaffinehullofpolyhedron}, we can compute vectors $v_1, \ldots, v_k$ such that $\aff(P) = \aff(\{v_1, \ldots, v_k\})$.
    Then, we also have that
    \[
        \aff(P) = v_1 \oplus \spann{v_2-v_1, \ldots, v_k-v_1}.
    \]
    By \Cref{LEM:LAGS}, we can replace the $v_2-v_1, \ldots, v_k - v_1$ by orthogonal vector $\hat{v}_2, \ldots, \hat{v}_k$ (this is not strictly necessary but makes the following argument simpler).
    Define the following affine map
    \[
        L: \R^{k-1} \to \aff(P) \subseteq\R^n, \quad x' \mapsto v_1 + \sum_{i=2}^k x'_{i-1} \hat{v}_i.
    \]
    This map is a bijection from $\R^{k-1}$ to $\aff(P)$. In fact, the inverse is
    \begin{equation}\label{EQ:ellispoidreductioninversemap}
        L^{-1}: \aff(P) \to \R^{k-1}, \quad x \mapsto (\langle x-v_1, \hat{v}_2\rangle/\|\hat{v}_2\|^2, \ldots, \langle x-v_1, \hat{v}_k\rangle/\|\hat{v}_k\|^2).
    \end{equation}
    We can write $L$ as $L(x') = v_1 + Bx'$ for an appropriate matrix $B \in \Q^{n \times (k-1)}$.
    We define
    \[
        P' = L^{-1}(P) = \{x' \in \R^{k-1} : v_1 + Bx' \in P\} = \{x' \in \R^{k-1}: (AB)x' \leq b - Av_1\} \subseteq \R^{k-1}
    \]
    and
    \[
        f': \R^{k-1} \to \R, \quad x' \mapsto f'(x') = f(L(x')).
    \]
    Note that $P'$ is a polyhedron in $\R^{k-1}$ and $f'$ is a $(k-1)$-variate polynomial.

    On the polyhedron $P$, the algorithm from \Cref{PROP:getaffinehullofpolyhedron} runs in time $\poly(\enc{P})$ (recall that by \Cref{LEM:seporacleforP} the separation oracle for $P$ runs in time $\poly(k, \, \enc{P})$ on an input of size~$k$).
    Also the orthogonalization then takes time $\poly(\enc{P})$ by \Cref{LEM:LAGS}.
    Thus, we can bound the bit length of $B$ as $\bc(B) \leq \poly(\enc{P})$ and thus also $\enc{P'} \leq \poly(\enc{P})$.
    Furthermore, this also implies that $\enc{f'} \leq \poly(\enc{f}, \, \enc{P})$.

    Since the $v_1, \ldots, v_k$ are affinely independent, $P'$ is full-dimensional ($L^{-1}(v_1), \ldots, L^{-1}(v_k)$ are affinely independent points in $\aff(P')$ and thus by a dimension argument, $\aff(P') = \R^{k-1}$).
    Let $R > 0$. Note that for $x \in P \cap B_R(0)$, using \eqref{EQ:ellispoidreductioninversemap}, we have
    \[
        \|L^{-1}(x)\| \leq \sum_{i=2}^{k} \frac{|\langle x-v_1 , \hat{v}_i\rangle|^2}{\|\hat{v_i}\|^2} \leq (R + \|v_1\|) \cdot \sum_{i=2}^k \frac{1}{\|\hat{v}_i\|}.
    \]
    Since $\bc(v_1), \bc(\hat{v}_i) \leq \poly(\enc{P})$, we have $\|v_1\|, \|\hat{v}_2\|^{-1}, \ldots, \|\hat{v}_k\|^{-1} \leq 2^{\poly(\enc{P})}$. Thus, by defining ${R' = (R+1) \cdot 2^{\poly(\enc{P})}}$ we get that $L^{-1}(P \cap B_R(0)) \subseteq P' \cap B_{R'}(0)$.

    Thus, by the proof of the full-dimension case (applied with $R'$, $\varepsilon$, $f'$ and $P'$), we can compute a point $\Tilde{x}' \in P'$ with $f(\Tilde{x}') \leq \min_{x' \in P' \cap B_{R'}(0)} f'(x') + \varepsilon$ in time
    \[
        \poly(\enc{f'},\, \enc{P'},\, \log(R'),\, \log(1/\varepsilon)) \leq \poly(\enc{f},\, \enc{P},\, \log(R),\, \log(1/\varepsilon)).
    \]
    Since $P \cap B_R(0) \subseteq L(P' \cap B_{R'}(0))$, we have $\min_{x' \in P' \cap B_{R'}(0)} f'(x') \leq \min_{x \in P \cap B_R(0)} f(x)$.
    Given a point~$\Tilde{x}'$ as above, we thus get a point $\Tilde{x} = L(\Tilde{x}') \in P$ with
    \[
        f(\Tilde{x}) = f'(\Tilde{x}') \leq \min_{x' \in P' \cap B_{R'}(0)} f'(x') + \varepsilon \leq \min_{x \in P \cap B_R(0)} f(x) + \varepsilon.\qedhere
    \]
\end{proof}

\bibliographystyle{amsalpha}
\bibliography{convpol}

\providecommand{\bysame}{\leavevmode\hbox to3em{\hrulefill}\thinspace}
\providecommand{\MR}{\relax\ifhmode\unskip\space\fi MR }
% \MRhref is called by the amsart/book/proc definition of \MR.
\providecommand{\MRhref}[2]{%
  \href{http://www.ams.org/mathscinet-getitem?mr=#1}{#2}
}
\providecommand{\href}[2]{#2}
\begin{thebibliography}{BMPV25}

\bibitem[ACZ24]{Ahmadi2024}
Amir~Ali Ahmadi, Abraar Chaudhry, and Jeffrey Zhang, \emph{Higher-order {N}ewton methods with polynomial work per iteration}, Advances in Mathematics \textbf{452} (2024), 109808.

\bibitem[AOPT11]{Ahmadi2011}
Amir~Ali Ahmadi, Alex Olshevsky, Pablo~A. Parrilo, and John~N. Tsitsiklis, \emph{{NP}-hardness of deciding convexity of quartic polynomials and related problems}, Mathematical Programming \textbf{137} (2011), no.~1–2, 453--476.

\bibitem[AP13]{Ahmadi2013}
Amir~Ali Ahmadi and Pablo~A. Parrilo, \emph{A complete characterization of the gap between convexity and sos-convexity}, SIAM Journal on Optimization \textbf{23} (2013), no.~2, 811--833.

\bibitem[AZ22a]{Ahmadi2022}
Amir~Ali Ahmadi and Jeffrey Zhang, \emph{Complexity aspects of local minima and related notions}, Advances in Mathematics \textbf{397} (2022), 108119.

\bibitem[AZ22b]{Ahmadi2022a}
\bysame, \emph{On the complexity of finding a local minimizer of a quadratic function over a polytope}, Mathematical Programming \textbf{195} (2022), no.~1–2, 783--792.

\bibitem[BMPV25]{Bortolotti2025}
Alex Bortolotti, Monaldo Mastrolilli, Marilena Palomba, and Luis~Felipe Vargas, \emph{On the bit size of sum-of-squares proofs for symmetric formulations}, arXiv:2509.06928, 2025.

\bibitem[BMV25]{Bortolotti2025a}
Alex Bortolotti, Monaldo Mastrolilli, and Luis~Felipe Vargas, \emph{On the degree automatability of sum-of-squares proofs}, arXiv:2504.17756, 2025.

\bibitem[BS14]{BarakSteurer2014}
Boaz Barak and David Steurer, \emph{Sum-of-squares proofs and the quest toward optimal algorithms}, Proceedings of the {I}nternational {C}ongress of {M}athematicians---{S}eoul 2014. {V}ol. {IV}, Kyung Moon Sa, Seoul, 2014, pp.~509--533.

\bibitem[BSS89]{BSS1989}
Lenore Blum, Mike Shub, and Steve Smale, \emph{On a theory of computation and complexity over the real numbers: {NP}-completeness, recursive functions and universal machines}, Bulletin of the American Mathematical Society (N.S.) \textbf{21} (1989), no.~1, 1--46.

\bibitem[CC60]{ClenshawCurtis1960}
Charles~W. Clenshaw and Andrew~R. Curtis, \emph{A method for numerical integration on an automatic computer}, Numer. Math. \textbf{2} (1960), 197--205.

\bibitem[Dan63]{Dantzig1963}
George~B. Dantzig, \emph{Linear programming and extensions}, Princeton University Press, Princeton, NJ, 1963.

\bibitem[dKV16]{Klerk2016}
Etienne de~Klerk and Frank Vallentin, \emph{On the {T}uring model complexity of interior point methods for semidefinite programming}, SIAM Journal on Optimization \textbf{26} (2016), no.~3, 1944--1961.

\bibitem[Edm67]{Edmonds1967}
Jack Edmonds, \emph{Systems of distinct representatives and linear algebra}, Journal of Research of the National Bureau of Standards -- Section B Mathematics and Mathematical Physics \textbf{71B} (1967), no.~4, 241.

\bibitem[EKM96]{ErlingssonKaltofenMusser:GramSchmidt}
\'{U}lfar Erlingsson, Erich Kaltofen, and David Musser, \emph{Generic gram-schmidt orthogonalization by exact division}, Proceedings of the 1996 International Symposium on Symbolic and Algebraic Computation (New York, NY, USA), ISSAC '96, Association for Computing Machinery, 1996, p.~275–282.

\bibitem[GLS81]{Schrijver:Ellipsoid}
Martin Gr\"otschel, L\'aszl\'o Lov\'asz, and Alexander Schrijver, \emph{The ellipsoid method and its consequences in combinatorial optimization}, Combinatorica \textbf{1} (1981), 169--197.

\bibitem[GLS93]{GrötschelLovaszSchrijver:ellipsoid}
\bysame, \emph{Geometric algorithms and combinatorial optimization}, second ed., Algorithms and Combinatorics, vol.~2, Springer-Verlag, Berlin, 1993.

\bibitem[GN76]{Gordan1876}
Paul Gordan and Max Noether, \emph{Ueber die algebraischen formen, deren {H}esse’sche determinante identisch verschwindet}, Mathematische Annalen \textbf{10} (1876), no.~4, 547--568.

\bibitem[GPS23]{Gribling2023}
Sander Gribling, Sven Polak, and Lucas Slot, \emph{A note on the computational complexity of the moment-{SOS} hierarchy for polynomial optimization}, Proceedings of the 2023 {International} {Symposium} on {Symbolic} and {Algebraic} {Computation} (New York, NY, USA), {ISSAC} '23, Association for Computing Machinery, 2023, pp.~280--288.

\bibitem[GR09]{Garbagnati2009}
Alice Garbagnati and Flavia Repetto, \emph{A geometrical approach to {G}ordan-{N}oether’s and {F}ranchetta’s contributions to a question posed by {H}esse}, Collectanea mathematica \textbf{60} (2009), no.~1, 27--41.

\bibitem[GW95]{GW1995}
Michel~X. Goemans and David~P. Williamson, \emph{Improved approximation algorithms for maximum cut and satisfiability problems using semidefinite programming}, Journal of the Association for Computing Machinery \textbf{42} (1995), no.~6, 1115--1145.

\bibitem[Hes51]{OHesse}
Otto Hesse, \emph{\"{U}ber die {B}edingung, unter welche eine homogene ganze {F}unction von nunabh\"angigen {V}ariabeln durch line\"are {S}ubstitutionem von $n$ andern unabh\"angigen {V}ariabeln auf eine homogene {F}unction sich zur\"uck-f\"uhren l\"asst, die eine {V}ariable weniger enth\"alt}, Journal für die reine und angewandte {M}athematik \textbf{42} (1851), 117--124.

\bibitem[Imh63]{Imhof63}
J.~P. Imhof, \emph{On the method for numerical integration of {C}lenshaw and {C}urtis}, Numer. Math. \textbf{5} (1963), 138--141.

\bibitem[Kha79]{Khachiyan:LP}
L.~G. Khachiyan, \emph{A polynomial algorithm in linear programming}, Soviet Mathematics -- Doklady \textbf{1} (1979), 191--194.

\bibitem[KP80]{Karp_1980}
Richard~M. Karp and Christos~H. Papadimitriou, \emph{On linear characterizations of combinatorial optimization problems}, 21st Annual Symposium on Foundations of Computer Science (sfcs 1980), IEEE, October 1980, p.~1–9.

\bibitem[KTK80]{Kozlov1980}
M.K. Kozlov, S.P. Tarasov, and L.G. Khachiyan, \emph{The polynomial solvability of convex quadratic programming}, USSR Computational Mathematics and Mathematical Physics \textbf{20} (1980), no.~5, 223--228.

\bibitem[Las01]{Lasserre2001}
Jean~B. Lasserre, \emph{Global optimization with polynomials and the problem of moments}, SIAM Journal on Optimization \textbf{11} (2001), no.~3, 796--817.

\bibitem[Las08]{LasserreConvex}
\bysame, \emph{Representation of nonnegative convex polynomials}, Archiv der Mathematik \textbf{91} (2008), no.~2, 126--130.

\bibitem[Lau08]{Laurent2008}
Monique Laurent, \emph{Sums of squares, moment matrices and optimization over polynomials}, pp.~157--270, Springer New York, 2008.

\bibitem[LLL82]{Lenstra1982}
Arjen.~K. Lenstra, Hendrik~W. Lenstra, and László Lovász, \emph{Factoring polynomials with rational coefficients}, Mathematische Annalen \textbf{261} (1982), no.~4, 515--534.

\bibitem[Min64]{Minty1964}
George Minty, \emph{On the monotonicity of the gradient of a convex function}, Pacific Journal of Mathematics \textbf{14} (1964), no.~1, 243--247.

\bibitem[Nes19]{Nesterov2019}
Yurii Nesterov, \emph{Implementable tensor methods in unconstrained convex optimization}, Mathematical Programming \textbf{186} (2019), no.~1–2, 157--183.

\bibitem[O'D17]{ODonnell2017}
Ryan O'Donnell, \emph{{SOS Is Not Obviously Automatizable, Even Approximately}}, 8th Innovations in Theoretical Computer Science Conference (ITCS 2017) (Dagstuhl, Germany) (Christos~H. Papadimitriou, ed.), Leibniz International Proceedings in Informatics (LIPIcs), vol.~67, Schloss Dagstuhl -- Leibniz-Zentrum f{\"u}r Informatik, 2017, pp.~59:1--59:10.

\bibitem[Par00]{Parrilo2000}
Pablo~A. Parrilo, \emph{Structured semidefinite programs and semialgebraic geometry methods in robustness and optimization}, Ph.D. thesis, California Institute of Technology, 2000.

\bibitem[PK97]{Porkolab1997}
Lorant Porkolab and Leonid Khachiyan, \emph{On the complexity of semidefinite programs}, Journal of Global Optimization \textbf{10} (1997), no.~4, 351--365.

\bibitem[PR81]{padberg:inria-00076483}
M.W. Padberg and M.R. Rao, \emph{{The Russian method for linear inequalities III:bounded integer programming}}, Tech. Report RR-0078, {INRIA}, May 1981.

\bibitem[PT24]{Pataki2024}
Gábor Pataki and Aleksandr Touzov, \emph{How do exponential size solutions arise in semidefinite programming?}, SIAM Journal on Optimization \textbf{34} (2024), no.~1, 977--1005.

\bibitem[Ram97]{Ramana1997}
Motakuri~V. Ramana, \emph{An exact duality theory for semidefinite programming and its complexity implications}, Mathematical Programming \textbf{77} (1997), no.~1, 129--162.

\bibitem[Roc70]{Rockafellar1970}
R.~Tyrrell Rockafellar, \emph{Convex analysis}, Princeton Mathematical Series, vol. No. 28, Princeton University Press, Princeton, NJ, 1970.

\bibitem[Rot15]{RotmanAlgebra}
Joseph~J. Rotman, \emph{Advanced modern algebra. {P}art 1}, third ed., Graduate Studies in Mathematics, vol. 165, American Mathematical Society, Providence, RI, 2015.

\bibitem[RW17]{Raghavendra2017}
Prasad Raghavendra and Benjamin Weitz, \emph{{On the Bit Complexity of Sum-of-Squares Proofs}}, 44th International Colloquium on Automata, Languages, and Programming (ICALP 2017) (Dagstuhl, Germany) (Ioannis Chatzigiannakis, Piotr Indyk, Fabian Kuhn, and Anca Muscholl, eds.), Leibniz International Proceedings in Informatics (LIPIcs), vol.~80, Schloss Dagstuhl -- Leibniz-Zentrum f{\"u}r Informatik, 2017, pp.~80:1--80:13.

\bibitem[Sah74]{Sahni1974}
Sartaj Sahni, \emph{Computationally related problems}, SIAM Journal on Computing \textbf{3} (1974), no.~4, 262--279.

\bibitem[Sch80]{Schwartz1980}
Jacob~T. Schwartz, \emph{Fast probabilistic algorithms for verification of polynomial identities}, Journal of the ACM \textbf{27} (1980), no.~4, 701--717.

\bibitem[Sch86]{Schrijver1994}
Alexander Schrijver, \emph{Theory of linear and integer programming}, Wiley-Interscience Series in Discrete Mathematics, John Wiley \& Sons, Ltd., Chichester, 1986, A Wiley-Interscience Publication.

\bibitem[Sma00]{Smale91}
Steve Smale, \emph{Mathematical problems for the next century}, La Gaceta de la Real Sociedad Matem\'atica Espa\~nola \textbf{3} (2000), no.~3, 413--434.

\bibitem[Sto00]{Storjohann2000}
Arne Storjohann, \emph{Algorithms for matrix canonical forms}, Ph.D. thesis, ETH Zurich, 2000.

\bibitem[Vav90]{Vavasis1990}
Stephen~A. Vavasis, \emph{Quadratic programming is in {NP}}, Information Processing Letters \textbf{36} (1990), no.~2, 73--77.

\bibitem[Vis21]{Vishnoi:convexoptimization}
Nisheeth~K. Vishnoi, \emph{Algorithms for convex optimization}, 1st ed. ed., Cambridge University Press, 2021.

\bibitem[YN76]{YudinNemirovski1976}
D.~Yudin and A.~S. Nemirovski, \emph{Informational complexity and eﬀective methods of solution of convex extremal problems}, Econom. Math. Methods \textbf{12} (1976), 357–369.

\end{thebibliography}

\end{document}